\newcommand{\sem}[1]{\llbracket{#1}\rrbracket}
\newcommand{\code}[1]{\texttt{#1}}
\newcommand{\jrydel}[1]{}
\newcommand{\cat}{+ \!\!\! + \,}
 \newcommand{\mainname}{\textit{PolyGen}\xspace}
 \newcommand{\variable}[1]{\textit{#1}}
\newcommand{\rycomment}[1]{\rycomment{Ruyi}{#1}}
\begin{document}
	
	%% Title information
	%\title[An Abstracted Generalization Model for PBE Solvers]{An Abstracted Generalization Model for PBE Solvers}      
	\title[Occam Learning Meets Synthesis Through Unification]{Occam Learning Meets Synthesis Through Unification}      
	
	%% Author with single affiliation.
	\author{Ruyi Ji}
	\affiliation{
	  \streetaddress{Key Lab of High Confidence Software Technologies, Ministry of Education Department of Computer Science and Technology, EECS}
	  \institution{Peking University}
	  \city{Beijing}
	  \country{China}            
	}
	\email{jiruyi910387714@pku.edu.cn}          %% \email is recommended
	
	\author{Jingtao Xia}
	\affiliation{
	  \streetaddress{Key Lab of High Confidence Software Technologies, Ministry of Education Department of Computer Science and Technology, EECS}
	  \institution{Peking University}
	  \city{Beijing}
	  \country{China}            
	}
	\email{xiajt@pku.edu.cn}          %% \email is recommended
	\author{Yingfei Xiong}
	\authornote{Corresponding author}
	\affiliation{
	  \streetaddress{Key Lab of High Confidence Software Technologies, Ministry of Education Department of Computer Science and Technology, EECS}
	  \institution{Peking University}
	  \city{Beijing}
	  \country{China}            
	}
	\email{xiongyf@pku.edu.cn}          %% \email is recommended
	\author{Zhenjiang Hu}
	\affiliation{
	  \streetaddress{Key Lab of High Confidence Software Technologies, Ministry of Education Department of Computer Science and Technology, EECS}
	  \institution{Peking University}
	  \city{Beijing}
	  \country{China}            
	}
	\email{huzj@pku.edu.cn}          %% \email is recommended

	\begin{abstract}
The generalizability of PBE solvers is the key to the empirical synthesis performance. Despite the importance of generalizability, related studies on PBE solvers are still limited. In theory, few existing solvers provide theoretical guarantees on generalizability, and in practice, there is a lack of PBE solvers with satisfactory generalizability on important domains such as conditional linear integer arithmetic (CLIA). In this paper, we adopt a concept from the computational learning theory, Occam learning, and perform a comprehensive study on the framework of synthesis through unification (STUN), a state-of-the-art framework for synthesizing programs with nested \texttt{if-then-else} operators. We prove that \textit{Eusolver}, a state-of-the-art STUN solver, does not satisfy the condition of Occam learning, and then we design a novel STUN solver, \textit{PolyGen}, of which the generalizability is theoretically guaranteed by Occam learning. We evaluate \textit{PolyGen} on the domains of CLIA and demonstrate that \textit{PolyGen} significantly outperforms two state-of-the-art PBE solvers on CLIA, \textit{Eusolver} and \textit{Euphony}, on both generalizability and efficiency.
	\end{abstract}
	
	\ccsdesc[500]{Software and its engineering~General programming languages}
	\ccsdesc[300]{Social and professional topics~History of programming languages}
	%% End of generated code

	\maketitle
	%!TEX root = ../CLIA.tex
\section{Introduction} \label{section:introduction}

%\smalltitle{OGIS, PBE and Generalizability}
In the past decades, oracle-guided inductive program synthesis (OGIS)~\cite{DBLP:journals/acta/JhaS17} receives much attention. In each iteration of OGIS, an oracle provides input-output examples to an inductive program synthesizer, or programming-by-example (PBE) synthesizer~\cite{DBLP:conf/ijcai/ShawWG75}, and the PBE synthesizer generates a program based on the examples. There are two typical types of OGIS problems. In the first type, the oracle can verify whether the synthesized program is correct, and provides a counter-example if the program is incorrect. Many applications under the counter-example guided inductive synthesis (CEGIS) framework~\cite{DBLP:conf/asplos/Solar-LezamaTBSS06} fall into this type. In the second type, the oracle cannot verify the correctness of the synthesized program but can provide a set of input-output examples. This includes the applications where the oracle is a black-box program, such as binary programs~\cite{zhai2016automatic}, and applications where the program is too complex to verify its correctness, e.g., the task involves system calls or complex loops, such as program repair, second-order execution, and deobfuscation~\cite{DirectFix,mechtaev2018symbolic,david2020qsynth,DBLP:conf/uss/BlazytkoCAH17,jha2010oracle}. 

In both types of problems, the generalizability of the PBE solver is the key to synthesis performance. In the first type, generalizability significantly affects the efficiency: the fewer examples the solver needs to synthesize a correct program, the fewer CEGIS iterations the synthesis requires, and thus the faster the synthesis would be. In the second type, the generalizability of the PBE solver decides the correctness of the whole OGIS system.

Despite the importance of generalizability, the studies on the generalizability of the existing PBE solvers are still limited. On the theoretical side, as far as we are aware, no existing PBE solver provides theoretical guarantees on generalizability. On the practical side, the generalizability of the existing PBE solvers is not satisfactory. As our evaluation will demonstrate later, on a synthesis task for solving the maximum segment sum problem, \textit{Eusolver}~\cite{DBLP:conf/tacas/AlurRU17}, a state-of-the-art PBE solver, uses $393$ examples to find the correct program, while our solver uses only $10$. 

%However, despite the importance of generalizability, the current practice of designing a generalizable synthesizer still follows a trial-and-error process: the synthesizer is designed with heuristics that may need to good generalizability, and the actual generalizability is only evaluated after a solver has been implemented. As far as we are aware, so far no PBE solver is designed with theoretical guidance \yxcomment{I found the term ``theoretical guidance'' very vauge, but have not found a solution} on generalizability. 

%We believe that designing a solver with theoretical guidance on generalizability could lead to a more generalizable solver, and in this paper we report the first PBE solver that is designed under such a guidance. 
%In this paper we report the first PBE solver that is designed with theoretical guidance on generalizability.
In this paper, we propose a novel PBE solver, \mainname, that provides a theoretical guarantee on generalizability by construction. We adopt a concept from the computational learning theory, Occam learning~\cite{DBLP:journals/ipl/BlumerEHW87}, and prove that \mainname is an Occam solver, i.e., a PBE solver that satisfies the condition of Occam learning. 
% Occam learning defines a class of generalizable learners, and we call a PBE solver in this class an \emph{Occam solver}. 
A PBE solver is an Occam solver if, for any possible target program consistent with the given examples, the size of the synthesized program is guaranteed to be polynomial to the target program and sub-linear to the number of provided examples with a high probability. In other words, an Occam solver would prefer smaller programs to larger programs and thus follows the principle of Occam's Razor. 
In theory, \citet{DBLP:journals/ipl/BlumerEHW87} have proved that, given any expected accuracy, the number of examples needed by an Occam solver to guarantee the accuracy is bounded by a polynomial on the size of the target program. In practice, Occam learning has exhibited good generalizability in different domains~\cite{DBLP:conf/stoc/KearnsL88,DBLP:journals/jcss/KearnsS94,DBLP:journals/ml/AngluinL87,DBLP:conf/colt/Natarajan93,DBLP:journals/iandc/AldousV95}.

\mainname follows the \emph{synthesis through unification} (STUN)~\cite{DBLP:conf/cav/AlurCR15} framework. STUN is a framework for synthesizing nested \code{if-then-else} programs, and the solvers based on STUN such as \textit{Eusolver}~\cite{DBLP:conf/tacas/AlurRU17} and \textit{Euphony}~\cite{DBLP:conf/pldi/LeeHAN18} achieve the state-of-the-art results on many important benchmarks, e.g., the CLIA track in the SyGuS competition. A typical STUN solver consists of a term solver and a unifier. First the term solver synthesizes a set of \texttt{if}-terms, each being correct for a different subset of the input space, and then the unifier synthesizes \texttt{if}-conditions that combine the terms with \code{if-then-else} into a correct program for the whole input space. 

%Given the widely-recognized and theoretically guaranteed generalizability of Occam solvers, we use the condition of Occam solver to guide our design. 
We first analyze a state-of-the-art STUN solver, \textit{Eusolver}~\cite{DBLP:conf/tacas/AlurRU17}, and prove that \textit{Eusolver} is not an Occam solver. Then we proceed to design \mainname. A key challenge to design an Occam solver is to scale up while satisfying the condition of Occam learning. For example, a trivial approach to ensuring Occam learning is to enumerate programs from small to large, and returns the first program consistent with the examples. However, this approach only scales to small programs. To ensure scalability, we divide the synthesis task into subtasks each synthesizing a subpart of the program, and propagate the condition of Occam learning into a sufficient set of conditions, where each condition is defined on a subtask. Roughly, these conditions require that each subtask synthesizes either a small program or a set of programs whose total size is small. 
Then, we find efficient synthesis algorithms that meet the respective conditions for each subtask. %More concretely, we first follow the STUN framework to divide the PBE solver into a term solver and a unifier. For the term solver, we divide the synthesis task into subtasks for synthesizing individual terms. For the unifier, we divide the unification task into subtasks of synthesizing \texttt{if}-conditions and clauses. 

We instantiate \mainname on the domains of conditional linear integer arithmetic (CLIA), and evaluate \mainname against \textit{Esolver}~\cite{DBLP:conf/fmcad/AlurBJMRSSSTU13}, the best known PBE solver on CLIA that always synthesizes the smallest valid program, \textit{Eusolver}~\cite{DBLP:conf/tacas/AlurRU17} and \textit{Euphony}~\cite{DBLP:conf/pldi/LeeHAN18}, two state-of-the-art PBE solvers on CLIA. Our evaluation is conducted on $100$ benchmarks collected from the dataset of SyGuS-Comp~\cite{DBLP:journals/corr/abs-1904-07146} and an application of synthesizing divide-and-conquer algorithms~\cite{DBLP:conf/pldi/FarzanN17}. Besides, our evaluation considers two major oracle models in OGIS, corresponding to the applications where (1) the oracle can provide a counter-example for a given program, and (2) the oracle can only generate the correct output for a given input. Our evaluation results show that: 
\begin{itemize}
	\item Comparing with \textit{Esolver}, \mainname achieves almost the same generalizability while solving \textbf{9.55 times} more benchmarks than \textit{Esolver}.
	\item Comparing with \textit{Eusolver} and \textit{Euphony}, on efficiency, \mainname solves \textbf{43.08\%-79.63\%} more benchmarks with \textbf{$\times$7.02 - $\times$15.07 speed-ups}. Besides, on generalizability, \textit{Eusolver} and \textit{Euphony} requires \textbf{$\times$1.12 - $\times$2.42} examples comparing with \mainname on those jointly solved benchmarks. This ratio raises to \textbf{at least $\times$2.39 - $\times$3.33} when all benchmarks are considered.
\end{itemize}

To sum up, this paper makes the following contributions:
\begin{itemize}
	\item %{\it Theoretical contribution.}
	We adopt the concept of Occam learning to the domain of PBE, prove that \textit{Eusolver} is not an Occam solver, and provide a sufficient set of conditions for individual components in the STUN framework to form an Occam solver (Section \ref{section:stun}).
	%	and establish a theoretical study on generalizability of the STUN framework and \textit{Eusolver} (Section \ref{section:stun}).
	\item %{\it Algorithmic contribution.} 
	We design a novel Occam solver based on the STUN framework, \mainname, by designing efficient algorithms for the two compoents that meet the above conditions. (Sections \ref{section:term-solver} and \ref{section:unifier}).
	\item %{\it Empirical contribution.} 
	We instantiate \mainname to the domain of CLIA (Section \ref{section:implementation}) and evaluate $\mainname$ against state-of-the-art PBE solvers on CLIA (Section \ref{section:evaluation}). The evaluation results show that \mainname significantly outperforms \textit{Eusolver} and \textit{Euphony} on both efficiency and generalizability. 
\end{itemize}

	%!TEX root = ../CLIA.tex
\section{Related Work} \label{section:related}
\noindent \textbf{Generalizability of PBE Solvers.} Generalizability is known to be important for PBE solvers, and there have been different approaches proposed to improve generalizability. 

Guided by the principle of Occam's Razor, a major line of previous work converts the PBE task into an optimization problem by requiring the solver to find the simplest program~\cite{DBLP:conf/icml/LiangJK10,DBLP:conf/popl/Gulwani11, DBLP:conf/popl/RaychevBVK16, DBLP:conf/icse/MechtaevYR15}. This method has been evaluated to be effective in different domains, such as user-interacting PBE systems~\cite{DBLP:conf/popl/Gulwani11} and program repair~\cite{DBLP:conf/icse/MechtaevYR15}. However, the usage of this method is limited by efficiency, as in theory, requiring the optimality of the solution would greatly increase the difficulty of a problem. For many important domains, there is still a lack of an efficient enough PBE solver which implementing this method. For example, on the domains of CLIA, our evaluation shows that \textit{Eusolver}, a state-of-the-art PBE solver, solves $6.33$ times more benchmarks than \textit{Esolver}~\cite{DBLP:conf/fmcad/AlurBJMRSSSTU13}, the known best PBE solver on CLIA that guarantees to return the simplest program. 

Comparing with these previous work, though our paper is also based on the principle of Occam's Razor, we relax the constraint on the PBE solver by adopting the concept of Occam Learning~\cite{DBLP:journals/ipl/BlumerEHW87} from computational learning theory. Occam learning allows the solver to return a program that is at most polynomially worse than the optimal and still has theoretical guarantees on generalizability. While designing an Occam solver, we have more space to improve the efficiency than designing a solver optimizing the size. In this way, we successfully implement a PBE solver on CLIA that performs well on both efficiency and generalizability. 

Another line of work uses learned models to guide the synthesis procedure, and thus focuses on only probable programs~\cite{DBLP:conf/iclr/BalogGBNT17, DBLP:conf/iclr/KalyanMPBJG18, DBLP:journals/pacmpl/JiS0H20, DBLP:conf/cav/SinghG15, DBLP:conf/sigsoft/ChenMF19, DBLP:conf/icml/MenonTGLK13, DBLP:conf/icml/DevlinUBSMK17, DBLP:conf/pldi/LeeHAN18}. However, the efficiency of these approaches depends on domain knowledge. For example, \citet{DBLP:conf/iclr/KalyanMPBJG18} use input-output examples to predict the structure of the target program on the string manipulation domain: The effectiveness of their model relies on the structural information provided by strings and thus is unavailable on those unstructured domains, such as CLIA. In our evaluation, we evaluate a state-of-the-art PBE solver based on learned models, namely \textit{Euphony}~\cite{DBLP:conf/pldi/LeeHAN18}, and the result shows that its effectiveness is limited on CLIA.

%There are also approaches considering the semantics. \citet{DBLP:conf/cav/DAntoniSS16} choose the program that is semantically closest to a reference program, and \citet{DBLP:conf/sigsoft/LeCLGV17} choose the best program by comprehensively considering 6 heuristics from the perspectives of syntax and semantics. These approaches face an even bigger challenge on efficiency, as solving semantics constraints is usually harder than solving syntax constraints. For simplicity, we consider only syntax constraints in this paper and leave semantics constraints to future works.

\noindent \textbf{Analysis on the generalizability.} Analyzing the generalizability of learning algorithms is an important problem in machine learning and has been studied for decades. The \textit{probably approximately correct (PAC)} learnability  \cite{DBLP:journals/cacm/Valiant84} is a widely used framework for analyzing generalizability. When discussing PAC learnability of a learning task, the goal is to find a learning algorithm that (1) runs in polynomial time, (2) requires only a polynomial number of examples to achieve any requirement on the accuracy. On the synthesis side, there has been a line of previous work on the PAC learnability of logic programs~\cite{DBLP:journals/jair/Cohen95,DBLP:journals/jair/Cohen95a,DBLP:conf/colt/DzeroskiMR92}.  Besides, some approaches use the framework of PAC learnability to analyze the number of examples required by some specific algorithm~\cite{DBLP:journals/ml/LauWDW03, DBLP:conf/cav/DrewsAD19}.

In this paper, we seek a theoretical model that can compare the generalizability of different PBE solvers. At this time, the requirement on the generalizability provided by PAC learnability is too loose: According to the general bound provided by \citet{DBLP:journals/ipl/BlumerEHW87}, when the program space is finite, this condition is satisfied by any valid PBE solver. Therefore, we adopt another concept, Occam Learning, from computational learning theory. Comparing with PAC learnability, Occam learning (1) has a higher requirement on generalizability, as shown by~\citet{DBLP:journals/ipl/BlumerEHW87}, and (2) can reflect some empirical results in program synthesis, such as a PBE solver that always returns the simplest program should have better generalizability than an arbitrary PBE solver. To our knowledge, we are the first to introduce Occam Learning into program synthesis.

\noindent \textbf{Synthesizing CLIA Programs.} As mentioned in the introduction, our approach is implemented in the CLIA domain. CLIA is an important domain for program synthesis, as CLIA programs widely exist in real-world projects and can express complex behaviors by using nested \texttt{if-then-else} operators. There have been many applications of CLIA synthesizers,  such as program repair~\cite{DBLP:conf/icse/MechtaevYR15, DBLP:conf/sigsoft/LeCLGV17}, automated parallelization~\cite{DBLP:conf/pldi/FarzanN17, DBLP:conf/pldi/MoritaMMHT07}. On CLIA, PBE solvers are usually built on the STUN framework~\cite{DBLP:conf/cav/AlurCR15}, which firstly synthesizes a set of \texttt{if}-terms by a term solver, and then unifies them into a program by a unifier. There are two state-of-the-art PBE solvers: 
\begin{itemize}
	\item \textit{Eusolver}~\cite{DBLP:conf/tacas/AlurRU17}, which is comprised of an enumerative term solver and a unifier based on a decision-tree learning algorithm.
	\item \textit{Euphony}~\cite{DBLP:conf/pldi/LeeHAN18}, which uses structural probability to guide the synthesis of \textit{Eusolver}.
\end{itemize}
\mainname also follows the STUN framework. We evaluate \mainname against these two solvers in Section \ref{section:evaluation}. The result shows that \mainname outperforms them on both efficiency and generalizability.

Outside PBE, there are other techniques proposed for synthesizing CLIA programs:
\begin{itemize}
	\item \textit{DryadSynth}~\cite{DBLP:conf/pldi/HuangQSW20} reconciles enumerative and deductive synthesis techniques. As \textit{DryadSynth} requires a logic specificaion, it is not suitable for PBE tasks.
	\item \textit{CVC4}~\cite{DBLP:conf/cav/ReynoldsBNBT19} synthesizes programs from unsatisfiability proofs given by theory solvers. Though \textit{CVC4} is runnable on PBE tasks, it seldom generalizes from examples. We test \textit{CVC4} on a simple task where the target is to synthesize a program that returns the maximal value among three inputs. After requiring $300$ random examples, the error rate of the program synthesized by \textit{CVC4} on a random input is still larger than $97\%$. In contrast, \mainname requires only $12.2$ examples on average to synthesize a completely correct program.
\end{itemize}

There are also approaches on synthesizing boolean conditions, which is an important part in CLIA~\cite{DBLP:journals/tse/ErnstCGN01, DBLP:journals/corr/PadhiM17}. However, none of them discuss the theoretical guarantees on the generalizability, and it is unknown whether these approaches are Occam solvers.
	%!TEX root = ../CLIA.tex
\section{Motivating Example and Approach Overview} \label{section:example}
In this section, we introduce the basic idea of our approach via a motivating example adopted from benchmark \texttt{mpg\_ite2.sl} in the SyGuS competition. The target program $p^*$ is shown as the following, where $x, y, z$ are three integer inputs.%, and $\texttt{ite}$ represents the \texttt{if-then-else} operator. 
\begin{align*}
	%p^*(x, y, z) \coloneqq& \texttt{(ite (>= (+ x y) 1)}\\
	%&\ \ \ \  \texttt{(ite (>= (+ x z) 1) (+ x 1) (+ y 1))} \\
	%&\ \ \ \  \texttt{(ite (>= (+ y z) 1) (+ x 1) (+ y 1)))}
	p^*(x, y, z) \coloneqq & \texttt{ if }(x + y \geq 1)\texttt{ then }\{ \\
	& \ \ \ \ \texttt{ if }(x + z \geq 1)\texttt{ then }\{x + 1\}\texttt{ else }\{y + 1\} \\ 
	& \texttt{ }\}\texttt{ else }\{ \\
	& \ \ \ \ \texttt{ if }(y + z \geq 1)\texttt{ then }\{z + 1\}\texttt{ else }\{y + 1\} \\ 
	& \texttt{ }\}
\end{align*}   
We assume that there are $9$ input-output examples provided to the PBE solver. Table \ref{table:examples} lists these examples, where tuple $(x_0, y_0, z_0)$ in column $I$ represents an input where $x,y,z$ are set to $x_0,y_0,z_0$ respectively, and the \texttt{if}-term in column \textit{Term} represents the executed branch on each example.

\begin{table*}[!htbp]
	
	\caption{The input-output examples and the terms in the target program.} \label{table:examples}
	\begin{spacing}{1.1}
		\small
		\begin{tabular}{|c|c|c|c|c|c|c|c|c|c|c|c|}
			\hline
			ID & $I$ & $O$ & Term & ID & $I$ & $O$ & Term & ID & $I$ & $O$ & Term\\
			\hline
			$e_1$ & $(0, 1, 2)$ & $1$ & \multirow{3}{*}{$x + 1$} &$e_4$ & $(0, 2, 0)$ & $3$ & \multirow{3}{*}{$y + 1$} &$e_7$ & $(0, 0, 1)$ & $2$ & \multirow{3}{*}{$z + 1$} \\
			\cline{1-3} \cline{5-7} \cline{9-11}
			$e_2$ & $(1, 0, 2)$ & $2$ & & $e_5$ & $(-1, 3, 0)$ & $4$ & & $e_8$ & $(-3, 3, -2)$ & $-1$ & \\
			\cline{1-3} \cline{5-7} \cline{9-11}
			$e_3$ & $(-1, 3, 2)$ & $0$ & & $e_6$ & $(-1, 1, -1)$ & $2$ & & $e_9$ & $(-1, 0, 4)$ & $5$ & \\
			\hline
		\end{tabular}
	\end{spacing}
\end{table*}

\subsection{\textit{Eusolver}}
In this paper, we focus on designing an Occam solver for PBE tasks. As mentioned before, an Occam solver should synthesize programs whose size is polynomial to the size of the target program and sub-linear to the number of provided examples with high probability. Since the target program is unknown, an Occam solver should satisfy this requirement when the target program is the smallest valid program (programs consistent with the given input-output examples), and thus prefer smaller programs to larger programs.

We first show that \textit{Eusolver}~\cite{DBLP:conf/tacas/AlurRU17} may return unnecessarily large programs and thus is unlikely to be an Occam solver.  We shall formally prove that {\it Eusolver} is not an Occam solver in Section \ref{section:gen-eusolver}. \textit{Eusolver} follows the STUN framework and provides a term solver and an unifier. The term solver is responsible for synthesizing a term set that jointly covers all the given examples. In our example, one valid term set is $\{x + 1, y +1, z + 1\}$. A unifier is responsible for synthesizing a set of conditions to unify the term set into a program with nested \texttt{if-then-else} operators. In our example, the conditions are $\{x+y \geq 1, x+z \geq 1, y+z \geq 1\}$.

Similar to Occam learning, \textit{Eusolver} also tries to return small programs. To achieve this, \textit{Eusolver} enumerates the terms and the conditions from small to large, and then tries to combine the enumerated terms and conditions into a complete program. Though this approach controls the sizes of \textit{individual} terms and conditions, it fails to control the \textit{number} of the terms and the conditions, and thus may lead to unnecessarily large programs. 

The term solver of \textit{Eusolver} enumerates the terms from small to large, and includes a term in the term set when the set of examples covered by this term (i.e., the term is correct on these examples) is different from all smaller terms. The term set is complete when all examples are covered. As a result, this strategy may unnecessarily include many small terms, each covering only a few examples. In our motivating example, if constants such as $-1, \dots, 5$ are available, $\mathcal T_E$ will return set $\{-1, \dots, 5\}$ instead of $\{x+1, y+1, z+1\}$. Though such terms are small, the number of the terms grows with the number of examples.

The unifier of \textit{Eusolver} builds a decision tree using the ID3 algorithms. Here the terms are considered as labels, the enumerated \texttt{if-}conditions are considered as conditions, and a term is a valid label for an example if it covers the example. However, ID3 is designed for fuzzy classification problems, uses information gain to select conditions, and may select conditions that negatively contribute to program synthesis. For example, $x\ge 0$ will have good information gain for this example. In the original sets the three labels $x+1$, $y+1$ and $z+1$ are evenly distributed. Predicate $x\ge 0$ divides the examples into two sets where the distribution become uneven: in one set 50\% of the examples are labelled with $x+1$, and in the other set only 20\% of the examples are labelled $x+1$. However, in both sets the three labels still exist, and we still have to find conditions to distinguish them. As a result, selecting $x>0$ roughly doubles the size of the synthesized program. %the number of terms and conditions. 

% In summary, neither term solver nor unifier of {\it Eusolver} guarantees the synthesis result to be small. We shall formally prove that {\it Eusolver} is not an Occam solver in Section \ref{section:gen-eusolver}.

\subsection{\mainname}\begin{figure}[t]
	\centering  
	\vspace{-0.35cm} 
	\subfigtopskip=2pt 
	\subfigbottomskip=2pt\subfigure[The structure of \mainname.]{
		\raisebox{0.0\height}{
			\label{fig:frame}
			\includegraphics[width=0.4\linewidth]{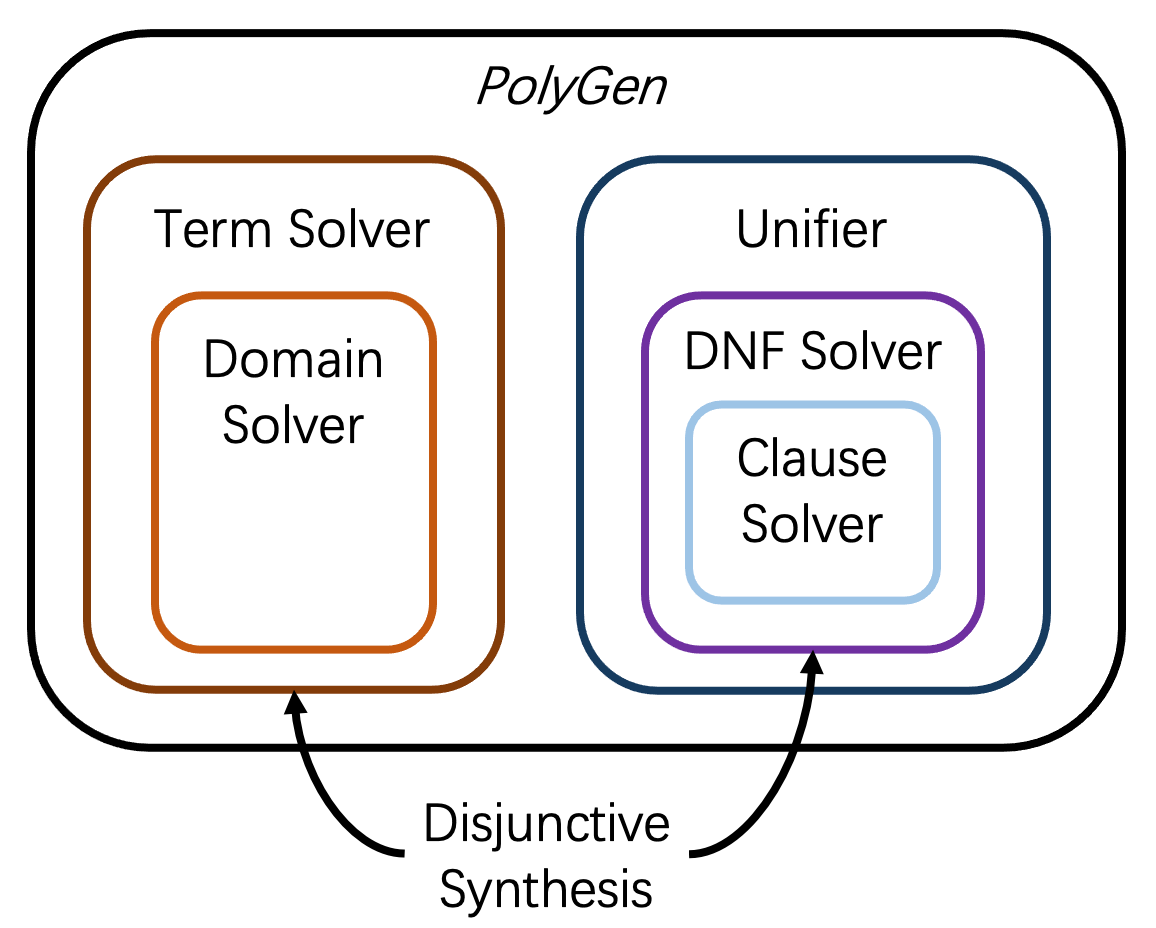}}
	
		}
	\subfigure[The program synthesized by $\mainname$ from $\{e_1, \dots, e_9\}$.]{
		\raisebox{0.7\height}{
		\label{fig:prog}
		\includegraphics[width=0.55\linewidth]{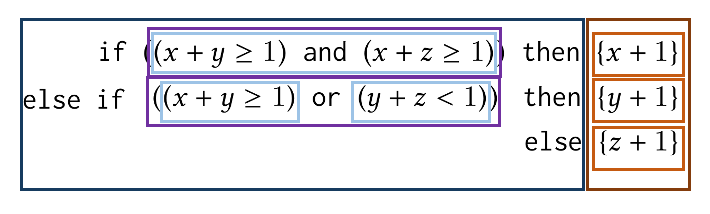}}}
	
	\caption{The left figure shows the structure of \mainname, where each sub-solver in \mainname is attached with a different color. The right figure shows the program synthesized by \mainname from examples $\{e_1, \dots, e_9\}$, where colored rectangles show the correspondence between partial programs and sub-solvers.}
\end{figure}

To synthesize small programs, \mainname controls not only the size of individual terms and conditions but also the number of conditions and terms. The structure of \mainname is shown as Figure~\ref{fig:frame}. As we can see in the figure, \mainname is built of a set of sub-solvers, each responsible for synthesizing part of the program. Figure~\ref{fig:prog} lists a program \mainname synthesizes, and the colored rectangles show the parts of the program synthesized by the sub-solver with the same color. In the following we shall illustrate how \mainname works.

\subsubsection{Term Solver}
To control the number of terms, the term solver of \mainname iterates a threshold on the number of terms, and tries to synthesize a term set whose size is equal to or smaller than this threshold. The threshold starts with a small number, and increases by a constant $c$ in each iteration. The process terminates if any iteration successfully synthesizes a term set. In each iteration, \mainname uses a randomized procedure to synthesize a term set. If a term set exists under a threshold, the probability that the term solver fails to synthesize a term set is bound by a constant $\epsilon$. %Then we analyze how this procedure control the number of terms. 
Let us assume that a term set exists within the first iterated threshold. After $n$ iterations, the probability of failing to synthesize reduces to $\epsilon^n$, and the possible number of terms is only increased by $nc$. In other words, the number of the synthesized terms is guaranteed to be small with high probability. 

Now we explain how we implement the randomized procedure to obtain a term set with a bounded failure probability. 
We assume there is a domain solver that synthesizes a term based on a set of examples, and the domain solver is also an Occam solver. %Let us assume there exists a term set $T$ under the current threshold and the term solver aims to synthesize it. 
For illustration, let us assume currently the threshold for the number of terms is 3, and in our example there exists at least one term set $T=\{x + 1, y +1, z + 1\}$ under this threshold. 
The term solver first samples many subsets of the examples and invoke the domain solver to synthesize a term for each subset. If any subset is covered by a term in $T$, the domain solver will have a chance to synthesize the term. For example, if $e_1$ and $e_2$ are sampled, the domain solver has a chance to synthesize $x+1$ because of the generalizability of the Occam solver. 
As a result, if we sample enough subsets, we can synthesize a term in $T$ with any small bounded failure rate. Then for any successfully synthesized subsets, we repeat this procedure to synthesize terms for the remaining examples. For example, when $x+1$ is synthesized, the procedure continues with examples $e_4\ldots e_9$. The procedure ends when no example remains. Since in each turn the probability of failing to find a term in $T$ is bounded, the total probability of failing to find the term set $T$ is bounded. 
Please note the sizes of synthesized terms are guaranteed to be small as the domain solver is an Occam solver.

In the domain of CLIA, the \texttt{if}-terms are all linear integer expressions, and the domain solver can be implemented by finding the simplest valid term via linear integer programming, as we shall show in Section \ref{section:implementation}. 

\subsubsection{Unifier}
To control the number of conditions, instead of synthesizing a decision tree, the unifier of \mainname synthesizes \textit{decision list}~\cite{DBLP:journals/ml/Rivest87}. In a decision list, each condition distinguishes one term from the rest of the terms. Figure~\ref{fig:prog} shows the program synthesized by \mainname on examples $\{e_1, \dots, e_9\}$, which is semantically equivalent to the target program $p^*$. The number of conditions is equal to the number of terms minus one, and thus is bounded.

However, the conditions in a decision list may become larger and thus cannot be synthesized using an enumerative algorithm. To efficiently synthesize small conditions to distinguish the terms, we notice that the conditions are in the disjunctive normal forms (DNF) in the initial definition of decision lists, where a DNF is the disjunction of clauses, a clause is the conjunction of literals, and a literal is a predicate in the grammar or its negation. Then we design three sub-solvers for different parts of the conditions. The clause solver synthesizes clauses from the literals, where the literals are enumerated by size in the same way as {\it Eusolver}. The DNF solver synthesizes a DNF formula based on the clause solver. Finally, the unifier synthesizes all the conditions based on the DNF solver.

Given a set of terms, the unifier create a synthesis subtask for each term $t$, where the synthesized program has to return {\tt true} on example inputs covered by $t$ (positive examples) and return {\tt false} on the remaining example inputs (negative examples). For example, when synthesizing the condition for $y+1$, $e_4, e_5, e_6$ are positive examples and $e_7, e_8, e_9$ are negative examples. Here $e_1, e_2, e_3$ are already covered by $x+1$. 
Then the unifier invokes the DNF solver to solve these tasks. The conditions synthesized are guaranteed to be small if the DNF solver guarantees to return small conditions. 

Before getting into the DNF solver, let us discuss the clause solver first. Given a set of literals, a set of input-output examples where the output is Boolean, the clause solver returns a clause, i.e., the conjunction of a subset of literals satisfying all examples. The clause solver reduces this problem into \textit{weighted set covering}, and uses a standard approximation algorithm~\cite{DBLP:journals/mor/Chvatal79} to solve it. As will be formally proved later, the clause solver is an Occam solver. 

Based on the clause solver, we build the DNF solver. The DNF solver synthesizes a set of clauses, where all clauses should return \texttt{false} for each negative example, and at least one clause should return \texttt{true} for each positive example. 
We notice this synthesis problem has the same form as the term solver: given a set of examples (in this case, a set of positive examples) and an Occam solver (in this case, the clause solver), we need to synthesize a set of programs (in this case, a set of clauses) to cover these examples. Therefore, the DNF solver uses the same algorithm as the term solver, and we uniformly refer this algorithm as \emph{disjunctive synthesis}. Since the disjunctive synthesis algorithm guarantees the returned program set is small in terms of both the sizes of individual programs and the number of total programs, the DNF solver guarantees to return a condition of small size.

	%\input{sections/Generalization4}
	%!TEX root = ../CLIA.tex
\section{Occam Learning} \label{section:background}
%In this section, we introduce some concepts that we build upon for the rest of this paper. 

\subsection{Preliminaries: Programming by Example}\label{section:pre-pbe}
The problem of \textit{programming-by-example (PBE)} is usually discussed above an underlying domain $\mathbb D = (\mathbb P, \mathbb I, \mathbb O, \sem{.}_{\mathbb D})$, where $\mathbb P, \mathbb I, \mathbb O$ represent a program space, an input space, and an output space respectively, $\sem{.}_{\mathbb D}$ is an oracle function that associates a program $p \in \mathbb P$ and an input $I \in \mathbb I$ with an output in $\mathbb O$, denoted as $\sem{p}_{\mathbb D}(I)$. The domain limits the ranges of possibly used programs and concerned inputs and provides the semantics of these programs.

For simplicity, we make two assumptions on the domain: (1) There is a universal oracle function $\sem{.}$ for any domains; (2) The output space $\mathbb O$ is always induced by the program space $\mathbb P$, the input space $\mathbb I$, and the oracle function $\sem{.}$, i.e., $\mathbb O = \{\sem{p}(I)\ |\ p \in \mathbb P, I \in \mathbb I\}$.
In the remainder of this paper, we abbreviate a domain as a pair $(\mathbb P, \mathbb I)$ of a program space and an input space. We shall use notation $\mathcal F$ to represent a family of domains, and thus discuss general properties of domains.

PBE~\cite{DBLP:conf/ijcai/ShawWG75} is a subproblem of program synthesis where the solver is required to learn a program from a set of given input-output examples. As this paper focuses  on the generalizability of PBE solvers, we assume that there is at least a program satisfying all given examples: The problem of determining whether there is a valid program is another domain of program synthesis, namely \textit{unrealizability}~\cite{DBLP:conf/pldi/HuCDR20,DBLP:journals/pacmpl/KimHDR21}, and is out of the scope of our paper.

\begin{definition}[Programming by Example] \label{def:pbe}Given a domain $\mathbb D$, a PBE task $T \in (\mathbb I \times \mathbb O)^*$ is a sequence of input-output examples. Define $\mathbb T(\mathbb D) \subseteq (\mathbb I \times  \mathbb O)^*$ as the set of PBE tasks where there is at least one program satisfying  all examples. PBE solver $\mathcal S$ is a function that takes a PBE task as the input, and returns a program satisfying all given examples, i.e., $\forall T \in \mathbb T(\mathbb D), \forall (I,O) \in T, \sem{\mathcal S(T)}(I) = O$.
\end{definition}

\subsection{Occam Learning and Occam Solver}
In computational learning theory, \textit{Occam Learning}~\cite{DBLP:journals/ipl/BlumerEHW87} is proposed to explain the effectiveness of the principle of Occam's Razor. In PBE, an Occam solver guarantees that the size of the synthesized program is at most polynomially larger than the size of the target program.
\begin{definition}[Occam Solver\footnote{The original definition of Occam solvers is only for deterministic algorithms. Here we extend its definition to random algorithms. We compare these two definitions in Appendix \ref{appendix:ocaam learning}.}]\label{definition:occam}For constants $\alpha \geq 1, 0 \leq \beta < 1$, PBE solver $\mathcal S$ is an $(\alpha, \beta)$-Occam solver on a family of domains $\mathcal F$ if there exist constants $c, \gamma > 0$ such that for any program domain $\mathbb D \in \mathcal F$, for any target program $p^* \in \mathbb P$, for any input set $\{I_1, \dots, I_n\} \subseteq \mathbb I$, for any error rate $\epsilon \in \left(0, \frac{1}{2}\right)$:
$$
\Pr\left[\text{size}\left(\mathcal S\big(T(p^*, I_1, \dots, I_n)\big)\right) > c\left(\text{size}(p^*)\right)^{\alpha}n^{\beta}\ln^{\gamma}\left(\frac{1}{\epsilon}\right)\right] \leq \epsilon
$$
where \text{size}(p) is the length of the binary representation of program $p$, $T(p^*, I_1, \dots, I_n)$ is defined as the PBE task corresponding to target program $p^*$ and inputs $I_1, \dots, I_n$.
\end{definition}

We assume the program domain is defined by a context-free grammar. At this time, a program can be represented by its left-most derivation and can be encoded as a sequence of grammar rules.

\begin{definition} \label{def:size} The size $\text{size}(p)$ of program $p$ is defined as $\lceil\log_2 N\rceil \times |p|$, where $|p|$ is the number of grammar rules used to derive $p$, and $N$ is the number of different grammar rules.
\end{definition}

\begin{example} When only input variable $x$, operator $+$ and constants $1, 2$ are available in the grammar, $\text{size}(x+1)$ is defined as $\lceil\log_2 4\rceil \times 3 = 6$. When there are $a$ input variables, $b$ constants and $c$ different operators available, $\text{size}(x + 1)$ is defined as $3 \lceil\log_2 (a + b+ c)\rceil$.
\end{example}

The size provides a logarithmic upper bound on the number of programs no larger than $p$.

\begin{lemma} \label{lemma:size}For any domain $\mathbb D$, $\forall p \in \mathbb P$, 
	$ \left|\left\{p' \in \mathbb P\  |\ \text{size}(p') \leq \text{size}(p)\right\} \right| \leq 2^{\text{size}(p)}
	$.
\end{lemma}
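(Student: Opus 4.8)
The goal is to bound the number of programs whose size is at most $\text{size}(p)$ by $2^{\text{size}(p)}$. The plan is to work directly from Definition~\ref{def:size}, which sets $\text{size}(p') = \lceil\log_2 N\rceil \times |p'|$, where $|p'|$ is the number of grammar rules in the left-most derivation of $p'$ and $N$ is the total number of distinct grammar rules. The central observation is that every program is uniquely encoded as a finite sequence of grammar rules (its left-most derivation), so counting programs reduces to counting such rule-sequences.

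First I would fix $s \coloneqq \text{size}(p)$ and let $k \coloneqq \lceil\log_2 N\rceil$, so that the constraint $\text{size}(p') \leq s$ is exactly $|p'| \leq s/k$, i.e. the derivation of $p'$ uses at most $\lfloor s/k\rfloor$ grammar rules. Next I would bound the number of derivations of each fixed length $\ell$: since each of the $\ell$ positions is one of $N$ rules, there are at most $N^{\ell}$ such sequences, and because the map from derivations to programs is injective, the number of programs with $|p'| = \ell$ is at most $N^{\ell}$. Summing over all admissible lengths gives
\begin{equation*}
\left|\left\{p' \in \mathbb P \mid \text{size}(p') \leq s\right\}\right| \;\leq\; \sum_{\ell = 0}^{\lfloor s/k \rfloor} N^{\ell}.
\end{equation*}

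The remaining work is a routine estimate of this sum. Using $N \leq 2^{k}$ (which is precisely why the size definition uses $\lceil\log_2 N\rceil$), each term satisfies $N^{\ell} \leq 2^{k\ell}$, and the geometric sum over $\ell \leq s/k$ is dominated by a constant multiple of its largest term $2^{k\lfloor s/k\rfloor} \leq 2^{s}$. One then checks that the geometric-series constant factor does not spoil the clean bound $2^{s}$; the cleanest route is to bound $\sum_{\ell=0}^{m} N^{\ell} \le N^{m+1}/(N-1) \le 2 \cdot 2^{km}$ when $N \ge 2$, and to handle the degenerate case $N = 1$ separately (where there is a single program and the bound holds trivially). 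A slightly more careful accounting, summing $\sum_{\ell} 2^{(k-1)\ell}\cdot 2^{\ell}$ or reindexing so the total is counted against all binary strings of length at most $s$, yields exactly $2^{s}$ without the stray factor of $2$.

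The main obstacle I anticipate is purely bookkeeping rather than conceptual: one must be careful that the rounding in $k = \lceil\log_2 N\rceil$ and the floor in $\lfloor s/k\rfloor$ combine to give the stated bound $2^{s}$ exactly, not merely $O(2^{s})$. The slack introduced by $N \le 2^{k}$ is generally enough to absorb the geometric-series constant, but verifying this cleanly is the one place where the inequality could fail to be tight; the conceptual content — injectivity of the derivation encoding plus a geometric count — is entirely straightforward.
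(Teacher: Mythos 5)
Your setup — encoding each program injectively by its left-most derivation and counting rule-sequences — is the same skeleton as the paper's proof, but the finishing step you propose does not close, and the place it fails is exactly the spot you flagged as ``bookkeeping.'' Summing over lengths gives $\sum_{\ell=0}^{n} N^{\ell} = \frac{N^{n+1}-1}{N-1}$ with $n = \lfloor s/k\rfloor$, and when $N$ is an exact power of two there is no slack in $N \leq 2^{k}$ to absorb the geometric constant: for $N=2$, $k=1$ the sum is $2^{n+1}-1 > 2^{n} = 2^{s}$, and similarly for $N=4,8,\dots$. Your fallback of ``reindexing against all binary strings of length at most $s$'' fails for the same reason, since there are $2^{s+1}-1 > 2^{s}$ such strings. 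So the argument as written proves only a bound of roughly $2^{s+1}$, not the stated $2^{\text{size}(p)}$, and no amount of massaging the per-length counts $N^{\ell}$ repairs this, because summing capacity separately over each length class is genuinely wasteful.

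The missing idea, which is what the paper uses, is that the set of complete left-most derivations is prefix-free: a finished derivation has no nonterminal left to expand, so it cannot be a proper prefix of another finished derivation. The paper exploits this by fixing an arbitrary rule $r^{*}$ and padding every derivation $r_1\dots r_{n'}$ (with $n' \leq n$) to length exactly $n$ with copies of $r^{*}$; prefix-freeness is precisely what makes this padded map $\varphi \colon \mathbb{P}_p \to R^{n}$ injective, including across different derivation lengths. This bounds the \emph{union} over all lengths $\ell \leq n$ by a single $N^{n} \leq 2^{kn} = 2^{\text{size}(p)}$ rather than by $\sum_{\ell} N^{\ell}$ — equivalently, it is Kraft's inequality for a prefix-free code over an alphabet of size $N$. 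Your instinct that the content is ``injectivity plus a count'' is right, and your separate treatment of the degenerate case $N=1$ is a point the paper silently skips, but without the padding (or an explicit appeal to prefix-freeness) the clean constant-free bound $2^{\text{size}(p)}$ is not reachable along the route you describe.
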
 

\citet{DBLP:journals/ipl/BlumerEHW87} analysizes Occam solvers under the \textit{probably approximately correct (PAC)} learnability framework, and proves that the generalizability of Occam solvers is always guaranteed.

\begin{theorem} \label{theorem:occam-error}Let $\mathcal S$ be an $(\alpha, \beta)$-Occam solver on domain $\mathbb D$. Then there exist constants $c, \gamma >0$ such that for any $0 < \epsilon, \delta < 1$, for any distribution $D$ over $\mathbb I$ and any target program $p^* \in \mathbb P$:
	$$
	\small
	\forall n > c \left(\frac{1}{\epsilon}\ln\left(\frac{2}{\delta}\right) + \left(\frac{(\text{size}(p^*))^{\alpha}\ln^{\gamma}(2 / \delta)}{\epsilon}\right)^{1/(1-\beta)}\right), \Pr_{I_i \sim D} \left[\text{err}_{D,p^*}\left(\mathcal S\big(T(p^*, I_1, \dots, I_n)\big)\right) \geq \epsilon\right] \leq \delta
	$$
where $\text{err}_{D,p^*}(p)$ represents the error rate of program $p$ when the input distribution is $D$ and the target program is $p^*$, i.e., $\text{err}_{D,p^*}(p) \coloneqq \Pr_{I\sim D} \left[\sem{p}(I) \neq \sem{p^*}(I)\right]$. 
\end{theorem}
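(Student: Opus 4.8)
The plan is to adapt the classical Occam's-razor generalization argument of \citet{DBLP:journals/ipl/BlumerEHW87} to our randomized-solver setting, combining a union bound over small programs with the size guarantee of Definition \ref{definition:occam}. Throughout I would treat the two sources of randomness separately: the i.i.d.\ draw of the inputs $I_1, \dots, I_n \sim D$, and the internal coin tosses of $\mathcal S$.

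First I would bound the probability that a \emph{fixed} ``bad'' program survives the examples. Call a program $p$ bad if $\text{err}_{D,p^*}(p) \ge \epsilon$. Since the inputs are drawn independently and $p$ must agree with $p^*$ on every one of them to be consistent, we get
$$\Pr_{I_i \sim D}\big[p \text{ consistent with } T(p^*, I_1, \dots, I_n)\big] = \left(1 - \text{err}_{D,p^*}(p)\right)^n \le (1-\epsilon)^n \le e^{-\epsilon n}.$$
Next, using Lemma \ref{lemma:size}, the number of programs of size at most $s$ is at most $2^s$, so a union bound over all bad programs of bounded size yields
$$\Pr_{I_i \sim D}\big[\exists p : \text{size}(p) \le s,\ p \text{ bad},\ p \text{ consistent}\big] \le 2^s e^{-\epsilon n}.$$

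The second ingredient is the size guarantee. Applying Definition \ref{definition:occam} with failure parameter $\delta/2 \in (0,\tfrac12)$ and taking expectation over the sample, the output $\hat p \coloneqq \mathcal S(T(p^*, I_1, \dots, I_n))$ satisfies $\text{size}(\hat p) \le s^* \coloneqq c\,(\text{size}(p^*))^\alpha n^\beta \ln^\gamma(2/\delta)$ except with probability at most $\delta/2$. Since $\hat p$ is always consistent (Definition \ref{def:pbe}), the event $\text{err}_{D,p^*}(\hat p) \ge \epsilon$ forces either the size bound to fail, or $\hat p$ to be a bad consistent program of size at most $s^*$; the latter event is contained in the sample-only event bounded above, so the two probabilistic bounds compose cleanly across the independent sample and solver randomness. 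Hence
$$\Pr\big[\text{err}_{D,p^*}(\hat p) \ge \epsilon\big] \le \frac{\delta}{2} + 2^{s^*} e^{-\epsilon n},$$
and it remains to choose $n$ so that $2^{s^*} e^{-\epsilon n} \le \delta/2$, i.e.\ $\epsilon n \ge s^* \ln 2 + \ln(2/\delta)$.

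The final, and most delicate, step is to extract a clean closed form from this implicit inequality, since $s^*$ itself contains the factor $n^\beta$. Writing the requirement as $n \ge A n^\beta + B$ with $A = \tfrac{c\ln 2\,(\text{size}(p^*))^\alpha \ln^\gamma(2/\delta)}{\epsilon}$ and $B = \tfrac{\ln(2/\delta)}{\epsilon}$, I would use $0 \le \beta < 1$ to decouple the two terms: whenever $n \ge (2A)^{1/(1-\beta)}$ we have $A n^\beta \le n/2$, and whenever $n \ge 2B$ the remaining slack absorbs $B$, so $n \ge (2A)^{1/(1-\beta)} + 2B$ suffices. Absorbing the numeric constants and $c$ into a single fresh constant then recovers exactly the stated bound. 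I expect the main obstacle to be precisely this decoupling: it is where the hypothesis $\beta < 1$ is essential (for $\beta = 1$ the $n^\beta$ term cannot be dominated and the argument collapses), which is the structural reason Occam learning demands $\beta < 1$.
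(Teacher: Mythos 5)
Your proposal is correct and follows essentially the same route as the paper's proof: apply the Occam size bound with failure parameter $\delta/2$, bound the number of programs of size at most $s^*$ by $2^{s^*}$ via Lemma~\ref{lemma:size}, union-bound over $\epsilon$-bad consistent programs using $(1-\epsilon)^n$, and solve the resulting implicit inequality $n \gtrsim A n^\beta + B$. Your explicit decoupling step ($n \geq (2A)^{1/(1-\beta)} + 2B$) is a slightly more careful rendering of what the paper compresses into ``where $c$ is a large enough constant,'' but the argument is the same.
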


Due to space limit, we move all proofs to Appendix \ref{appendix:proofs}.

When $\epsilon$ and $\delta$ are fixed, Theorem \ref{theorem:occam-error} implies that an $(\alpha, \beta)-$Occam solver can find a program similar to the target $p^*$ with only $O\left(\text{size}(p^*)^{\alpha/(1-\beta)}\right)$ examples. Such a bound matches the principle of Occam's Razor, as it increases monotonically when the size of the target program increases.

The class of Occam solvers can reflect the practical generalizability of PBE solvers. Let us take two primitive solvers $\mathcal S_{\text{min}}$ and $\mathcal S_{\text{rand}}$ as an example. For any PBE task $T$, let $\mathbb P(T) \subseteq \mathbb P$ be the set of programs that are consistent with examples in $T$:
\begin{itemize}
	\item $\mathcal S_{\text{rand}}$ is the most trivial synthesizer that has no guarantee on the quality of the result: It just uniformly returns a program from $\mathbb P(T)$:
	$\forall p \in \mathbb P(T), \Pr\left[\mathcal S_{\text{rand}}(T) = p\right] = |\mathbb P(T)|^{-1}$.
	\item $\mathcal S_{\text{min}}$ regards a PBE task as an optimization problem, and always returns the syntactically smallest program in $\mathbb P(T)$:
	$\mathcal S_{\text{min}}(T) \coloneqq \arg \min_{p \in \mathbb P(T)} \text{size}(p)$.
\end{itemize}

In practice, it is usually believed that $\mathcal S_{\text{min}}$ has better generalizability than $\mathcal S_{\text{rand}}$. We prove that the class of Occam solvers can discover this advantage, as $\mathcal S_{\text{min}}$ is an Occam solver but $\mathcal S_{\text{rand}}$ is not.

\begin{theorem} \label{theorem:min-rand}Let $\mathcal F^A$ be the family of all possible domains. Then $\mathcal S_{\text{min}}$ is an $(1, 0)$-Occam solver on $\mathcal F^A$, and $\mathcal S_{\text{rand}}$ is not an Occam solver on $\mathcal F^A$.
\end{theorem}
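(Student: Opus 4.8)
The plan is to treat the two claims separately, since the easy direction (that $\mathcal{S}_{\text{min}}$ is an Occam solver) follows almost immediately from the definition, while the hard direction (that $\mathcal{S}_{\text{rand}}$ is not) requires building an adversarial family of domains. For $\mathcal{S}_{\text{min}}$, the first observation is that the target program $p^*$ is itself consistent with the examples it generates, so $p^* \in \mathbb{P}\big(T(p^*, I_1, \dots, I_n)\big)$. Because $\mathcal{S}_{\text{min}}$ returns a syntactically minimal consistent program, $\text{size}(\mathcal{S}_{\text{min}}(T)) \le \text{size}(p^*)$ holds deterministically for every domain, target, and input set. Since $\mathcal{S}_{\text{min}}$ is deterministic, the probability in Definition~\ref{definition:occam} is either $0$ or $1$, so it suffices to choose constants making the right-hand threshold at least $\text{size}(p^*)$. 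Since $\epsilon < \tfrac12$ forces $\ln(1/\epsilon) > \ln 2 > 0$, picking $\gamma = 1$ and $c = 2$ gives $c\,\text{size}(p^*)^1 n^0 \ln^\gamma(1/\epsilon) > 2\ln 2\cdot\text{size}(p^*) > \text{size}(p^*)$, so the bad event has probability $0 \le \epsilon$. This establishes $\mathcal{S}_{\text{min}}$ as a $(1,0)$-Occam solver, uniformly over $\mathcal{F}^A$ and with no dependence on $n$.

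For $\mathcal{S}_{\text{rand}}$, negating Definition~\ref{definition:occam} means that for every $\alpha \ge 1$, $\beta \in [0,1)$ and every pair $c,\gamma > 0$ I must exhibit a domain, target, input set, and error rate violating the bound. The plan is to construct, for each such tuple, a finite domain in which one tiny program $p^*$ and a huge collection of much larger programs all compute the same constant function, so that every large program is consistent with the examples and $\mathbb{P}(T) = \mathbb{P}$. Concretely I would take a single trivial input ($n=1$) and a grammar whose start symbol either produces a one-rule program $p^*$ via $S \to c$, or branches into a chain of nonterminals $B_1 \to 0B_2 \mid 1B_2, \ \ldots, \ B_L \to 0 \mid 1$ generating all $2^L$ binary strings of length $L$, with semantics mapping every program to $0$. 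Then the number of grammar rules is $N = \Theta(L)$, so $\text{size}(p^*) = \lceil \log_2 N\rceil = \Theta(\log L)$, while each binary program has size $\lceil \log_2 N\rceil(L+1) = (L+1)\,\text{size}(p^*)$, and $|\mathbb{P}(T)| = 2^L + 1$. Fixing $\epsilon = \tfrac14$, the probability that $\mathcal{S}_{\text{rand}}$ returns one of the large programs is $2^L/(2^L+1) \ge \tfrac23 > \epsilon$, while the threshold $c\,\text{size}(p^*)^\alpha n^\beta \ln^\gamma(1/\epsilon)$ is only $\Theta\big((\log L)^\alpha\big)$; since the large programs have size $\Theta(L\log L)$, which exceeds any fixed power of $\log L$ once $L$ is large enough, the bound is violated.

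The main obstacle is this $\mathcal{S}_{\text{rand}}$ construction, specifically forcing the size gap to beat an arbitrary polynomial $(\text{size}(p^*))^\alpha$. The delicate point is that enlarging the pool of large programs inflates the grammar, which in turn inflates $\text{size}(p^*)$; the construction must keep $\text{size}(p^*)$ growing only polylogarithmically (through $\lceil \log_2 N\rceil$) while the large programs grow linearly in $L$, so that the ratio $\text{size}(q)/(\text{size}(p^*))^\alpha = \Theta\big(L/(\log L)^{\alpha-1}\big)$ still diverges. I would also be careful with the quantifier order: the adversary fixes $c,\gamma$ first, so $L$ must be chosen only after $\alpha,\beta,c,\gamma$ and $\epsilon$ are fixed. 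The remaining routine checks are that the constructed domain is a legitimate member of $\mathcal{F}^A$ (a well-formed context-free grammar with well-defined semantics and at least one consistent program) and that fixing $n=1$ neutralizes the $n^\beta$ factor uniformly for all $\beta < 1$, which together complete the argument.
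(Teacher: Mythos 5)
Your proof is correct and takes essentially the same approach as the paper: the minimality of $\mathcal S_{\text{min}}$ immediately gives $\text{size}(\mathcal S_{\text{min}}(T)) \leq \text{size}(p^*)$, and for $\mathcal S_{\text{rand}}$ one picks an adversarial domain in which every program is consistent with the examples, so that uniform sampling returns a program much larger than the smallest target with constant probability. Your explicit grammar construction, the choice $\epsilon = \frac{1}{4}$, and the care to pick $L$ only after $\alpha, \beta, c, \gamma$ are fixed make fully rigorous what the paper's proof states only as a sketch (``$\text{size}(p)$ can be arbitrarily larger than $\text{size}(p^*)$'').
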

	%!TEX root = ../CLIA.tex
\section{Synthesis Through Unification} \label{section:stun}
%To evaluate the effectiveness of our abstracted generalization model, in this section we study the framework of \textit{synthesis through unification (STUN)}~\cite{DBLP:conf/cav/AlurCR15}, a state-of-the-art framework for synthesizing nested \texttt{if-then-else} operators. In Section \ref{section:stun-pre}, we introduce some necessary preliminaries on STUN and \textit{Eusolver}, a state-of-the-art STUN solver. Then, in Section \ref{section:stun-gen}, we study the generaliazbility of the STUN framework and \textit{Eusolver}.

\subsection{Preliminaries: Synthesis through Unification}
The framework of \textit{synthesis through unification (STUN)} focuses on synthesis tasks for programs with nested \texttt{if-then-else} operators. Formally, STUN assumes that the program space can be decomposed into two subspaces, for \texttt{if}-conditions and \texttt{if}-terms respectively. 

\begin{definition}A program space $\mathbb P$ is a conditional program space if and only if there exist two program spaces $\mathbb P_c$ and $\mathbb P_t$ such that $\mathbb P$ is the smallest set of programs such that:
	$$
	\mathbb P = \mathbb P_t \cup \left\{\texttt{if}\ c\ \texttt{then}\ p_1\ \texttt{else}\ p_2\ |\ p_1, p_2 \in \mathbb P, c \in \mathbb P_c  \right\}
	$$
\end{definition}
We use pair $(\mathbb P_t, \mathbb P_c)$ to denote a conditional program space derived from term space $\mathbb P_t$ and condition space $\mathbb P_c$. Besides, we call a domain $\mathbb D$ \textit{conditional} if the program space in $\mathbb D$ is conditional.

A STUN solver synthesizes programs in two steps:
\begin{enumerate}
	\item A term solver is invoked to synthesize a set of programs $P \subseteq \mathbb P_t$ such that for any example, there is always a consistent program in $P$.
	\item A unifier is invoked to synthesize a valid program from conditional program space $(P, \mathbb P_c)$. 
\end{enumerate}
In this paper, we only consider the specialized version of the STUN framework on PBE tasks.

\begin{definition}[Term Solver] Given conditional domain $\mathbb D$, term solver $\mathcal T: \mathbb T(\mathbb D) \rightarrow \mathcal P(\mathbb P_t)$ returns a set of terms covering all examples in a given PBE task, where $\mathcal P(\mathbb P_t)$ denotes the power set of $\mathbb P_t$:
	$$\forall T \in \mathbb T(\mathbb D), \forall (I,O) \in T, \exists p \in \mathcal T(T), \sem{p}(I) = O$$
\end{definition}

\begin{definition}[Unifier] Given a conditional domain $\mathbb D$, a unifier $\mathcal U$ is a function such that for any set of terms $P \subseteq \mathbb P_t$, $\mathcal U(P)$ is a valid PBE solver for $(P, \mathbb P_c)$.
\end{definition}

A STUN solver consists of a term solver $\mathcal T$ and a unifier $\mathcal U$. Given a PBE task $T$, the solver returns $\mathcal U(\mathcal T(T))(T)$ as the synthesis result. \citet{DBLP:conf/cav/AlurCR15} proves that such a combination is complete when the conditional domain is \textit{if-closed}. For other domains, STUN can be extended to be complete by backtracking to the term solver when the unifier fails~\cite{DBLP:conf/cav/AlurCR15, DBLP:conf/tacas/AlurRU17}.

\begin{definition}[If-Closed] \label{def:if-closed}A conditional domain $\mathbb D$ is \textit{if-closed} if: $$\forall p_1, p_2 \in \mathbb P_t, \exists c \in \mathbb P_c, \forall I \in \mathbb I, \left(\sem{c}(I) \iff \sem{p_1}(I) = \sem{p_2}(I)\right)$$
\end{definition}
Please note that any conditional domain with equality is {\it if-closed}, as $c$ can be constructed by testing the equality between the outputs of $p_1$ and $p_2$. In the rest of the paper, we 
%For simplicity, in the reminder of this paper, we 
 assume the conditional program space is if-closed, and use $\mathcal F_{C}$ to denote a family of if-closed domains.

\textit{Eusolver}~\cite{DBLP:conf/tacas/AlurRU17} is a state-of-the-art solver following the STUN framework. It takes efficiency and generalizability as its design purposes, and makes a trade-off between them.

The term solver $\mathcal T_{E}$ in \textit{Eusolver} is motivated by $\mathcal S_{\text{min}}$. $\mathcal T_{E}$ enumerates terms in $\mathbb P_t$ in the increasing order of the size. For each term $t$, if there is no smaller term that performs the same with $p$ on examples, $t$ will be included in the result. $\mathcal T_E$ returns when the result is enough to cover all examples.

The unifier  $\mathcal U_{E}$ in \textit{Eusolver} regards nested \texttt{if-then-else} operators as a decision tree, and uses \textit{ID3}~\cite{DBLP:journals/ml/Quinlan86}, a standard decision-tree learning algorithm, to unify the terms. $\mathcal U_{E}$ learns a decision tree recursively: In each recursion, it first tries to use a term to cover all remaining examples. If there is no such term, $\mathcal U_{E}$ will heuristically pick up a condition $c$ from $\mathbb P_c$ as the \texttt{if}-condition. According to the semantics of $c$, the examples will be divided into two parts, which will be used to synthesize the \texttt{then}-branch and the \texttt{else}-branch respectively.

\subsection{Generalizability of STUN} \label{section:stun-gen}
In this section, we study the generalizability of the STUN framework. To start, we extend the concept of Occam solvers to term solvers and unifiers: For an Occam term solver, there should be a polynomial bound on the total size of synthesized terms, and for an Occam unifier, the induced PBE solver should always be an Occam solver for any possible term set. These definitions will be used to guide our design of \mainname later.

\begin{definition} \label{def:size-preserve} For constants $\alpha \geq 1, 0 \leq \beta < 1$, term solver $\mathcal T$ is an $(\alpha, \beta)$-Occam term solver on $\mathcal F_C$ if there exist constants $c, \gamma > 0$ such that for any domain $\mathbb D \in \mathcal F_C$, for any target program $p^* \in \mathbb P$, for any input set $\{I_1, \dots, I_n\} \subseteq \mathbb I$, for any error rate $\epsilon \in \left(0, \frac{1}{2}\right)$: 
$$
\Pr\left[\text{tsize}\left(\mathcal T\big(T(p^*, I_1, \dots, I_n)\big)\right) > c\left(\text{size}(p^*)\right)^{\alpha}n^{\beta}\ln^{\gamma}\left(\frac{1}{\epsilon}\right)\right] \leq \epsilon
$$
where $\text{tsize}(P)$ is the total size of terms in term set $P$, i.e., $\sum_{t \in P} \text{size}(t)$.
\end{definition}

\begin{definition}\label{def:poly-unifier} For constants $\alpha \geq 1, 0 \leq \beta < 1$, unfier solver $\mathcal U$ is an $(\alpha, \beta)$-Occam unifier on $\mathcal F_C$ if there exist constants $c, \gamma > 0$ such that for any domain $\mathbb D \in \mathcal F_C$, for any term set $P \subseteq \mathbb P_t$, for any target program $p^* \in (P, \mathbb P_t)$, for any input set $\vec{I} = \{I_1, \dots, I_n\} \subseteq \mathbb I$, for any error rate $\epsilon \in \left(0, \frac{1}{2}\right)$: 
	\begin{align*}
	\Pr\left[\text{size}\left(\mathcal U(P)\big(T(p^*, I_1, \dots, I_n)\big)\right) > c\big(\max(\text{size}(p^*), \text{tsize}(P))\big)^{\alpha}n^{\beta}\ln^{\gamma}\left(\frac{1}{\epsilon}\right)\right] \leq \epsilon
	\end{align*}
\end{definition}

In Definition \ref{def:poly-unifier}, besides the size of the target program, the bound also refers to the total size of $P$. Such relaxation allows the unifier to use more examples when a large term set is provided.

Based on  the above definitions, we prove that under some conditions, an STUN solver comprised of an Occam term solver and an Occam unifier is also an Occam solver.

\begin{theorem} \label{theorem:gen-stun} Let $\mathcal F_C$ be a family of if-closed conditional domains, $\mathcal T$ be an $(\alpha_1, \beta_1)$-Occam term solver on $\mathcal F_C$, $\mathcal U$ be an $(\alpha_2, \beta_2)$-Occam unifier where $\beta_1\alpha_2 + \beta_2< 1$. Then the STUN solver comprised of $\mathcal T$ and $\mathcal U$ is an $((\alpha_1 + 1)\alpha_2, \beta_1\alpha_2 +\beta_2)$-Occam solver.
\end{theorem}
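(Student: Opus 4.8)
The plan is to compose the guarantees of the term solver and the unifier, tracking how the size bound of each stage feeds into the next. The STUN solver on a task $T(p^*, I_1, \dots, I_n)$ first computes $P = \mathcal T(T)$ and then returns $\mathcal U(P)(T)$. The subtlety is that $\mathcal U$'s guarantee (Definition~\ref{def:poly-unifier}) is stated relative to an arbitrary target program in the conditional space $(P, \mathbb P_c)$, and to apply it I must supply such a target. The natural choice is $p^*$ itself: since $p^*$ is consistent with all examples and the domain is if-closed, $p^*$ can be regarded as (or rewritten into) a program over the term set $P$ together with conditions from $\mathbb P_c$. I would first argue this is legitimate, so that $\mathcal U(P)$'s bound applies with the target being $p^*$ and the relevant size parameter being $\max(\text{size}(p^*), \text{tsize}(P))$.

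**Next I would split the total failure probability** across the two stages using a union bound. Fix a target error rate $\epsilon$; allot $\epsilon/2$ to each stage. By Definition~\ref{def:size-preserve}, except with probability $\epsilon/2$ we have
\[
\text{tsize}(P) \le c_1 (\text{size}(p^*))^{\alpha_1} n^{\beta_1} \ln^{\gamma_1}(2/\epsilon).
\]
Condition on this good event. Then $\max(\text{size}(p^*), \text{tsize}(P))$ is bounded by (a constant times) $(\text{size}(p^*))^{\alpha_1} n^{\beta_1} \ln^{\gamma_1}(2/\epsilon)$, using $\alpha_1 \ge 1$ so that $\text{size}(p^*) \le (\text{size}(p^*))^{\alpha_1}$. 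By Definition~\ref{def:poly-unifier}, except with a further probability $\epsilon/2$,
\[
\text{size}\big(\mathcal U(P)(T)\big) \le c_2 \big(\max(\text{size}(p^*), \text{tsize}(P))\big)^{\alpha_2} n^{\beta_2} \ln^{\gamma_2}(2/\epsilon).
\]
Substituting the bound on the $\max$ term and collecting exponents gives a bound proportional to $(\text{size}(p^*))^{\alpha_1 \alpha_2} n^{\beta_1 \alpha_2 + \beta_2}$ times polylogarithmic factors in $1/\epsilon$.

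**The exponent bookkeeping is the step to get right.** Raising the $\max$ bound to the power $\alpha_2$ turns $(\text{size}(p^*))^{\alpha_1}$ into $(\text{size}(p^*))^{\alpha_1 \alpha_2}$ and $n^{\beta_1}$ into $n^{\beta_1 \alpha_2}$, and the additional $n^{\beta_2}$ from the unifier combines to $n^{\beta_1 \alpha_2 + \beta_2}$. The hypothesis $\beta_1 \alpha_2 + \beta_2 < 1$ is exactly what guarantees the resulting $\beta := \beta_1 \alpha_2 + \beta_2$ satisfies $0 \le \beta < 1$, as required by Definition~\ref{definition:occam}. I would note that the claimed exponent on the size is $(\alpha_1 + 1)\alpha_2$ rather than the $\alpha_1 \alpha_2$ my naive composition produces; I expect the extra $\alpha_2$ factor arises because $\text{tsize}(P)$ may need to absorb not just $(\text{size}(p^*))^{\alpha_1}$ but a factor involving $n$, and when $n$ itself must be bounded in terms of $\text{size}(p^*)$ (or when the $\max$ is dominated by $\text{tsize}(P)$ contributing an extra power of the size), the size exponent inflates. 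Tracking precisely how $\text{tsize}(P)$ relates to $\text{size}(p^*)$ — and whether an additional $+1$ in the exponent comes from relating the number of terms to $\text{size}(p^*)$ — is where I expect the real care is needed.

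**The main obstacle** is therefore twofold: justifying that $p^*$ is a valid target for $\mathcal U(P)$ over the synthesized term set $P$ (which relies on if-closedness and on $P$ covering all examples so that every branch of $p^*$ has a representative term), and correctly propagating the $n$-dependence through the composition to recover the exact exponent $(\alpha_1 + 1)\alpha_2$ stated in the theorem. The polylogarithmic factors in $1/\epsilon$ merge routinely into a single $\ln^{\gamma}(1/\epsilon)$ term with $\gamma = \max(\gamma_1 \alpha_2, \gamma_2)$ or their sum, and absorbing constants and the factor-of-$2$ inside the logarithms into the final constant $c$ is routine.
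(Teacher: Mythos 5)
Your overall architecture --- union bound splitting $\epsilon$ across the two stages, term-solver bound first, then the unifier bound, then exponent bookkeeping --- matches the paper's proof, and you correctly isolated the crux: what program serves as the unifier's target. But the step where you propose to ``apply $\mathcal U(P)$'s bound with the target being $p^*$ and the relevant size parameter being $\max(\text{size}(p^*), \text{tsize}(P))$'' is a genuine gap, and it is exactly why your composition yields $\alpha_1\alpha_2$ instead of $(\alpha_1+1)\alpha_2$. The program $p^*$ is in general \emph{not} a member of $(P, \mathbb P_c)$: the term solver only guarantees that $P$ covers all examples, not that $P$ contains the terms actually used by $p^*$ (indeed, in Example~\ref{example:eusolver-size} the synthesized terms are constants while the target term $x+1$ never appears). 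So Definition~\ref{def:poly-unifier} cannot be invoked with $p^*$ at its own size; it must be invoked with some program in $(P, \mathbb P_c)$ that agrees with $p^*$ on the given inputs, and constructing one costs size.

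The paper's construction is the missing idea: using if-closedness, define $\varphi(t) \coloneqq \texttt{if}\ (t = t_1)\ \texttt{then}\ t_1\ \texttt{else}\ \dots\ \texttt{else if}\ (t = t_{n-1})\ \texttt{then}\ t_{n-1}\ \texttt{else}\ t_n$, where each ``$t = t_i$'' is the condition in $\mathbb P_c$ guaranteed by Definition~\ref{def:if-closed}, and let $\varphi(p^*)$ replace every term $t$ of $p^*$ by $\varphi(t)$. Coverage of every example by some term in $P$ makes $\varphi(p^*)$ consistent with the task, so it is a legitimate target for $\mathcal U(P)$, but its size satisfies only $\text{size}(\varphi(p^*)) \leq c_3\,\text{size}(p^*)\,\text{tsize}(P)$ --- a \emph{multiplicative} blow-up by $\text{tsize}(P)$. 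On the good event this gives $\max(\text{size}(\varphi(p^*)), \text{tsize}(P)) \leq c_1c_3(\text{size}(p^*))^{\alpha_1+1} n^{\beta_1}\ln^{\gamma_1}(2/\epsilon)$, and raising to the power $\alpha_2$ produces precisely $(\alpha_1+1)\alpha_2$ and $\beta_1\alpha_2+\beta_2$. Your diagnosis of the extra $+1$ --- that it comes from bounding $n$ in terms of $\text{size}(p^*)$ or from the $\max$ being dominated by $\text{tsize}(P)$ --- is not the actual mechanism; it is the per-term rewriting cost of embedding $p^*$ into $(P, \mathbb P_c)$. With the construction of $\varphi$ supplied, the rest of your argument (the union bound and the merging of logarithmic factors into $\ln^{\gamma}(1/\epsilon)$ with $\gamma = \gamma_1\alpha_2 + \gamma_2$) goes through as you wrote it.
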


\subsection{Generalizability of \textit{Eusolver}} \label{section:gen-eusolver}
In this section, we analyze the generalizability of \textit{Eusolver} and prove that \textit{Eusolver} is not an Occam solver. We start from the term solver $\mathcal T_{E}$. As $\mathcal T_{E}$ enumerates terms in the increasing order of the size, $\mathcal T_{E}$ guarantees that all synthesized terms are small. However, the main problem of $\mathcal T_{E}$ is that it does not control the total number of synthesized terms. Therefore, the total size of the term set returned by $\mathcal T_{E}$ can be extremely large, as shown in Example \ref{example:eusolver-size}.

\begin{example} \label{example:eusolver-size} Consider the following term space $\mathbb P_t^n$, input space $\mathbb I_t^n$ and target program $p$:
	$$
	\mathbb P_t^n = \{2, 3, \dots, n + 1, x + 1\}\ \ \ \mathbb I_t^n = [1, n] \cap \mathbb Z\ \ \ p = x+ 1
	$$
	
	As $p$ is the largest term in $\mathbb P_t^n$, on any input $x_0$ in $\mathbb I_t^n$, there is always a smaller term $c$ that performs the same as $p$, where $c$ is a constant equal to $x_0 + 1$. Therefore, whatever the PBE task is, $\mathcal T_{E}$ always returns a subset of $P_A = \{2,3, \dots, n + 1\}$ and never enumerates to the target program $p$. 
	
	When all inputs in $\mathbb I_t^n$ are included in the PBE task, the term set synthesized by $\mathcal T_{E}$ is always $P_A$. At this time, the total size of $P_A$ is $n\lceil\log_2(n +3) \rceil$, the number of examples is $n$ and the size of the target program is $\lceil \log_2(n + 3) \rceil$. Clearly, there are no $\alpha \geq 1, 0 < \beta < 1$ and $c > 0$ such that $\forall n, \text{tsize}(P_A) \leq c\left(\text{size}(p)\right)^{\alpha}n^{\beta}$. Therefore, $\mathcal T_{E}$ is not an Occam term solver.
	
	Moreover, at this time, the program synthesized by \textit{Eusolver} must utilize all terms in $P_A$, and thus its size is no smaller than $\text{tsize}(P_A)$. So $\textit{Eusolver}$ is not an Occam solver as well.
\end{example}

The following fact comes from Example \ref{example:eusolver-size} immediately.

\begin{theorem}\label{theorem:eusolver-term} $\mathcal T_E$ is not an Occam term solver on $\mathcal F^A_C$, and \textit{Eusolver} is not an Occam solver on $\mathcal F^A_C$, where $\mathcal F^A_C$ is the family of all if-closed conditional domains.
\end{theorem}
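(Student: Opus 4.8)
The statement is exactly the formalization of the observations collected in Example~\ref{example:eusolver-size}, so the plan is to turn those observations into a clean contradiction against Definition~\ref{def:size-preserve} (and then Definition~\ref{definition:occam}). I would fix \emph{arbitrary} candidate parameters $\alpha \geq 1$ and $0 \leq \beta < 1$ together with arbitrary hypothetical witnessing constants $c,\gamma > 0$, and then exhibit a single violating instance drawn from the family $(\mathbb P_t^n, \mathbb I_t^n)$ of the example. To place this instance inside $\mathcal F^A_C$ I would first make the domain a genuine if-closed conditional domain: I take the condition space $\mathbb P_c$ to contain enough comparison/equality predicates (e.g.\ tests of the form $x \geq k$, or equality tests on term outputs), which by the remark following Definition~\ref{def:if-closed} guarantees if-closedness; the precise choice is immaterial to $\mathcal T_E$'s behavior and only serves membership in $\mathcal F^A_C$.

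The key simplification is that $\mathcal T_E$ is \emph{deterministic}: on the task consisting of all $n$ inputs with target $p = x+1$, it returns $P_A = \{2,\dots,n+1\}$ with probability $1$, since each constant $x_0+1$ covers the single input $x_0$ and no strictly smaller term matches the behavior of $x+1$. Hence the event inside the probability in Definition~\ref{def:size-preserve} has probability $0$ or $1$, and fixing, say, $\epsilon = 1/4 \in (0,\tfrac12)$ reduces the probabilistic requirement to a purely deterministic size inequality that the Occam condition forces to hold for \emph{every} $n$. I only need to violate it for one sufficiently large $n$.

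Writing $L \coloneqq \lceil \log_2(n+3)\rceil$, Definition~\ref{def:size} gives $\text{tsize}(P_A) = nL$ and $\text{size}(p) = L$, so the Occam bound would require
\[
  nL \;\leq\; c\,L^{\alpha}\,n^{\beta}\,\ln^{\gamma}\!\big(\tfrac1\epsilon\big),
  \quad\text{i.e.}\quad
  n^{1-\beta} \;\leq\; c\,(\ln 4)^{\gamma}\,L^{\alpha-1}.
\]
Because $L = \Theta(\log n)$ while $0 \leq \beta < 1$, the left-hand side grows polynomially in $n$ and the right-hand side only polylogarithmically, so the inequality fails for all large $n$. This contradicts the hypothesized Occam bound, and since $\alpha,\beta,c,\gamma$ were arbitrary, $\mathcal T_E$ is not an Occam term solver on $\mathcal F^A_C$.

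To lift this to \textit{Eusolver} itself, I would argue that once the term solver commits to $P_A$, the $n$ required outputs $2,\dots,n+1$ are pairwise distinct and each constant in $P_A$ realizes exactly one of them; therefore every decision-tree the unifier $\mathcal U_E$ can build must use all $n$ terms as leaves, and the conditions only add length. Thus $\text{size}(\textit{Eusolver}(T)) \geq \text{tsize}(P_A) = nL$, and the identical asymptotic comparison against $c(\text{size}(p))^{\alpha} n^{\beta}\ln^{\gamma}(1/\epsilon)$ refutes Definition~\ref{definition:occam}. The only genuinely delicate points, rather than obstacles, are bookkeeping ones: correctly negating the nested quantifiers of the Occam definition (ruling out \emph{all} $\alpha,\beta,c,\gamma$, including the boundary case $\beta = 0$), and justifying the ``uses all terms'' claim so that the lower bound $\text{size}(\textit{Eusolver}(T)) \geq \text{tsize}(P_A)$ is airtight even accounting for \textit{Eusolver}'s backtracking, which is not triggered here since the unifier succeeds.
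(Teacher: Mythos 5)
Your proposal is correct and takes essentially the same route as the paper: the paper proves this theorem by direct appeal to Example~\ref{example:eusolver-size}, i.e., the same instance $\mathbb P_t^n = \{2,\dots,n+1,\, x+1\}$ with target $x+1$, the observation that $\mathcal T_E$ deterministically returns $P_A=\{2,\dots,n+1\}$ of total size $n\lceil\log_2(n+3)\rceil$, and the same asymptotic comparison against $c\,(\text{size}(p))^{\alpha}n^{\beta}$, plus the remark that the synthesized program must use all terms of $P_A$. Your additional bookkeeping (choosing $\mathbb P_c$ to certify if-closedness, fixing $\epsilon=1/4$ to discharge the probabilistic quantifier, and justifying the lower bound $\text{size}(\textit{Eusolver}(T)) \geq \text{tsize}(P_A)$) only makes explicit what the paper leaves implicit.
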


The generalizability of $\mathcal U_{E}$ is related to the underlying decision-tree learning algorithm $\textit{ID3}$. \citet{DBLP:conf/stacs/HancockJLT95} proves that there is no polynomial-time algorithm for learning decision trees that generalizes within a polynomial number of examples unless $\mathsf{NP} = \mathsf{RP}$, where $\mathsf{RP}$ represents the class of polynomial-time random algorithms with one-side error. Combining with Theorem \ref{theorem:occam-error}, we obtain the following lemma, which implies that $\mathcal U_{E}$ is unlikely to be an Occam unifier.

\begin{theorem}\label{theorem:eusovler-unifier} There is no polynomial-time Occam unifier on $\mathcal F^A_C$ unless $\mathsf{NP} = \mathsf{RP}$.
\end{theorem}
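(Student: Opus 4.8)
The plan is to argue by reduction: I will show that a polynomial-time Occam unifier on $\mathcal F^A_C$ would yield a polynomial-time algorithm that learns decision trees and generalizes from polynomially many examples, which \citet{DBLP:conf/stacs/HancockJLT95} rule out unless $\mathsf{NP} = \mathsf{RP}$. So I would suppose, for contradiction, that some polynomial-time $(\alpha,\beta)$-Occam unifier $\mathcal U$ exists on $\mathcal F^A_C$ for fixed constants $\alpha \geq 1$ and $0 \leq \beta < 1$.

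First I would construct a family of if-closed conditional domains that encodes Boolean decision trees. For each $m$, let $\mathbb I = \{0,1\}^m$, let the term space be the two constant labels $\mathbb P_t = \{0,1\}$, and let the condition space $\mathbb P_c$ consist of the attribute tests $x_1, \dots, x_m$ together with the two trivial conditions \texttt{true} and \texttt{false}. The derived conditional space $(\mathbb P_t, \mathbb P_c)$ is then exactly the class of binary decision trees over $x_1,\dots,x_m$: every nested \texttt{if-then-else} program is a tree whose internal nodes are attribute tests and whose leaves are labels. I would verify if-closedness straight from Definition \ref{def:if-closed} --- two equal labels agree everywhere (witnessed by \texttt{true}) and two distinct labels agree nowhere (witnessed by \texttt{false}), so the two constants suffice --- and I would confirm via Definition \ref{def:size} that the encoded size of such a program is polynomially faithful to its number of nodes (up to a $\lceil\log_2 N\rceil$ factor), so the two notions of ``small'' coincide.

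Next I would instantiate the unifier at the fixed term set $P = \{0,1\}$. Because $\text{tsize}(P)$ is a constant, the Occam-unifier bound of Definition \ref{def:poly-unifier} collapses (after adjusting the constants) to the ordinary Occam-solver bound of Definition \ref{definition:occam} for the induced PBE solver $\mathcal U(P)$ on $(P,\mathbb P_c)$. Hence $\mathcal U(P)$ is a polynomial-time $(\alpha,\beta)$-Occam solver that always returns a decision tree as its hypothesis, with fixed constants uniform across all $m$. Applying Theorem \ref{theorem:occam-error} to $\mathcal U(P)$, it PAC-learns decision trees: for any target tree $p^*$ and any $\epsilon,\delta$, it suffices to draw $O\!\big(\text{size}(p^*)^{\alpha/(1-\beta)}\big)$ examples, a polynomial number since $\beta<1$ is a fixed constant. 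Together with the polynomial running time of $\mathcal U$, this gives a polynomial-time algorithm that learns decision trees within polynomially many examples --- precisely the object \citet{DBLP:conf/stacs/HancockJLT95} forbid unless $\mathsf{NP} = \mathsf{RP}$, yielding the contradiction.

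I expect the skeleton above to be routine; the main obstacle lies at its seams. The delicate points are checking that the constructed decision-tree domains genuinely lie in $\mathcal F^A_C$, keeping the program encoding polynomially faithful to tree structure so that the Occam size bound transfers cleanly to the node-count scale used in the hardness result, and --- most importantly --- ensuring that the hypothesis class and learning model realized by $\mathcal U(P)$ (a polynomial-time learner whose outputs are themselves decision trees, generalizing within a polynomial sample) match \emph{exactly} the class and model for which Hancock et al.\ establish hardness, so that their lower bound applies with no gap.
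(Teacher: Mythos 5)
Your proposal is correct and takes essentially the same route as the paper's own proof: both encode decision-tree learning instances as if-closed conditional domains (labels as constant terms, tests as conditions), instantiate the Occam unifier at the fixed term set so that Definition \ref{def:poly-unifier} collapses to the ordinary Occam-solver bound, and then combine Theorem \ref{theorem:occam-error} with the hardness result of \citet{DBLP:conf/stacs/HancockJLT95}. The only cosmetic differences are that you specialize to Boolean attributes with binary labels while the paper reduces from an arbitrary instance with $m$ labels and $k$ tests, and the paper explicitly normalizes away the trivial conditions (replacing $\texttt{if}\ \texttt{false}\ \texttt{then}\ p_1\ \texttt{else}\ p_2$ by $p_2$) to map the unifier's output back to a genuine decision tree --- a step you gloss over but which is immediate and size-non-increasing.
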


These results indicate that (1) {\it Eusolver} itself is not an Occam solver, and (2) if we would like to design an Occam solver following Theorem~\ref{theorem:gen-stun}, neither $\mathcal T_E$ nor $\mathcal U_E$ can be reused.

	%!TEX root = ../CLIA.tex
\section{Term Solver} \label{section:term-solver}
%In this section, we introduce the term solver $\mathcal T_{\text{poly}}$ in \mainname, which is designed to be an Occam term solver for conditional program domains. We first discuss the main idea of $\mathcal T_{\text{poly}}$ in Section \ref{section:term-main}, and then select proper parameters for the main algorithm in Section \ref{section:term-config}. At last, we finish the proofs of all wanted properties in Section \ref{section:term-property}.

\subsection{Overview} \label{section:term-main}
For an Occam term solver, the total size of returned terms should be bounded. Therefore, a term solver must be an Occam solver if (1) the number of returned terms is bounded, and (2) the maximal size of returned terms is bounded. 

\begin{lemma} \label{lemma:term-split} For constants $\alpha_1,\alpha_2 \geq 0, 0 \leq \beta_1, \beta_2 < 1$ where $\beta_1 + \beta_2 < 1$, term solver $\mathcal T$ is an $(\alpha_1 + \alpha_2, \beta_1 + \beta_2)$-Occam solver on $\mathcal F_C$ if there exist constants $c, \gamma > 0$ such that for any conditional domain $\mathbb D \in \mathcal F_C$, any target program $p^* \in \mathbb P$, and any input set $\{I_1, \dots, I_n\} \subseteq \mathbb I$:
	\begin{enumerate}
		\item With a high probability, the size of terms returned by $\mathcal T$ is bounded by $\text{size}(p^*)^{\alpha_1} n^{\beta_1}$.
		$$
		\Pr\left[\max \left\{ \text{size(p)}\ \big |\ p \in \mathcal T \big((I_1, \sem{p}(I_1)), \dots, (I_n, \sem{p}(I_n))\big)\right\}  > c\left(\text{size}(p^*)\right)^{\alpha_1}n^{\beta_1}\ln^{\gamma}\left(\frac{1}{\epsilon}\right)\right] \leq \epsilon
		$$
		\item With a high probability, the number of terms returned by $\mathcal T$ is bounded by $\text{size}(p^*)^{\alpha_2} n^{\beta_2}$.
		$$
		\Pr\left[\left|\mathcal T \big((I_1, \sem{p}(I_1)), \dots, (I_n, \sem{p}(I_n))\big)\right|  > c\left(\text{size}(p^*)\right)^{\alpha_2}n^{\beta_2}\ln^{\gamma}\left(\frac{1}{\epsilon}\right)\right] \leq \epsilon
		$$
	\end{enumerate}
\end{lemma}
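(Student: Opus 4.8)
The plan is to reduce the bound on the total size $\text{tsize}$ to the two given high-probability bounds via the elementary inequality $\text{tsize}(P) \le |P| \cdot \max_{t \in P}\text{size}(t)$, and then combine the two guarantees through a union bound with a split error budget. First I would fix a domain $\mathbb{D} \in \mathcal{F}_C$, a target program $p^*$, an input set $\{I_1,\dots,I_n\}$, and an error rate $\epsilon \in \left(0,\frac{1}{2}\right)$. Writing $P = \mathcal{T}\big(T(p^*, I_1,\dots,I_n)\big)$, the key structural observation is that the total size is at most the number of terms times the largest term size:
$$\text{tsize}(P) = \sum_{t\in P}\text{size}(t) \le |P|\cdot\max_{t\in P}\text{size}(t).$$

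Next I would instantiate hypotheses (1) and (2) each at error rate $\epsilon/2$ instead of $\epsilon$ (valid since $\epsilon/2 \in \left(0,\frac{1}{2}\right)$). Let $E_1$ be the event that $\max_{t\in P}\text{size}(t) > c\left(\text{size}(p^*)\right)^{\alpha_1}n^{\beta_1}\ln^{\gamma}(2/\epsilon)$, and $E_2$ the event that $|P| > c\left(\text{size}(p^*)\right)^{\alpha_2}n^{\beta_2}\ln^{\gamma}(2/\epsilon)$. By assumption $\Pr[E_1] \le \epsilon/2$ and $\Pr[E_2] \le \epsilon/2$, so the union bound gives $\Pr[E_1 \cup E_2] \le \epsilon$. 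On the complementary event $\overline{E_1}\cap\overline{E_2}$, multiplying the two bounds and applying the structural inequality yields
$$\text{tsize}(P) \le c^2\left(\text{size}(p^*)\right)^{\alpha_1+\alpha_2}n^{\beta_1+\beta_2}\ln^{2\gamma}(2/\epsilon).$$

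Finally I would clean up the logarithmic factor: since $\epsilon < \frac{1}{2}$ we have $\ln(2/\epsilon) = \ln 2 + \ln(1/\epsilon) \le 2\ln(1/\epsilon)$, hence $\ln^{2\gamma}(2/\epsilon) \le 2^{2\gamma}\ln^{2\gamma}(1/\epsilon)$. Absorbing the factor $2^{2\gamma}$ into the leading constant, I set $c' = 2^{2\gamma}c^2$ and $\gamma' = 2\gamma$; then with probability at least $1-\epsilon$ the inequality $\text{tsize}(P) \le c'\left(\text{size}(p^*)\right)^{\alpha_1+\alpha_2}n^{\beta_1+\beta_2}\ln^{\gamma'}(1/\epsilon)$ holds, which is exactly the defining condition of an $(\alpha_1+\alpha_2,\beta_1+\beta_2)$-Occam term solver in Definition \ref{def:size-preserve}. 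The hypothesis $\beta_1 + \beta_2 < 1$ guarantees that the resulting exponent on $n$ is admissible.

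I do not expect a genuine obstacle here; the argument is essentially a product bound combined with a union bound, and the only care required is the bookkeeping of the error budget (splitting $\epsilon$ into $\epsilon/2$) and the absorption of the $\ln(2/\epsilon)$ term into the constants. The one minor caveat worth flagging is that the admissibility requirement $\alpha \ge 1$ in Definition \ref{def:size-preserve} needs $\alpha_1 + \alpha_2 \ge 1$; if $\alpha_1 + \alpha_2 < 1$ one may simply weaken the exponent to $1$ at no cost, since $\text{size}(p^*) \ge 1$ implies $\left(\text{size}(p^*)\right)^{\alpha_1+\alpha_2} \le \text{size}(p^*)$.
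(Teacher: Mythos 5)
Your proposal is correct and matches the paper's own proof essentially step for step: both arguments combine the two hypotheses via a union bound with the error budget split as $\epsilon/2$ each, apply the product inequality $\text{tsize}(P) \leq |P| \cdot \max_{t \in P}\text{size}(t)$ on the complementary event, and absorb the resulting $\ln^{2\gamma}(2/\epsilon)$ factor into a larger constant with exponent $2\gamma$. Your explicit bookkeeping (naming the events $E_1, E_2$, the $\ln(2/\epsilon) \leq 2\ln(1/\epsilon)$ cleanup, and the remark on the $\alpha_1 + \alpha_2 \geq 1$ admissibility requirement) only spells out details the paper leaves implicit.
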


In Lemma \ref{lemma:term-split}, the first condition has a form similar to the guarantee provided by an Occam solver. Motivated by this point, we design $\mathcal T_{\text{poly}}$ as a meta-solver that takes an Occam solver $\mathcal S_t$ on the term space as an input. To solve a term finding task, $\mathcal T_{\text{poly}}$ firstly decomposes it into several standard PBE tasks and then invokes $\mathcal S_t$ to synthesize terms with bounded sizes.

One challenge is that, in a term finding task, different examples correspond to different target terms, i.e., \texttt{if}-terms used in the target program. To find a target term using $\mathcal S_t$, $\mathcal T$ should pick up enough examples that correspond to the same target term. To do so, we utilize the fact that there must be a target term that covers a considerable portion of all examples, as shown in Lemma \ref{lemma:sample}.

\begin{lemma} \label{lemma:sample}Let $T$ be a PBE task, and let $P$ be a set of terms that covers all examples in $T$, i.e., $\forall (I, O) \in T, \exists p \in P, \left(\sem{p}(I) = O\right)$. There is always a term $p \in P$ such that:
	$$
	\left|\big\{(I, O) \in T\ \big|\ \sem{p}(I) = O\big\} \right| \geq |T|/|P|
	$$
\end{lemma}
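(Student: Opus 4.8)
The plan is to argue by the pigeonhole principle. The covering hypothesis guarantees that every example is correctly handled by at least one term in $P$, so the $|T|$ examples can be distributed among the $|P|$ terms, and some term must receive at least the average share $|T|/|P|$. The only thing that needs care is that a single example may be covered simultaneously by several terms, so I will first collapse the (possibly overlapping) cover into a genuine partition before counting.

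First I would make the distribution precise. Since each example $(I,O) \in T$ satisfies $\sem{p}(I) = O$ for at least one $p \in P$, I would fix, for every example, one such witnessing term, thereby defining an assignment $f : T \rightarrow P$ with $\sem{f(I,O)}(I) = O$ for all $(I,O) \in T$. This turns the overlapping cover into disjoint fibers $f^{-1}(p)$ for $p \in P$ whose union is all of $T$.

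Next I would apply a simple averaging count. Because the fibers partition $T$, we have $\sum_{p \in P} \big|f^{-1}(p)\big| = |T|$, where the sum ranges over $|P|$ terms, so at least one term $p \in P$ must satisfy $\big|f^{-1}(p)\big| \geq |T|/|P|$. Finally, every example in the fiber $f^{-1}(p)$ is by construction covered by $p$, giving the inclusion $f^{-1}(p) \subseteq \big\{(I,O) \in T \mid \sem{p}(I) = O\big\}$; hence the set of examples covered by this particular $p$ has cardinality at least $|T|/|P|$, which is exactly the claimed bound.

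There is no substantial obstacle in this argument; it is a direct pigeonhole estimate. The single point that requires attention is the double-counting issue noted above: if one simply summed the raw covered-counts $\big|\{(I,O)\in T \mid \sem{p}(I)=O\}\big|$ over $p \in P$, the total could exceed $|T|$ and the averaging step would fail to produce a valid lower bound. Committing to the fixed assignment $f$ first is precisely what eliminates this overlap and makes the averaging sound.
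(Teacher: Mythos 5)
Your proof is correct, and at its core it is the same pigeonhole/averaging argument as the paper's. The paper's proof is in fact more direct than yours: it defines $w_i$ as the \emph{raw} covered-count of each term $t_i\in P$, observes that since $P$ covers every example we have $\sum_{i} w_i \geq |T|$, and concludes $\max_i w_i \geq |T|/|P|$ immediately. This means your closing remark --- that summing raw covered-counts would make ``the averaging step fail to produce a valid lower bound'' because the total could exceed $|T|$ --- is mistaken. Overlap among the covered sets can only \emph{increase} the sum, and since the lemma asks for a \emph{lower} bound on the maximum, overcounting is harmless: $\sum_i w_i \geq |T|$ with $|P|$ summands already forces some $w_i \geq |T|/|P|$. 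Your device of fixing a witness function $f : T \to P$ and passing to the disjoint fibers $f^{-1}(p)$ is perfectly valid (the fiber sizes lower-bound the raw counts, so the conclusion transfers), but it is extra machinery solving a non-problem; the double-counting issue you flag would only matter if you needed an \emph{upper} bound on some covered-count, which you do not.
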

Given a term finding task $T$ where $P^*$ is the set of target terms, let $t^* \in P^*$ be the term that covers the most examples. According to Lemma \ref{lemma:sample}, if we randomly select $n_t$ examples from $T$, term $t^*$ will be consistent with all selected examples with a probability of at least $|P^*|^{-n_t}$. Therefore, $\mathcal T_{\text{poly}}$ repeatedly invokes $\mathcal S_{t}$ on a small set of random examples drawn from $T$: When the generalizability of $\mathcal S_t$ is guaranteed on term space $\mathbb P_t$, the number of examples and the number of turns are large enough, $\mathcal T_{\text{poly}}$ would find a term semantically similar with $t^*$ with a high probability.

For the second condition, $\mathcal T_{\text{poly}}$ assumes that there is an upper bound $k$, and searches among term sets with at most $k$ terms. If the search process guarantees to find a valid term set with a high probability when $k$ is larger than a small bound, which is polynomial to the size of the target program and sub-linear to the number of examples, the second condition of Lemma \ref{lemma:sample} can be satisfied by iteratively trying all possible $k$.

\subsection{Algorithm} \label{section:term-algorithm}
The pseudo-code of $\mathcal T_{\text{poly}}$ is shown as Algorithm \ref{alg:term-main}. $\mathcal T_{\text{poly}}$ is configured by a domain solver $\mathcal S_t$, which is used to synthesize terms, and a constant $c$, which is used to configure bounds used in $\mathcal T_{\text{poly}}$. We assume that $\mathcal S_t$ can discover the case where there is no valid solution, and returns $\bot$ at this time.
\SetKwFunction{Search}{Search}
\SetKwFunction{Get}{GetCandidatePrograms}
\SetKwFunction{G}{Get}
\SetKwFunction{Covered}{Covered}
\begin{algorithm}[t]
	\small
	\caption{The term solver $\mathcal T_{\text{poly}}$ in \mainname.}
	\label{alg:term-main}
	\LinesNumbered
	\KwIn{A PBE task $T$, an $(\alpha, \beta)$-Occam sovler $\mathcal S_t$ on the term space $\mathbb P_t$ and a constant $c$.}
	\KwOut{A set of programs $P$ that covers all examples.}
	\SetKwProg{Fn}{Function}{:}{}
	\Fn{\Get{\variable{examples}, $k$, $n_t$, $s$}}{
		$\variable{result} \gets \{\}$; \\
		\For{each $\variable{turn} \in [1, n_t \times k^{n_t}]$}{
			Uniformly and independently samples $n_t$ examples $e_1, \dots, e_{n_t}$ from $\variable{examples}$; \\
			$p \gets \mathcal S_t\left(e_1, \dots, e_{n_t}\right)$; \\
			\If {$p \neq \bot \wedge |\Covered{p, \variable{examples}}| \geq |\variable{examples}|/k \wedge \text{size}(p) \leq cs^{\alpha}n_t^{\beta}$}{$\variable{result} \gets \variable{result} \cup \{p\}$;}
		}
		\Return $\variable{result}$;
	}
	\Fn{\Search{\variable{examples}, $k$, $n_t$, $s$}}{
		\lIf {$|\variable{examples}| = 0$}{\Return $\{\}$}
		\lIf{(\variable{examples}, $k$) is visited before or $k = 0$} {\Return $\bot$}
		\For{each $p \in \Get{$\variable{examples}, k, n_t,s$}$}{
			\variable{searchResult} $\gets \Search{$\variable{examples} - \Covered{$p, \variable{examples}$}, k - 1, n_t, s$}$; \\
			\lIf{$\variable{searchResult} \neq \bot$}{\Return $\{p\} \cup \variable{searchResult}$}
		}
		\Return $\bot$;
	}
	$s \gets 1;$ \\
	\While {\variable{True}}{
		$n_l \gets cs^{\alpha/(1 - \beta)};\ \ k_l \gets cs\ln |T|;$ \\	
		\For {each $(k, n_t) \in [1, k_l] \times [1, n_l]$}{
			\If {$(k, n_t)$ has not been visited before} {
				$P \gets \Search{$T, k, n_t$}$; \\
				\lIf{$P \neq \bot$}{\Return $P$}
			}
		}
		$s \gets s + 1$;
	}
\end{algorithm}

The algorithm of $\mathcal T_{\text{poly}}$ is comprised of three parts. The first part implements the random sampling discussed previously, as function \Get{} (abbreviated as \G{}). \G{} takes four inputs: $\variable{examples}$ is a set of input-output examples, $k$ is an upper bound on the number of terms, $n_t$ is the number of examples provided to $\mathcal S_t$ and $s$ is an upper bound on the size of terms. Guided by Lemma \ref{lemma:sample}, \G{} returns a set of programs that covers at least $k^{-1}$ portion of examples. The body of \G{} is a repeated sampling process (Line 3). In each turn, $n_t$ examples are sampled (Line 4), and solver $\mathcal S_t$ is invoked to synthesize a program from these sampled examples (Line 5). \G{} collects all valid results (Line 7) and returns them to \Search{} (Line 10).

Note that \G{} only considers those programs of which the sizes are at most $cs^{\alpha}n_t^{\beta}$ (Line 6). This bound comes from the definition of Occam solvers (Definition \ref{definition:occam}): When all examples provided to $\mathcal S_t$ corresponds to the same target term and the size of this term is at most $s$, with a high probability, the term found by $\mathcal S_t$ will be no larger than $cs^{\alpha}n_t^{\beta}$. However, in other cases, the selected examples may happen to correspond to some other unwanted terms: At this time, the term found by $\mathcal S_{t}$ may be much larger than the target term. Therefore, $\mathcal T_{\text{poly}}$ sets a limitation on the size and rejects those programs that are too large. This bound can be safely replaced by any function that is polynomial to $s$ and sub-linear to $n_t$ without affecting $\mathcal T_{\text{poly}}$ to be an Occam term solver.

The second part implements the backtracking as function \Search{} (Lines 11-18). Given a set of examples $\variable{examples}$ and size limit $k$, \Search{} searches for a set of at most $k$ terms that covers all examples. \Search{} invokes function \G{} to obtain a set of possible terms (Line 14), and then recursively tries each of them until a valid term set is found (Lines 15-16).

The third part of $\mathcal T_{\text{poly}}$ selects proper values for $k,n_t$ and $s$ iteratively (Lines 19-29). In each turn, $\mathcal T_{\text{poly}}$ considers the case where the number of target terms and their sizes are all $O(s)$ and select proper values for $n_t$ and $k$ in the following ways:
\begin{itemize}
	\item By Theorem \ref{theorem:occam-error}, when the size of the target term is $O(s)$, $\mathcal S_t$ requires $O(s^{\alpha/(1- \beta)})$ examples to guarantee the accuracy. Therefore, $\mathcal T_{\text{poly}}$ sets the upper bound of $n_t$ to $cs^{\alpha/(1-\beta)}$.
	\item Also by Theorem \ref{theorem:occam-error}, when $n_t$ is set to $cs^{\alpha/(1-\beta)}$, the term syntehsized by $\mathcal S_t$ may still differ with the target term on a constant portion of inputs. As a result, $\mathcal T_{\text{poly}}$ may use $O(\ln n)$ times more terms to cover all examples in $T$. Therefore, $\mathcal T_{\text{poly}}$ sets the upper bound of $k$ to $cs \ln n$.
\end{itemize}
As the time cost of $\G{}$ and $\Search{}$ grows rapidly when $k$ and $n_t$ increases, $\mathcal T_{\text{poly}}$ tries all values of $k$ and $n_t$ from small to large (Lines 22-27), instead of directly using the largest possible $k$ and $n_t$. The iteration ends immediately when a valid term set is found (Line 25).
\subsection{Properties of $\mathcal T_{\text{poly}}$} \label{section:term-property}
In this section, we discuss the properties of $\mathcal T_{\text{poly}}$. As a meta solver, $\mathcal T_{\text{poly}}$ guarantees to be an Occam term solver when $\mathcal S_t$ is an Occam solver on the term space.

\begin{theorem}  \label{theorem:term-occam} $\mathcal S_t$ is an $(\alpha, \beta)$-Occam solver on $T(\mathcal F_C) \implies \mathcal T_{\text{poly}}$ is an $(\alpha' + 1, \beta')$-Occam term solver on $\mathcal F_C$ for any $\alpha' > \alpha, \beta < \beta' < 1$, where $T(\mathcal F_C)$ is defined as $\{(\mathbb P_t, \mathbb I')\ |\ ((\mathbb P_t, \mathbb P_c), \mathbb I) \in \mathcal F_C, \mathbb I' \subseteq \mathbb I\}$.
\end{theorem}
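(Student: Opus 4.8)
The plan is to verify the two conditions of Lemma~\ref{lemma:term-split} separately for $\mathcal T_{\text{poly}}$, namely a high-probability bound on the maximal size of any returned term (giving exponents $(\alpha, \beta')$ with $\beta' > \beta$ an arbitrarily small slack), and a high-probability bound on the \emph{number} of returned terms (giving exponents $(1, 0)$, since the number of terms should be essentially $O(\text{size}(p^*))$ up to logarithmic factors). Combining these via Lemma~\ref{lemma:term-split} yields $(\alpha + 1, \beta')$ after redistributing the slack, matching the claimed $(\alpha' + 1, \beta')$ for any $\alpha' > \alpha$. The size bound is the easy half: line~6 of \G{} explicitly rejects every program larger than $cs^{\alpha} n_t^{\beta}$, so conditioned on termination at a given value of the size guess $s$, the maximal returned term size is deterministically at most $c s^{\alpha} n_t^{\beta}$, and I only need to argue that the final accepted $s$ is, with high probability, within a constant factor of $\text{size}(p^*)$, and that $n_t \le n_l = c s^{\alpha/(1-\beta)}$ converts this into a bound of the desired polynomial-in-$\text{size}(p^*)$, sublinear-in-$n$ form.

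The heart of the argument is controlling the number of terms, and this is where I expect the main obstacle. I would fix $s$ to be a small constant multiple of $\text{size}(p^*)$ and show that at this scale the iteration $(k, n_t)$ almost certainly finds a valid cover before $k$ exceeds $k_l = cs\ln|T|$. The key input is Lemma~\ref{lemma:sample}: among the target terms $P^*$ covering the current example set, one term $t^*$ covers at least a $1/|P^*|$ fraction, so a single draw of $n_t$ examples all lands in $t^*$'s cover with probability at least $|P^*|^{-n_t} \ge k^{-n_t}$ whenever $k \ge |P^*|$. Because \G{} repeats the sampling $n_t \times k^{n_t}$ times (line~3), the probability that \emph{no} draw is entirely consistent with $t^*$ decays like $(1 - k^{-n_t})^{n_t k^{n_t}}$, which is at most $e^{-n_t}$; this is the mechanism that drives the per-level failure probability down geometrically. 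When a clean draw occurs and $n_t$ is set to the Occam sample complexity $cs^{\alpha/(1-\beta)}$ from Theorem~\ref{theorem:occam-error}, $\mathcal S_t$ returns a term of size $\le cs^{\alpha}n_t^{\beta}$ (so it passes the line~6 filter) that is $\epsilon$-close to $t^*$ for a chosen constant $\epsilon$, hence covers a constant fraction of $t^*$'s examples.

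I would then set up a potential/recursion argument on the \Search{} tree: each successful term removes a constant fraction of the remaining examples of some target term, so after $O(\ln n)$ removals attributable to each of the $|P^*|$ target terms, all examples are covered; this is exactly why the budget $k_l = cs\ln|T|$ (linear in $s \approx \text{size}(p^*)$, and hence in the number of target terms, times $\ln|T| = \ln n$) suffices, and it yields a term count of $O(\text{size}(p^*)\ln n)$, which fits condition~(2) with $\alpha_2 = 1$ and any $\beta_2 > 0$ absorbing the logarithm. The delicate bookkeeping is a union bound over all the sampling turns and recursion levels to ensure the \emph{joint} failure probability stays below the target $\epsilon$: I must confirm that the product of per-draw generalization failures (from Theorem~\ref{theorem:occam-error}) and per-level no-clean-sample failures (from the $e^{-n_t}$ estimate) telescopes into a bound of the form $\epsilon$, after reparametrizing the internal accuracy/confidence constants in terms of the external $\epsilon$ and folding the resulting $\ln(1/\epsilon)$ factors into the $\ln^{\gamma}(1/\epsilon)$ term permitted by Definition~\ref{def:size-preserve}.

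Finally I would handle the outer loop over $s$ (lines~20--29): I would show that once $s$ reaches a constant multiple of $\text{size}(p^*)$, the two conditions above guarantee success with probability $\ge 1 - \epsilon$, so the loop terminates at such an $s$ with high probability and never inflates the bounds beyond a constant factor. The main obstacle, as noted, is the probabilistic accounting tying together Lemma~\ref{lemma:sample}, the Occam generalization guarantee of Theorem~\ref{theorem:occam-error}, and the geometric decay from the repeated sampling, so that all slack parameters $\alpha', \beta'$ and the constants $c, \gamma$ can be chosen uniformly and independently of the domain, the target, and $n$.
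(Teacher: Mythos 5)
Your route is structurally the paper's own: split the Occam-term-solver condition into a maximal-size bound and a term-count bound (Lemma~\ref{lemma:term-split}), use Lemma~\ref{lemma:sample} together with the $n_t \times k^{n_t}$ sampling rounds to get a draw lying entirely inside the cover of the most-covering target term with failure probability at most $(1-k^{-n_t})^{n_t k^{n_t}} \leq e^{-n_t}$, invoke Theorem~\ref{theorem:occam-error} to argue the synthesized term covers a constant fraction (one half, in the paper) of that target term's remaining examples, and deduce that the recursion chain of \texttt{Search} has length $O(|P^*|\ln|T|)$, within the budget $k_l = cs\ln|T|$. All of these mechanisms match the paper's proof.

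However, there is a genuine gap in your plan to ``fix $s$ to be a small constant multiple of $\text{size}(p^*)$'' and establish success with probability $1-\epsilon$ at that scale. First, the union bound over a recursion chain of length $\Theta(s\ln n)$ forces each invocation's failure probability (clean draw, size filter, half-coverage) to be $O\bigl(1/(s\ln n)\bigr)$; making $e^{-n_t}$ that small requires $n_t$ to grow at least like $\ln(s\ln n)$, and since $n_t \leq n_l = cs^{\alpha/(1-\beta)}$ depends only on $s$, no constant multiple of $\text{size}(p^*)$ suffices uniformly in $n$: for fixed $p^*$ and $n \to \infty$ the accounting breaks. The paper therefore takes the threshold $s \geq c_1\,\text{size}(p^*)^{\alpha'}n^{\beta'}$ with strict slacks, which is precisely why the theorem delivers $(\alpha'+1,\beta')$ with $\alpha' > \alpha$ and $\beta' > \beta$ strict rather than $(\alpha+1,\beta)$. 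Second, the algorithm's budgets $n_t$ and $k$ depend only on $s$ and $n$, never on $\epsilon$, so no single value of $s$ can yield failure probability $\leq \epsilon$; in the paper, once $s$ exceeds the threshold each outer iteration succeeds with probability at least $1/4$, and the $\epsilon$-dependence enters by letting the loop run $O(\ln(1/\epsilon))$ additional increments of $s$, inflating $s^*$ additively by $c_4\ln(1/\epsilon)$ --- this geometric amplification across outer iterations, not a per-$s$ union bound, is what produces the $\ln^{\gamma}(1/\epsilon)$ factor permitted by Definition~\ref{def:size-preserve}. Your telescoping-union-bound step, as stated at a fixed $s$, cannot be realized and must be rerouted through the outer loop.
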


Then, we discuss the time cost of $\mathcal T_{\text{poly}}$. With a high probability, $\mathcal T_{\text{poly}}$ invokes $\Search{}$ only polynomial times, but the time cost of $\Search{}$ may not be polynomial. By Algorithm \ref{alg:term-main}, an invocation of \Search{} of depth $i$ on the recursion tree samples $n_t(k-i)^{n_t}$ times. In the worst case, $\mathcal S_t$ successfully synthesizes programs for all these samples, and the results are all different. At this time, \Search{} will recurse into $n_t(k-i)^{n_t}$ different branches. Therefore, for each $n_t, k$, domain solver $\mathcal S_t$ will be invoked $n_t^k(k!)^{n_t}$ times in the worst case.

However, in practice, $\mathcal T_{\text{poly}}$ is usually much faster than the worst-case because:
\begin{itemize}
	\item The domain solver $\mathcal S_t$ usually fails when the random examples correspond to different target terms, as the expressive ability of term domain $\mathbb P_t$ is usually limited.
	\item For those incorrect terms that happen to be synthesized, they seldom satisfy the requirement on the size and the number of covered examples (Line 6 in Algorithm \ref{alg:term-main}).
\end{itemize}	
In the best case where \G{} never returns a term that is not used by the target program, $\mathcal S_t$ will be invoked at most $n_t2^kk^{n_t}$ times: Such a bound is much smaller than the worst case.
	%!TEX root = ../CLIA.tex
\section{Unifier} \label{section:unifier}
%In this section, we introduce the unifier $\mathcal U_{\text{poly}}$ used in \mainname. Guided by Theorem \ref{theorem:gen-stun}, $\mathcal U_{\text{poly}}$ is designed to be an Occam unifier.
\subsection{Overview}\label{section:unifier-over}
$\mathcal U_{\text{poly}}$ unifies terms into a \textit{decision list}, a structure proposed by \citet{DBLP:journals/ml/Rivest87} for compactly representing decision procedures. $\mathcal U_{\text{poly}}$ unifies a term set $P = \{p_1, \dots, p_m\}$ into the following form:
$$
\texttt{if}\ (c_1)\ \texttt{then}\ p_1\ \texttt{else}\ \texttt{if}\ (c_2)\ \texttt{then}\ p_2\ \texttt{else} \dots \texttt{if}\  (c_{m-1})\ \texttt{then}\ p_{m-1}\ \texttt{else}\ p_m
$$
where $c_1, \dots, c_{m-1}$ belong to $\text{DNF}(\mathbb P_c)$, the disjuctive normal form comprised of conditions in $\mathbb P_c$. $\text{DNF}(\mathbb P_c)$ is defined together with another two sets $\text{L}(\mathbb P_c)$ and $\text{CL}(\mathbb P_c)$, where the set of literals $\text{L}(\mathbb P_c)$ includes conditions in $\mathbb P_c$ and their negations, the set of clauses $\text{CL}(\mathbb P_c)$ includes the conjunctions of subsets of $\text{L}(\mathbb P_c)$, and the set of DNF formulas $\text{DNF}(\mathbb P_c)$ includes disjunctions of subsets of $\text{CL}(\mathbb P_c)$.
\begin{comment}
\begin{enumerate}
	\item The set of literals $\text{L}(\mathbb P_c)$ is defined as $\{\texttt{not } p\ |\ p \in \mathbb P_c\} \cup \mathbb P_c$.
	\item The set of clauses $\text{CL}(\mathbb P_c)$ is defined as the smallest set such that:
	$$\text{CL}(\mathbb P_c) = \{\texttt{true}\} \cup \{l\ \texttt{and}\ c\ |\ l \in \text{L}(\mathbb P_c), c \in \text{CL}(\mathbb P_c)\}$$
	\item The set of DNF formulas $\text{DNF}(\mathbb P_c)$ is defined as the smallest set such that:
	$$\text{DNF}(\mathbb P_c) = \{\texttt{false}\} \cup \{c\ \texttt{or}\ d\ |\ c \in \text{CL}(\mathbb P_c), d \in \text{DNF}(\mathbb P_c)\}$$
\end{enumerate}
\end{comment}

We use notation $(\mathbb P_t, \mathbb P_c)_{\text{DL}}$ to denote the set of decision lists where \texttt{if}-terms and \texttt{if}-conditions are from $\mathbb P_t$ and $\text{DNF}(\mathbb P_c)$ respectively. In the following lemma, we show that $(\mathbb P_t, \mathbb P_c)_{\text{DL}}$ is a suitable normal from for designing an Occam unifier, because for any program in $(\mathbb P_t, \mathbb P_c)$, there is always a semantically equivalent program in $(\mathbb P_t, \mathbb P_c)_{\text{DL}}$ with a close size.

\begin{lemma} \label{lemma:decision-tree-size}For any conditional domain $\mathbb D$ and any program $p \in (\mathbb P_t, \mathbb P_c)$, there exists a program $p' \in (\mathbb P_t, \mathbb P_c)_{\text{DL}}$ such that (1) $p'$ is semantically equivalent to $p$ on $\mathbb I$, and (2) $\variable{size}(p') \leq 2 \variable{size}(p)^2$.
\end{lemma}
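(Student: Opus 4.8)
The plan is to view $p$ as a full binary decision tree and flatten it into a decision list whose conditions are the path clauses of the tree's leaves. Since $p \in (\mathbb P_t, \mathbb P_c)$ is either a term or of the form $\texttt{if}\ c\ \texttt{then}\ p_1\ \texttt{else}\ p_2$, it corresponds to a full binary tree with some number $m$ of leaves, labelled by terms $t_1, \dots, t_m \in \mathbb P_t$, and $m-1$ internal nodes, labelled by conditions from $\mathbb P_c$. For each leaf $\ell_i$ I would define its \emph{path clause} $C_i \in \text{CL}(\mathbb P_c)$ as the conjunction of the literals met on the root-to-$\ell_i$ path, taking a node condition $c$ positively when the path descends into its \texttt{then}-branch and as $\texttt{not}\ c$ when it descends into the \texttt{else}-branch. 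I then set $p'$ to be the decision list $\texttt{if}\ (C_1)\ \texttt{then}\ t_1\ \texttt{else}\dots\texttt{if}\ (C_{m-1})\ \texttt{then}\ t_{m-1}\ \texttt{else}\ t_m$. Each $C_i$ is a single clause, hence a (degenerate) member of $\text{DNF}(\mathbb P_c)$, so $p' \in (\mathbb P_t, \mathbb P_c)_{\text{DL}}$ by construction.

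For semantic equivalence (part 1), the key observation is that the path clauses are pairwise mutually exclusive and jointly exhaustive: on any input $I \in \mathbb I$, evaluating the tree follows a unique root-to-leaf path, and $C_i$ is satisfied by $I$ if and only if $\ell_i$ is exactly the leaf that path reaches. Consequently at most one $C_i$ holds, so the \emph{first-match} semantics of the decision list agrees with the unique matching leaf; moreover dropping the final condition is sound, because whenever $C_1, \dots, C_{m-1}$ all fail the reached leaf must be $\ell_m$. This gives $\sem{p'}(I) = \sem{p}(I)$ for every $I \in \mathbb I$, formalized by a short induction on the tree structure.

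For the size bound (part 2) I would argue by double counting. The leaf terms are preserved exactly—the multiset $\{t_1, \dots, t_m\}$ is the same in $p$ and $p'$—so their total contribution is at most $\text{size}(p)$. For the conditions, each internal node condition $c_j$ contributes exactly one literal to the path clause $C_i$ of every leaf lying in the subtree rooted at node $j$, and to no other clause. Writing $n_j$ for the number of leaves in that subtree, the summation swap yields $\sum_i \text{size}(C_i) \le \sum_j n_j\big(\text{size}(c_j) + O(1)\big)$, where the $O(1)$ absorbs the negation and conjunction markers. Bounding $n_j \le m \le \text{size}(p)$ and $\sum_j \text{size}(c_j) \le \text{size}(p)$ then gives $\sum_i \text{size}(C_i) = O\big(\text{size}(p)^2\big)$, and the decision-list skeleton adds only $O(m) = O(\text{size}(p))$ further rules, so $\text{size}(p') = O\big(\text{size}(p)^2\big)$.

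I expect the crux to be sharpening these $O(\cdot)$ estimates into the exact constant $2$. The hidden constant has two sources that must be tracked precisely: the per-literal overhead of the \texttt{not} and \texttt{and} markers, summed over the $\sum_j n_j = O(m^2)$ literals of the list, and the per-row \texttt{if-then-else} skeleton. The feature of Definition \ref{def:size} that makes $2$ attainable is that the target bound is stated against $\text{size}(p)^2$, which carries the \emph{squared} weight $\lceil\log_2 N\rceil^2$, whereas $\text{size}(p')$ carries only a single such weight and the decision-list grammar adds only $O(1)$ productions (for \texttt{not}, \texttt{and}, \texttt{or}, and list chaining), so its rule count $N'$ satisfies $N' = N + O(1)$. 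Exploiting this surplus logarithmic factor, together with $m \le \text{size}(p)$ and $\sum_j \text{size}(c_j) \le \text{size}(p)$, should bring the ratio below $2$ for all but the smallest grammars; the remaining corner cases, where $\lceil\log_2 N\rceil$ is as small as $1$ and the cushion disappears, I would verify directly using $\text{size}(p) \ge 1$. Carrying out this accounting exactly—rather than settling for an unspecified $O(\text{size}(p)^2)$—is the principal difficulty.
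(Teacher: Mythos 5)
Your proposal is correct and takes essentially the same route as the paper's proof: flatten the \texttt{if}-tree into a decision list whose conditions are assembled from root-to-leaf path clauses (conjunctions of the node conditions or their negations), establish equivalence via mutual exclusivity and exhaustiveness of the paths, and bound the size by a summation swap using $m \le \text{size}(p)$ and $\sum_j \text{size}(c_j) \le \text{size}(p)$. The only (harmless) difference is that you keep one row per leaf with a single-clause condition while the paper merges all occurrences of the same term into one row whose condition is the disjunction of their path clauses, and your constant-tracking concern resolves exactly as you anticipate: $\text{size}(p)^2$ carries two factors of $\lceil\log_2 N\rceil$ while $\text{size}(p')$ carries only one, which is precisely the slack the paper's final chain of inequalities exploits (via $m \le \text{size}(p)$ and $2n_c\lceil\log_2 N\rceil \le \text{size}(p)$, where $n_c$ is the number of \texttt{if}-conditions in $p$).
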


\SetKwFunction{Covered}{Covered}
\SetKwFunction{SC}{SynthesizeDNF}
\SetKwFunction{AP}{Append}
\SetKwFunction{EP}{Expand}
\begin{algorithm}[t]
	\small
	\caption{The framework of $\mathcal U_{\text{poly}}$.}
	\label{alg:uni-over}
	\LinesNumbered
	\KwIn{A term set $P = \{p_1, \dots, p_m\}$, a PBE task $T$ and a condition solver $\mathcal C$.}
	\KwOut{A program in $(P, \mathbb P_c)$ satisying all examples.}
	\SetKwProg{Fn}{Function}{:}{}
	$\variable{conditionList} \gets \{\}$; \\
	\For {$i \gets 1$; $i < m$; $i \gets i + 1$}{
		$\variable{examples} \gets \{(I, \texttt{false})\ |\ (I, O) \in T - \Covered{$p_i, T$}\}$; \\
		$\variable{examples} \gets \left \{(I, \texttt{true})\ |\ (I, O) \in  \Covered{$p_i, T$} - \bigcup_{j ={i+1}}^{m} \Covered{$p_j, T$} \right \}$; \\
		$c_i \gets \mathcal C(\variable{examples})$; \\
		$\variable{conditionList}.\AP{$c_i$};\ \ T \gets \{(I, O) \in T\ |\ \neg \sem{c_i}(I)\}$; \\
	}
	$\variable{result} \gets p_m$; \\
	\For {$i \gets m - 1$; $i > 0$; $i \gets i - 1$}{
		$\variable{result} \gets \left(\texttt{if}\ \variable{conditionList}_i\ \texttt{then}\ p_i\ \texttt{else}\ \variable{result}\right)$;
	}
	
	\Return \variable{result};
\end{algorithm}

$\mathcal U_{\text{poly}}$ decomposes the unification task into $m-1$ PBE tasks for $c_1, \dots, c_{m-1}$ respectively. Algorithm \ref{alg:uni-over} shows the framework of $\mathcal U_{\text{poly}}$, which synthesizes conditions in order (Lines 2-7). For each term $p_i$ and each remaining example $e=(I, O)$, there are three possible cases:
\begin{itemize} 
	\item If $p_i$ is not consistent with $e$, the value of \texttt{if}-condition $c_i$ must be \texttt{False} on input $I$ (Line 3).
	\item If $p_i$ is the only program in $p_i, \dots, p_n$ that is consistent with $e$, the value of $c_i$ must be \texttt{True} on input $I$ (Line 4). 
	\item Otherwise, the value of $c_i$ does not matter, as $p_i$ is not the last choice. Therefore, $\mathcal U_{\text{poly}}$ ignores this example: If the synthesized $c_i$ is \texttt{false} on input $I$, $e$ will be left to subsequent terms.
\end{itemize}
In this way, $\mathcal U_{\text{poly}}$ obtains a PBE task for $c_i$, and it invokes a DNF solver $\mathcal C$ to solve it (Line 5). Then, $\mathcal U_{\text{poly}}$ excludes all examples covered by $c_i$ (Line 6) and moves to the next term.  At last, $\mathcal U_{\text{poly}}$ unifies all terms and conditions into a complete program (Lines 8-11).

Just like $\mathcal T_{\text{poly}}$, unifier $\mathcal U_{\text{poly}}$ is an Occam unifier when $\mathcal C$ is an Occam solver on $\text{DNF}(\mathbb P_c)$.

\begin{lemma} \label{lemma:unifier-gen} $\mathcal C$ is an $(\alpha, \beta)$-Occam solver on $\text{DNF}(\mathcal F_C) \Rightarrow \mathcal U_{\text{poly}}$ is an $(\alpha', \beta)$-Occam unifier on $\mathcal F_C$ for any $\alpha' > 4\alpha$, where $\text{DNF}(\mathcal F_C)$ is defined as $\{(\text{DNF}(\mathbb P_c), \mathbb I')\ |\ ((\mathbb P_t, \mathbb P_c), \mathbb I) \in \mathcal F_C, \mathbb I' \subseteq \mathbb I\}$. 
	
	$\mathcal C$ is a deterministic $(\alpha, \beta)-$Occam solver on $\text{DNF}(\mathcal F_C) \Rightarrow \mathcal U_{\text{poly}}$ is a $(4\alpha, \beta)-$Occam unifier on $\mathcal F_C$.
\end{lemma}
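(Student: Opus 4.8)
The plan is to follow the decomposition in Algorithm~\ref{alg:uni-over} and charge the size of the output decision list to its pieces. Since $\mathcal U_{\text{poly}}(P)(T)$ is a decision list built from the $m$ terms of $P$ and the $m-1$ synthesized conditions $c_1,\dots,c_{m-1}$, its size obeys $\text{size}(\mathcal U_{\text{poly}}(P)(T)) \le \text{tsize}(P) + \sum_{i=1}^{m-1}\text{size}(c_i) + O(m)$, where the $O(m)$ term covers the nested \code{if-then-else} skeleton. Because $\text{tsize}(P)\ge m$ and $\text{tsize}(P)\le M:=\max(\text{size}(p^*),\text{tsize}(P))$, the first and last summands already lie within the target bound of Definition~\ref{def:poly-unifier}, so the whole proof reduces to controlling $\sum_i \text{size}(c_i)$ via the Occam guarantee on $\mathcal C$.

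To invoke that guarantee I first exhibit, for each subtask $i$, a genuine target condition $h_i\in\text{DNF}(\mathbb P_c)$ that is consistent with the examples $\mathcal C$ receives and whose size is polynomial in $M$. The natural choice is the \emph{agreement predicate} $h_i := \big[\sem{p^*}(I)=\sem{p_i}(I)\big]$. It is consistent no matter which examples survived the earlier rounds: on a positive example $p_i$ covers the input, so $\sem{p_i}(I)=O=\sem{p^*}(I)$ and $h_i$ is \code{true}; on a negative example $\sem{p_i}(I)\ne O=\sem{p^*}(I)$, so $h_i$ is \code{false}. To place $h_i$ in $\text{DNF}(\mathbb P_c)$ with small size, read $p^*$ as a tree: each root-to-leaf path is a clause over $\text{L}(\mathbb P_c)$, and by if-closedness (Definition~\ref{def:if-closed}) the equality test $\sem{q_\ell}(I)=\sem{p_i}(I)$ at the leaf term $q_\ell$ is a single literal $e_{\ell,i}$, so $h_i=\bigvee_\ell(\text{path}_\ell \wedge e_{\ell,i})$ is a DNF with one clause per leaf and clause length bounded by the tree depth plus one; summing yields $\text{size}(h_i)=O(M^2)$, and one may alternatively obtain the same polynomial bound by applying Lemma~\ref{lemma:decision-tree-size} to the program computing $h_i$. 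The existence of $h_i$ also certifies that every subtask is realizable, so each call to $\mathcal C$ returns a consistent condition and the assembled program is valid.

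With the targets fixed I apply the Occam property of $\mathcal C$ (Definition~\ref{definition:occam}) subtask by subtask. Since $h_i$ is independent of the solver's randomness, conditioning on the (adaptively determined) example multiset of subtask $i$ gives, with probability at least $1-\epsilon_i$, the bound $\text{size}(c_i)\le c\,\text{size}(h_i)^{\alpha} n_i^{\beta}\ln^{\gamma}(1/\epsilon_i) = O\big(M^{2\alpha} n^{\beta}\ln^{\gamma}(1/\epsilon_i)\big)$, using $n_i\le n$. In the randomized case I set $\epsilon_i=\epsilon/(m-1)$ and take a union bound, so all $m-1$ size bounds hold simultaneously with probability at least $1-\epsilon$; summing them contributes a factor $m-1\le M$ and a $\ln^{\gamma}((m-1)/\epsilon)$ factor whose $\ln m$ part is poly-logarithmic in $M$. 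Folding the extra factor $M$ and the poly-log into the exponent yields $\text{size}(\mathcal U_{\text{poly}}(P)(T)) = O\big(M^{\alpha'} n^{\beta}\ln^{\gamma'}(1/\epsilon)\big)$ for any $\alpha'$ strictly larger than $2\alpha+1$, and in particular for any $\alpha'>4\alpha$, establishing the first implication. In the deterministic case the per-subtask bound holds surely, so no union bound (hence no $\ln m$ factor) is needed and the sum is at most $c\,M^{2\alpha+1}n^{\beta}\le c\,M^{4\alpha}n^{\beta}$, since $2\alpha+1\le 4\alpha$ for $\alpha\ge 1$, giving the deterministic $(4\alpha,\beta)$ bound.

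I expect the main obstacle to be the construction in the second paragraph: producing a single small DNF target $h_i$ that is provably consistent with subtask $i$ even though $\mathcal U_{\text{poly}}$ fixes its own term order $p_1,\dots,p_m$, which need not match the branching structure of $p^*$, and even though the surviving example set is itself random and shaped by the earlier conditions. The agreement-predicate construction sidesteps the ordering mismatch, but its size hinges on the equality literals $e_{\ell,i}$ furnished by if-closedness; keeping $\text{size}(h_i)$ polynomial in $M$ requires that these equality tests be no larger than polynomial in the sizes of the two compared terms (as holds for any domain with an equality operator). Verifying this, together with the probabilistic care needed to apply the Occam bound conditionally on adaptively chosen examples, is where the real work lies.
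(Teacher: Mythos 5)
Your proof is correct, and it departs from the paper's at exactly the step you identified as the crux: the choice of small DNF targets for the per-term subtasks. The paper reuses the path-condition construction from Lemma~\ref{lemma:decision-tree-size} to obtain conditions $c_1^*, \dots, c_{m-1}^*$ with $\sum_i \text{size}(c_i^*) = O(s^2)$, then confronts the same ordering/adaptivity mismatch you anticipated --- $c_i^*$ can be \texttt{false} on a positive example that $p^*$ routes to an \emph{earlier} term $t_j$ agreeing with $t_i$ --- and repairs it by setting $c_i' \coloneqq c_i^*\ \texttt{or}\ \left(\texttt{or}_{j=1}^{i-1}\ \texttt{or}_{k=1}^{n_j}\ \big(c_{j,k}^*\ \texttt{and}\ (t_i = t_j)\big)\right)$, where $(t_i = t_j)$ is the if-closedness equality condition; duplicating each clause up to $m-1$ times yields $\sum_i \text{size}(c_i') = O(s^4)$, whence the $4\alpha$ exponent after $\sum_i s_i^{\alpha} \leq (\sum_i s_i)^{\alpha}$ and a union bound with $\epsilon_i = \epsilon/(m-1)$. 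Your agreement predicate $h_i = \big[\sem{p^*}(I) = \sem{p_i}(I)\big]$ subsumes this repair uniformly: it is consistent pointwise with every possible positive/negative labeling at stage $i$, independent of the term order and of which examples survive the earlier rounds, and your direct DNF expression (one clause per root-to-leaf path of $p^*$, closed by a single equality literal) gives a \emph{per-subtask} target of size $O(M^2)$ rather than a total of $O(s^4)$, so you land on exponent $2\alpha + 1 \leq 4\alpha$ (valid since $\alpha \geq 1$) --- a strictly stronger bound that implies the stated lemma in both the randomized and deterministic cases. Two further points in your favor: both proofs silently need the equality condition guaranteed by Definition~\ref{def:if-closed} to have size polynomial in the compared terms (the definition only asserts existence), and you flag this assumption explicitly while the paper's $O(s^4)$ accounting leaves it implicit; likewise, your remark about conditioning on the adaptively determined example multiset before invoking the Occam guarantee is a legitimate subtlety that the paper's proof passes over without comment.
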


By Lemma \ref{lemma:unifier-gen}, the only problem remaining is to design an Occam solver for $\text{DNF}(\mathcal F_C)$. We shall gradually build such a condition solver in the following two subsections. 

\subsection{Condition Synthesis for Clauses} \label{section:uni-simple}
For the sake of simplicity, we start by introducing some useful notations:
\begin{itemize}
	\item We regard a DNF formula $d$ as a set of clauses and regard a clause $c$ as a set of literals. We use notation $p_c(c)$ and $p_d(d)$ to represent their corresponding program respectively. 
	\item For an input space $\mathbb I$ and a condition $p$, we use $P(\mathbb I, p)$ and $N(\mathbb I, p)$ to denote the set of inputs where $p$ is evaluated to \texttt{true} and \texttt{false} respectively.%, i.e.:
	%$$P(\mathbb I, p) \coloneqq \{I \in \mathbb I\ |\ \sem{p}(I) = \texttt{true}\}\ \ \ N(\mathbb I, p) \coloneqq \{I \in \mathbb I\ |\ \sem{p}(I) = \texttt{false}\}$$
	\item For a PBE task $T$, we use $\mathbb I(T)$, $\mathbb I_P(T)$ and $\mathbb I_N(T)$ to denote the inputs of examples, positive examples and negative examples in $T$ respectively, i.e.:
	$$\mathbb I(T) \coloneqq \{I\ |\ (I, O) \in T\}\ \ \ \mathbb I_P(T) \coloneqq \{I\ |\ (I, \texttt{true}) \in T\}\ \ \ \mathbb I_N(T) \coloneqq \{I\ |\ (I, \texttt{false}) \in T\}$$
\end{itemize}
For a PBE task $T$ on domain $(\text{DNF}(\mathbb P_c), \mathbb I)$, a valid condition $p$ should satisfy the following two conditions: (1) $p$ takes \texttt{true} on all positive examples in $T$, i.e., $\mathbb I_P(T) \subseteq P(\mathbb I(T), p)$; (2) $p$ takes \texttt{false} on all negative examples in $T$, i.e., $\mathbb I_N(T) \subseteq N(\mathbb I(T), p)$. In this section, we build solver $\mathcal C_{\text{CL}}$ for the subproblem where the target condition is assumed to be a single clause. %The relation between $\mathcal C$ and $\mathcal C_{\text{CL}}$ is similiar with that of domain solver $\mathcal S_t$ and term solver $\mathcal T_{\text{poly}}$: $\mathcal C$ is built upon $\mathcal C_{\text{CL}}$ and may invoke $\mathcal C_{\text{CL}}$ multiple times. Therefore, we would like $\mathcal C_{\text{CL}}$ to be as efficient as possible.

\SetKwFunction{SC}{SimplifyClause}
\begin{algorithm}[t]
	\small
	\caption{The pseudo code of clause solver $\mathcal C_{\text{CL}}$.}
	\label{alg:uni-over}
	\LinesNumbered
	\KwIn{A condition space $\mathbb P_c$ and a PBE task $T$.}
	\KwOut{A program in $\text{CL}(\mathbb P_c)$ satisfying all examples or $\bot$.}
	\SetKwProg{Fn}{Function}{:}{}
	\Fn{\SC{$c_u, T$}}{
		$\variable{remNeg} \gets \mathbb I_N(T);\ \ c^* \gets \emptyset;$ \\
		\While {$\variable{remNeg} \neq \emptyset$}{
			$l^* \gets \mathop{\arg\max}_{l \in c_u} \big(|N(\mathbb I(T), l) \cap \variable{remNeg}| / \variable{size}(l)\big)$; \\
			$c^* \gets c^* \cup {l^*}; \ \ \variable{remNeg} \gets \variable{remNeg} - N(\mathbb I(T), l)$;
		}
		\Return $c^*$.
	}
	$c_u \gets \{l \in \text{L}(\mathbb P_c)\ |\ \mathbb I_P(T) \subseteq P(\mathbb I(T), l)\}$; \\
	\lIf {$\mathbb I_N(T) \not \subseteq N(\mathbb I(T), c_u) $}{\Return $\bot$}
	\Return $p_c(\SC{$c_u, T$})$; 
\end{algorithm} 

The pseudo-code of $\mathcal C_{\text{CL}}$ is shown as Algorithm \ref{alg:uni-over}. $\mathcal C_{\text{CL}}$ starts with the first condition of valid clauses: $\mathbb I_P(T) \subseteq P(\mathbb I(T), c)$. By the semantics of operator $\texttt{and}$, for any clause $c$ and any literal $l \in c$, $P(\mathbb I(T), c)$ must be a subset of $P(\mathbb I(T), l)$. Therefore, only those literals that covers all positive examples can be used in the result. $\mathcal C_{\text{CL}}$ collects all these literals as clause $c_u$ (Line 8). Then the subsets of $c_u$ are exactly those clauses satisfying the first condition.

The remaining task is to find a subset $c^*$ of $c_u$ that satisfies the second condition, i.e., $\mathbb I_N(T) \subseteq N(\mathbb I(T), c^*) = \cup_{l \in c^*} N(\mathbb I(T), l)$. Meanwhile, to make $\mathcal C_{\text{CL}}$ an Occam solver, the size of $p_c(c^*)$ should be as small as possible. This problem is an instance of \textit{weighted set covering}: $\mathcal C_{\text{CL}}$ needs to select some sets from $\{N(\mathbb I(T), l)\ |\ l \in c^*\}$ to cover $\mathbb I_N(T)$. This problem is known to be difficult: \citet{DBLP:journals/eccc/Moshkovitz11} proves that for any $\epsilon > 0$, there is no polynomial-time algorithm that always finds a solution at most $(1-\epsilon)\ln n$ times wrose than the optimal, unless $\mathsf {NP} = \mathsf P$, where $n$ is $|\mathbb I_N(T)|$ in our case. 

In our implementation, we use a standard greedy algorithm for weighted set covering, which runs in polynomial time and always finds a solution at most $(\ln n + 1)$ times worse than the optimal (Lines 1-7). $\mathcal C_{\text{CL}}$ maintains set $\variable{remNeg}$, representing the set of negative examples that have not been covered yet (Line 3). In each turn, $\mathcal C_{\text{CL}}$ selects the literal $l^*$ which covers the most uncovered negative examples in each unit of size (Line 5) and includes $l^*$ into the result (Line 6).

The size of the clause found by $\mathcal C_{\text{CL}}$ is bounded, as shown in Lemma \ref{lemma:cl-size}. Therefore, $\mathcal C_{\text{CL}}$ is an Occam solver, as shown in Corollary \ref{corollary:cl-size}. The time complexity of $\mathcal C_{\text{CL}}$ is polynomial to $|\mathbb P_c|$ and $|T|$, and thus $\mathcal C_{\text{CL}}$ is efficient when $\mathbb P_c$ is not large.
 
\begin{lemma} \label{lemma:cl-size} Given condition space $\mathbb P_c$ and PBE task $T$, let $c^*$ be the smallest valid clause and $c$ be the clause found by $\mathcal C_{\text{CL}}$. Then $\text{size}(p_c(c)) < 2\text{size}(p_c(c^*)) (\ln |T| + 1)$.
\end{lemma}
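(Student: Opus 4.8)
The plan is to observe that the loop \SC{} is exactly the classical greedy algorithm for \emph{weighted set covering}, and then to transfer its approximation guarantee from the set-cover objective to the program-size objective $\text{size}(p_c(\cdot))$. Writing $w(c) \coloneqq \sum_{l \in c}\text{size}(l)$ for the total literal weight of a clause $c$, I would first fix the set-cover instance: the universe is the set of negative example inputs $\mathbb I_N(T)$, the candidate sets are $S_l \coloneqq N(\mathbb I(T), l) \cap \mathbb I_N(T)$ for $l \in c_u$, and the weight of $S_l$ is $\text{size}(l)$. This is precisely what Lines 1--7 solve, since the selection rule on Line 5 maximizes coverage per unit of size and $\variable{remNeg}$ tracks the uncovered universe. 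The structural fact to record is that \emph{every} valid clause $c'$ uses only literals from $c_u$: because $p_c(c')$ is a conjunction, it is \texttt{true} on a positive input only if every literal is, so validity forces $\mathbb I_P(T) \subseteq P(\mathbb I(T), l)$ for each $l \in c'$, i.e. $l \in c_u$; and validity on negatives means $\{S_l : l \in c'\}$ covers $\mathbb I_N(T)$. Hence the smallest valid clause $c^*$ is a feasible cover, so the instance optimum $\text{OPT}_w$ obeys $\text{OPT}_w \le w(c^*)$.

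Next I would invoke the standard greedy guarantee for weighted set cover~\cite{DBLP:journals/mor/Chvatal79}: the returned cover $c$ satisfies $w(c) \le H_{|\mathbb I_N(T)|}\cdot \text{OPT}_w$, where $H_m = \sum_{i=1}^m 1/i \le \ln m + 1$. Combining this with the feasibility bound of the previous step and with $|\mathbb I_N(T)| \le |T|$ yields
$$
w(c) \;\le\; H_{|\mathbb I_N(T)|}\,\text{OPT}_w \;\le\; (\ln |T| + 1)\, w(c^*).
$$

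The remaining and most delicate step is to pass from the literal-weight objective $w(\cdot)$ to the actual program size $\text{size}(p_c(\cdot))$, which is where the constant $2$ is created. Expanding Definition~\ref{def:size} on the clause grammar, a clause of $k$ literals $l_1,\dots,l_k$ is derived using $\sum_i |l_i|$ rules for the literals together with the $k-1$ conjunction rules joining them, so $\text{size}(p_c(c)) = w(c) + (k-1)\lceil \log_2 N\rceil$. Since each literal costs at least one rule, $\text{size}(l_i) \ge \lceil\log_2 N\rceil$, whence $(k-1)\lceil\log_2 N\rceil < k\lceil\log_2 N\rceil \le w(c)$ and therefore $\text{size}(p_c(c)) < 2\,w(c)$; the same expansion gives the trivial lower bound $w(c^*) \le \text{size}(p_c(c^*))$. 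Chaining the three inequalities then produces
$$
\text{size}(p_c(c)) \;<\; 2\,w(c) \;\le\; 2(\ln|T|+1)\,w(c^*) \;\le\; 2\,\text{size}(p_c(c^*))\,(\ln|T|+1),
$$
as required.

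I expect the bookkeeping of the grammar encoding to be the main obstacle: one must verify that the conjunction overhead is strictly dominated by $w(c)$ so that the constant is exactly $2$ and the inequality is strict, rather than a larger constant or a non-strict bound. A secondary point worth checking is the degenerate case $\mathbb I_N(T)=\emptyset$, where the while loop makes no iterations and $c$ is the empty clause; here $c^*$ is empty as well and the strict inequality still holds because $\ln|T|+1 > \tfrac12$.
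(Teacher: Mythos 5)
Your proposal is correct and takes essentially the same route as the paper's proof: it likewise reads Lines 1--7 as Chv\'atal's greedy algorithm for weighted set covering with literal sizes as weights, bounds the optimum by the feasibility of $c^*$ (both $c$ and $c^*$ being drawn from the literals covering all positive examples), and absorbs the $(|c|-1)\lceil\log_2 N\rceil$ conjunction overhead into the factor of $2$ using $\text{size}(l) \geq \lceil\log_2 N\rceil$. Your explicit check of the degenerate case $\mathbb I_N(T) = \emptyset$ is a minor refinement that the paper's proof passes over silently.
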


\begin{corollary}\label{corollary:cl-size}
	 For any $0 < \beta < 1$, $\mathcal C_{\text{CL}}$ is an $(1, \beta)$-Occam solver on all possible clause domains.
\end{corollary}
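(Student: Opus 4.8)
The plan is to exploit the fact that $\mathcal C_{\text{CL}}$ is a \emph{deterministic} algorithm, so the probabilistic guarantee in Definition \ref{definition:occam} collapses to a purely worst-case size bound: it suffices to exhibit constants $c, \gamma > 0$ for which $\text{size}(\mathcal C_{\text{CL}}(T(p^*, I_1, \dots, I_n))) \le c\,\text{size}(p^*)\,n^{\beta}\ln^{\gamma}(1/\epsilon)$ holds for \emph{every} clause domain, target $p^*$, input set, and $\epsilon$, since then the event in the definition has probability $0 \le \epsilon$. The deterministic bound is essentially already supplied by Lemma \ref{lemma:cl-size}; the remaining work is to massage its $\ln|T|$ factor into the shape demanded of a $(1,\beta)$-Occam solver.

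First I would invoke Lemma \ref{lemma:cl-size}. Following its notation, write $c$ for the clause returned by $\mathcal C_{\text{CL}}$ and $c^*$ for the smallest valid clause on the task $T = T(p^*, I_1, \dots, I_n)$; the lemma gives $\text{size}(p_c(c)) < 2\,\text{size}(p_c(c^*))\,(\ln|T| + 1)$. Because we work over a clause domain, the target program $p^*$ is itself a clause that is consistent with all examples (by the construction of $T(p^*,\dots)$), hence a valid clause, so minimality of $c^*$ yields $\text{size}(p_c(c^*)) \le \text{size}(p^*)$. Substituting and using $|T| = n$ gives $\text{size}(\mathcal C_{\text{CL}}(T)) < 2\,\text{size}(p^*)\,(\ln n + 1)$.

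The one genuine (but routine) step is to absorb $\ln n + 1$ into $n^{\beta}$. For every fixed $\beta \in (0,1)$ the map $n \mapsto (\ln n + 1)/n^{\beta}$ is continuous on $[1,\infty)$ and tends to $0$ as $n \to \infty$, so it attains a finite maximum $c'(\beta)$; an elementary calculation locates the maximizer at $n = e^{1/\beta - 1} \ge 1$ and gives $c'(\beta) = \max\{1, (\beta e^{1-\beta})^{-1}\}$. Thus $\ln n + 1 \le c'(\beta)\,n^{\beta}$ for all $n \ge 1$, whence $\text{size}(\mathcal C_{\text{CL}}(T)) < 2c'(\beta)\,\text{size}(p^*)\,n^{\beta}$. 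Finally, to match the literal form of Definition \ref{definition:occam} I would reinstate the $\ln^{\gamma}(1/\epsilon)$ factor for free: fixing $\gamma = 1$ and using $\epsilon < 1/2$, so that $\ln(1/\epsilon) > \ln 2$, it suffices to take the leading constant equal to $2c'(\beta)/\ln 2$, since then $\text{size}(\mathcal C_{\text{CL}}(T)) < 2c'(\beta)\,\text{size}(p^*)\,n^{\beta} \le (2c'(\beta)/\ln 2)\,\text{size}(p^*)\,n^{\beta}\ln(1/\epsilon)$, and the Occam bound is violated with probability $0$.

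I do not expect a real obstacle: the content is carried entirely by Lemma \ref{lemma:cl-size}, and everything after it is constant-chasing. The only points requiring care are (i) the inequality $\text{size}(p_c(c^*)) \le \text{size}(p^*)$, which relies on $p^*$ living in a clause domain so that it is genuinely a competitor for the minimal clause $c^*$, and (ii) checking the domination $\ln n + 1 = O(n^{\beta})$ uniformly over $n \ge 1$ rather than merely asymptotically, so that the constant is honest. Both are elementary.
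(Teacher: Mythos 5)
Your proposal is correct and follows essentially the same route as the paper: both apply Lemma \ref{lemma:cl-size} and absorb the $(\ln|T|+1)$ factor into $n^{\beta}$ via a sufficiently large constant. Your write-up is in fact slightly more careful than the paper's one-line proof, since you explicitly justify $\text{size}(p_c(c^*)) \leq \text{size}(p^*)$ by minimality of $c^*$ (the paper silently identifies $c^*$ with the target clause) and explicitly handle the $\ln^{\gamma}(1/\epsilon)$ factor through determinism, both of which the paper leaves implicit.
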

\subsection{Condition Synthesis for Disjunctive Normal Forms} \label{section:uni-full}
In this section, we implement an Occam solver $\mathcal C$ for disjunctive normal forms. By the semantics of operator \texttt{or}, we could obtain a lemma that is similar with Lemma \ref{lemma:sample} in form. 

\begin{lemma}\label{lemma:or-sample} Let $T$ be a PBE task and $d$ be a DNF formula satisfying all examples in $T$, then:
	\begin{itemize}
		\item All clauses in $d$ must be \texttt{false} on all negative examples in $T$, i.e., $\forall c \in d, \mathbb I_N(T) \subseteq N(\mathbb I(T), c)$.
		\item There exists a clause in $d$ that is \texttt{true} on at least $|d|^{-1}$ portion of positive examples in $T$, i.e., $\exists c \in d, |P(\mathbb I(T), c)| \geq |d|^{-1}|\mathbb I_P(T)|$.
	\end{itemize}
\end{lemma}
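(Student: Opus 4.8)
The plan is to prove both bullet points directly from the semantics of the disjunction operator, mirroring the structure of the proof of Lemma~\ref{lemma:sample}. Throughout, I write $d = \{c_1, \dots, c_k\}$ as a set of clauses with $k = |d|$, so that for every input $I$ we have $\sem{p_d(d)}(I) = \bigvee_{j=1}^{k} \sem{p_c(c_j)}(I)$; that is, $d$ evaluates to \texttt{true} on $I$ exactly when some clause $c_j$ does. Both claims will then follow by fixing an example and reading off what the truth value of the disjunction forces on the individual clauses.

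For the first bullet, I would fix an arbitrary negative example input $I \in \mathbb I_N(T)$. Since $d$ satisfies all examples in $T$, we have $\sem{p_d(d)}(I) = \texttt{false}$. A disjunction is \texttt{false} only when every disjunct is \texttt{false}, so $\sem{p_c(c)}(I) = \texttt{false}$ for every clause $c \in d$, which by definition means $I \in N(\mathbb I(T), c)$. As $I$ was an arbitrary negative example, this gives $\mathbb I_N(T) \subseteq N(\mathbb I(T), c)$ for every $c \in d$, as required.

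For the second bullet, I would use an averaging (pigeonhole) argument over the clauses. For each positive example input $I \in \mathbb I_P(T)$, the fact that $d$ satisfies $I$ forces $\sem{p_d(d)}(I) = \texttt{true}$, so at least one clause $c \in d$ has $\sem{p_c(c)}(I) = \texttt{true}$, i.e.\ $I \in P(\mathbb I(T), c)$. Hence every positive example input lies in $P(\mathbb I(T), c)$ for at least one clause, which yields $\sum_{c \in d} |P(\mathbb I(T), c) \cap \mathbb I_P(T)| \geq |\mathbb I_P(T)|$. Averaging over the $|d|$ clauses, some clause $c$ satisfies $|P(\mathbb I(T), c) \cap \mathbb I_P(T)| \geq |d|^{-1}|\mathbb I_P(T)|$, and the claimed bound follows from $|P(\mathbb I(T), c)| \geq |P(\mathbb I(T), c) \cap \mathbb I_P(T)|$.

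There is no substantial obstacle here: the lemma is an immediate consequence of the definition of DNF semantics, and the only care needed is bookkeeping with the notations $P(\mathbb I(T), \cdot)$ and $N(\mathbb I(T), \cdot)$ restricted to $\mathbb I(T)$. The one point worth stating cleanly is that, combining the two bullets, every clause is \texttt{false} on all of $\mathbb I_N(T)$, so $P(\mathbb I(T), c) \subseteq \mathbb I_P(T)$ and the counting in the second bullet really concerns only positive examples. This is precisely the structural analogue of Lemma~\ref{lemma:sample}—a covering set of objects one of which must capture a $|d|^{-1}$ fraction of the examples—that lets the DNF solver $\mathcal C$ reuse the disjunctive synthesis algorithm of $\mathcal T_{\text{poly}}$.
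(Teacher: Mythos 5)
Your proof is correct and takes essentially the same route as the paper: the first bullet follows directly from the semantics of \texttt{or}, and the second by summing clause coverage over positive examples and averaging (the paper's $w_1 + \dots + w_m \geq |\mathbb I_P(T)| \implies \max w_i \geq m^{-1}|\mathbb I_P(T)|$). Your extra remark that $P(\mathbb I(T), c) \subseteq \mathbb I_P(T)$ once the first bullet holds is a small bookkeeping refinement the paper leaves implicit, not a different argument.
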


\SetKwFunction{SR}{Search}
\SetKwFunction{DAC}{DivideAndConquer}
\SetKwFunction{SCS}{SimplifyClauseSet}
\SetKwFunction{GC}{GetPossibleClause}
\SetKwFunction{VC}{IsValidClause}
\SetKwFunction{In}{Insert}
\SetKwFunction{De}{Delete}
\SetKwFunction{G}{Get}
\SetKwFunction{Search}{Search}
\SetKwFunction{GPWS}{GetConditionsWithSizeBound}
By this lemma, $\mathcal C$ can be implemented similarly as $\mathcal T_{\text{poly}}$ by regarding $\mathcal C_{\text{CL}}$ as the domain solver, as shown in Algorithm \ref{alg:uni-dnf-main}. Comparing with the counterpart in $\mathcal T_{\text{poly}}$, there are two main differences:
\begin{itemize}
	\item $\GC{}$ finds a set of clauses that are evaluated to \texttt{false} on all negative examples, and are evaluated to \texttt{true} on at least $k^{-1}$ portion of positive examples (Line 5). Correspondingly, only covered positive examples are excluded in each recursion (Line 6).
	\item For the efficiency of $\mathcal C_{\text{CL}}$, $\mathcal C$ iteratively selects a parameter $s'$ (Line 13). In each iteration, only those literals with size at most $s'$ are available (Line 15).
\end{itemize}

%Algorithm \ref{alg:uni-dnf-main} shows the pseudo-code of $\mathcal C$, which is comprised of two parts. The first part is a backtracking on the set of clauses, implemented as Function \Search{} (Lines 1-10), which is almost the same as the second part of Algorithm \ref{alg:term-main}. Function \Search{} takes four inputs: $\variable{literals}$ is a set of available literals, $T$ is a PBE task, $k$ is an upper bound on the number of clauses, and $s$ is an upper bound on the size of each clause. The goal of $\Search{}$ is to find a valid DNF formula comprised of at most $k$ clauses with literals from set $\variable{literals}$. 

%Guided by Lemma \ref{lemma:or-sample}, \Search{} invokes \GC{} to find a set of clauses that are evaluated to \texttt{false} on all negative examples, and are evaluated to \texttt{true} on at least $k^{-1}$ portion of postive examples (Line 5). Similiar with $\mathcal T_{\text{poly}}$, $\Search{}$ excludes those clauses that are unexpectedly large (Line 6), where size bound $cs \log |T|$ comes from Lemma \ref{lemma:cl-size}. Then, $\Search{}$ recursively tries remaining clauses until a valid DNF formula is found (Lines 7-8).

\begin{algorithm}[t]
	\small
	\caption{DNF solver $\mathcal C$ for disjunctive normal forms}
	\label{alg:uni-dnf-main}
	\LinesNumbered
	\KwIn{A condition space $\mathbb P_c$, a PBE task $T$ and a constant $c_0$.}
	\KwOut{A DNF formula in $\text{DNF}(\mathbb P_c)$ satisfying all examples.}
	\SetKwProg{Fn}{Function}{:}{}
	\Fn{\Search{$\variable{literals}, T, k, s$}}{
		\lIf {$|\mathbb I_P(T)| = 0$}{\Return $\{\}$}
		\lIf{$(\variable{literals}, T, k)$ is visited before or $k = 0$} {\Return $\top$}
		\For{each $c \in \GC{$\variable{literals}, T, k$}$}{
			\lIf {$\text{size}(p_c(c)) > c_0s\ln |T|$}{\textbf{continue}}
			$\variable{searchResult} \gets \Search{$\variable{literals}, T - \left\{(I, \texttt{true}) \in T\ |\ \sem{p_c(c)}(I) = \texttt{true} \right\}, k - 1, s$}$; \\
			\lIf{$\variable{searchResult} \neq \bot$}{\Return $\{c\} \cup \variable{searchResult}$}
		}
		\Return $\top$;
	}
	$s \gets 1$; \\
	\While {True}{
		$k_l \gets c_0s$; \\
		\For {each $(k, s') \in [1, k_l] \times [1, s]$}{
			\If {$(k, s')$ has not been visited before}{
				$P_c \gets \GPWS{$\mathbb P_c, s'$}$; \ \ \ $d \gets \Search{$\text{L}(P_c), T, k, s$}$; \\
				\lIf{$d \neq \bot$}{\Return $p_d(d)$}
			}
		}
		$s \gets s + 1$;
	}
\end{algorithm} 

%The second part iteratively selects parameters for function \Search{} (Lines 10-20). Despite the number of clauses, $\mathcal C$ also enumerates on a parameter $s'$, which represents the size limit on the literals. Introducing parameter $s'$ is for efficiency: Because the condition space $\mathbb P_c$ is usually large in practice, invoking $\mathcal C_{\text{CL}}$ directly on $\mathbb P_c$ may be time-comsuming. Therefore, $\mathcal C$ only considers those literals with size at most $s'$ in each turn, and recursively relax the bound of $s'$ (Lines 16-17).

Our implementation of $\GC{}$ (abbreviated as \G{}) optimizes the sampling algorithm in $\mathcal T_{\text{poly}}$ by opening the box of clause solver $\mathcal C_{\text{CL}}$. By Algorithm \ref{alg:uni-over}, $\mathcal C_{\text{CL}}$ synthesizes clauses in two steps. It firstly finds a set $c_u$ of all usable literals and then simplifies it greedily. To synthesize a usable clause, set $c_u$ should satisfy all negative examples and at least $k^{-1}$ portion of positive examples. We find the number of different $c_u$ satisfying this condition is usually small in practice. Therefore, $\G{}$ tries to find all possible $c_u$, and simplifies them using function $\SC{}$.

Given an input space $\mathbb I$ and a set of literals $L$, define relation $\sim_{\mathbb I}$ on clause space $\text{CL}(L)$ as $c_1 \sim_{\mathbb I} c_2 \iff \forall I \in \mathbb I, \sem{p_c(c_1)}(I) = \sem{p_c(c_2)}(I)$, i.e., $\sim_{\mathbb I}$ represents the relation of semantically equivalence on input space $\mathbb I$. Under relation $\sim_{\mathbb I}$, $\text{CL}(L)$ is divided into equivalent classes. We denote the class corresponding to clause $c$ as $[c]_{\mathbb I}$. It is easy to show that each class $[c]_{\mathbb I}$ contains a globally largest clause that is the union of all its elements, i.e., $\exists c' \in [c]_{\mathbb I},  c' = \left(\cup_{x}\ x\text{ for } x \in [c]_{\mathbb I}\right)$. Then, we introduce the concept of \textit{representative clauses} to denote the set of all possible $c_u$ generated by $\mathcal C_{\text{CL}}$.

\begin{definition}[Representative Clauses] 
	Given an input space $\mathbb I$, a size limit $k$, and a set of literals $L$, representative set $R(\mathbb I, k, L) \subseteq \text{CL}(L)$ includes all clause $c$ satisfying: (1) $c$ is the largest clause in $[c]_{\mathbb I}$, (2) $c$ takes \texttt{true} on at least $k^{-1}$ portion of the inputs, i.e., $|P(\mathbb I, c)| \geq k^{-1} |\mathbb I|$.
\end{definition}

\begin{algorithm}[t]
	\small
	\caption{Implementation of \texttt{GetPossibleClause()} in $\mathcal C$.}
	\label{alg:get-clause}
	\LinesNumbered
	\KwIn{A set of literals $\variable{literals}$, a PBE task $T$, and an upper bound $k$.}
	\KwOut{A set of possible clauses according to Lemma \ref{lemma:or-sample}.}
	$\variable{result} \gets \{\emptyset\}$; \\
	\For {each $l \in \variable{literals}$}{
		\For {each $c \in \variable{result}$}{
			\lIf{$\left|P(\mathbb I_P(T), c \cup \{l\})\right| \geq k^{-1}\left|\mathbb I_P(T)\right|$}{$\variable{result}.\In{$c \cup \{l\}$}$}
		}
		\For {each $c \in \variable{result}$}{
			\lIf{$\exists c' \in \variable{result}, \left(P(\mathbb I_P(T), c) = P(\mathbb I_P(T), c') \wedge c \subset c'\right)$}{$\variable{result}.\De{$c$}$}
		}
	}
	\Return $\left\{\SC{c}\ |\  c \in \variable{result} \wedge \mathbb I_N(T) \subseteq N(\mathbb I (T),c)\right\}$
\end{algorithm} 

According to this definition, $R(\mathbb I_P(T), k, L)$ is the set of possible $c_u$ when the PBE task is $T$, the size limit is $k$ and the set of available literals is $L$.  Our implementation of \G{} is shown as Algorithm \ref{alg:get-clause}. \G{} maintains set \variable{result} which is equal to $R(\mathbb I_P(T), k, L')$ for some hidden literal set $L'$. Initially, $L'$ is empty and thus $\variable{result}$ includes only an empty clause, i.e., \texttt{true} (Line 1). \G{} considers all literals in order (Lines 2-9). In each turn, a new literal is inserted to $L'$ and thus \variable{result} is updated correspondingly (Lines 3-8). At last, $\variable{result}$ is equal to $R(\mathbb I_P(T), k, \variable{literals})$, and thus \G{} simplies and returns all valid clauses in $\variable{result}$ (Line 10).

We prove that $\mathcal C$ is an Occam solver, as shown in Lemma \ref{lemma:condition-occam}. Combining with Lemma  \ref{lemma:unifier-gen}, $\mathcal U_{\text{poly}}$ is also an Occam unifier, as shown in Theorem \ref{theorem:unifier-occam}.

\begin{lemma} \label{lemma:condition-occam} For any $0 < \beta < 1$, $\mathcal C$ is a $(2, \beta)$-Occam solver on $\text{DNF}(\mathcal F_C^A)$. 
\end{lemma}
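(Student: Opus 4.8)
The plan is to exploit the structural parallel, already flagged in the text, between $\mathcal C$ and the term solver $\mathcal T_{\text{poly}}$: by Lemma~\ref{lemma:or-sample} synthesizing a DNF is an instance of disjunctive synthesis whose domain solver is the clause solver $\mathcal C_{\text{CL}}$, which is a $(1,\beta)$-Occam solver by Corollary~\ref{corollary:cl-size}. Following the decomposition idea of Lemma~\ref{lemma:term-split}, it suffices to bound two quantities of the returned DNF $d$: the maximum size of a single clause of $d$ and the number of clauses $|d|$; since, up to the per-symbol logarithmic factor of Definition~\ref{def:size}, $\text{size}(p_d(d))$ is proportional to $|d|$ times the maximal clause size, a bound of order $\text{size}(p^*)^2 n^\beta$ on this product gives the claim. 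A key simplification over the $\mathcal T_{\text{poly}}$ analysis is that $\mathcal C$ is \emph{deterministic}: $\GC{}$ (Algorithm~\ref{alg:get-clause}) enumerates representative clauses exhaustively rather than by sampling, and $\Search{}$ is a deterministic DFS. Hence it suffices to prove a \emph{worst-case} size bound; the probabilistic inequality of Definition~\ref{definition:occam} then holds with the bad event of probability zero, and any positive constant $\gamma$ is admissible.

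First I would fix the target $p^* \in \text{DNF}(\mathbb P_c)$, write it as a disjunction of clauses $c_1^*,\dots,c_{m^*}^*$, and set $s^* = \max_i \text{size}(p_c(c_i^*))$; note $m^* = O(\text{size}(p^*))$ and $s^* = O(\text{size}(p^*))$. The heart of the argument is to identify a threshold value $S = O(\text{size}(p^*)\ln n)$ of the outer parameter $s$ at which the search provably succeeds, and to check the three filters in concert at $s=S$: (i) when the literal-size parameter $s'$ reaches the maximal literal size of $p^*$ (at most $s^*\le S$), $\GPWS{}$ makes every literal of $p^*$ available; (ii) by Lemma~\ref{lemma:or-sample}, at every stage some clause of $p^*$ covers at least a $1/m^*$ fraction of the remaining positive examples, and its equivalence-class representative is produced by $\GC{}$, so that with coverage budget $k = \Theta(m^*\ln n) \le k_l = c_0 s$ the DFS finds a cover using at most $k$ clauses; and (iii) each such clause, after the greedy simplification $\SC{}$, has size at most $2 s^*(\ln n + 1)$ by Lemma~\ref{lemma:cl-size}, which passes the size filter $c_0 s\ln n$ once $s \ge s^*$. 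I must also confirm that $\SC{}$ preserves positive coverage---dropping literals only enlarges the true-set---so the simplified clause remains admissible for the recursion.

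Since $\Search{}$ returns only DNFs consistent with all examples and stops at the first success $s_0 \le S$, the output obeys the $s_0$-dependent filters: every clause has size at most $c_0 s_0\ln n = O(\text{size}(p^*)\ln^2 n)$, and there are at most $k_l = c_0 s_0 = O(\text{size}(p^*)\ln n)$ of them. Multiplying, $\text{size}(p_d(d)) = O(\text{size}(p^*)^2 \ln^3 n)$. Finally, since $\ln^3 n \le c_\beta\, n^\beta$ for every $\beta > 0$, this is $O(\text{size}(p^*)^2 n^\beta)$, a deterministic bound that establishes $\mathcal C$ as a $(2,\beta)$-Occam solver for every $0 < \beta < 1$.

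I expect the main obstacle to be steps (ii)--(iii): showing rigorously that the deterministic DFS discovers an admissible cover of bounded size within budget $k = \Theta(m^*\ln n)$. This demands simultaneously (a) that $\GC{}$ returns the representative of each good clause of $p^*$, so the ``good'' branches are present in the search tree; (b) that $\SC{}$ does not destroy the coverage guarantee of Lemma~\ref{lemma:or-sample} while shrinking the clause below the size filter; and (c) that the standard set-cover contraction, each step removing a $1/m^*$ fraction of the remaining positives, terminates within the depth budget. A secondary, bookkeeping obstacle is pinning down the encoding overhead relating $\text{size}(p_d(d))$ to the clause sizes under Definition~\ref{def:size}, and verifying that the iterated range $[1,k_l]\times[1,s]$ contains the required $(k,s')$ pair at $s=S$.
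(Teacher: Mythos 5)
Your proposal is correct, and its outer skeleton (iterative deepening over $s$, representative clauses produced by Algorithm~\ref{alg:get-clause}, Lemma~\ref{lemma:cl-size} for the greedy simplification, a deterministic worst-case bound with polylog factors absorbed into $n^{\beta}$) matches the paper's; but the decisive step---why the DFS in \textit{Search} succeeds within its budget---is argued by a genuinely different progress measure. The paper proves a stronger per-step claim: for the target clause $c^*$ covering the most remaining positives, the largest clause in $[c^*]_{\mathbb I_P(T)}$ is semantically \emph{equivalent} to $c^*$ on the remaining positive inputs, and \textit{SimplifyClause} only drops literals, hence only enlarges the true-set; so \textit{Get} yields a clause $c$ with $P(\mathbb I_P(T), c^*) \subseteq P(\mathbb I_P(T), c)$ and $\text{size}(p_c(c)) \leq 2s\ln|T|$. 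Each recursion step therefore eliminates an \emph{entire} target clause from the set $d'$ of still-active ones, the depth is at most $|d^*|$, $k \geq |d'|$ suffices, and success already occurs at $s = O(\text{size}(p_d(d^*)))$, giving $O(\text{size}(p_d(d^*)))$ clauses of size $O(\text{size}(p_d(d^*))\ln|T|)$ each, i.e.\ a total of $O(\text{size}(p_d(d^*))^2\ln|T|)$. You instead use only the $|d|^{-1}$-fraction guarantee of Lemma~\ref{lemma:or-sample} and run a set-cover contraction, which needs depth $\Theta(m^*\ln n)$ and hence the inflated thresholds $k_0 = \Theta(m^*\ln n)$ and $S = \Theta(\text{size}(p^*)\ln n)$, ending at $O(\text{size}(p^*)^2\ln^3 n)$---weaker by two logarithmic factors, but still sufficient for $(2,\beta)$ for every $\beta \in (0,1)$, so the lemma follows either way. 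Two points you flagged as obstacles do need a line each but are immediate: first, the coverage threshold in \textit{Get} is $k^{-1}$ with the \emph{current} $k$, which decreases along the recursion while the needed fraction $1/m^*$ stays fixed, and your budget works precisely because $k_0 - (m^*\ln n + 1) \geq m^*$; second, the representative of $[c^*]$ passes the negative-example filter (it is a superset of $c^*$'s literals, so its false-set contains $N(\mathbb I(T), c^*) \supseteq \mathbb I_N(T)$), and $c^*$ itself lies inside the representative, which is exactly what licenses invoking Lemma~\ref{lemma:cl-size} with $c^*$ as the competitor cover. Notably, your own monotonicity observation about \textit{SimplifyClause} (dropping literals enlarges positive coverage), combined with the equivalence of the representative and $c^*$ on $\mathbb I_P(T)$, already yields the paper's superset property---pushing it that one step further would eliminate your extra logarithmic factors and recover the paper's tighter bound.
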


\begin{theorem} \label{theorem:unifier-occam} For any $0 < \beta < 1$, $\mathcal U_{\text{poly}}$ is a $(8, \beta)$-Occam unifier on $\mathcal F_C^A$.
\end{theorem}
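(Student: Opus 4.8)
The plan is to derive Theorem~\ref{theorem:unifier-occam} as a one-line composition of the two results just proved: Lemma~\ref{lemma:condition-occam}, which certifies the DNF solver $\mathcal C$ as a $(2,\beta)$-Occam solver, and Lemma~\ref{lemma:unifier-gen}, which lifts any Occam solver for $\text{DNF}(\mathcal F_C)$ into an Occam unifier for $\mathcal F_C$. The substance of the argument is not any computation but the correct choice of \emph{which branch} of Lemma~\ref{lemma:unifier-gen} to invoke, so that the bound lands on the exact constant $8$ rather than on something strictly larger.

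First I would fix an arbitrary $\beta\in(0,1)$ and record what Lemma~\ref{lemma:condition-occam} supplies for that $\beta$: the DNF solver $\mathcal C$ is a $(2,\beta)$-Occam solver on $\text{DNF}(\mathcal F_C^A)$, together with the associated existential constants $c,\gamma>0$. Since $\text{DNF}(\mathcal F_C^A)$ is precisely the DNF family induced by $\mathcal F_C^A$, this matches the hypothesis of Lemma~\ref{lemma:unifier-gen} instantiated at $\alpha=2$ over the family $\mathcal F_C=\mathcal F_C^A$.

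Second -- and this is the crux -- I would observe that $\mathcal C$, as specified by Algorithm~\ref{alg:uni-dnf-main}, is a \emph{deterministic} algorithm. Whereas the term solver $\mathcal T_{\text{poly}}$ obtains its candidate programs by \emph{randomly} sampling subsets of examples, $\mathcal C$ replaces that sampling step with the deterministic routine \texttt{GetPossibleClause} of Algorithm~\ref{alg:get-clause}, which computes the representative set $R(\mathbb I_P(T),k,\variable{literals})$ exactly by enumerating literals in a fixed order; every other step (the backtracking \texttt{Search}, the size filter on $p_c(c)$, and the outer iteration over $s$, $k$, and $s'$) is likewise free of randomness. Hence $\mathcal C$ is in fact a \emph{deterministic} $(2,\beta)$-Occam solver on $\text{DNF}(\mathcal F_C^A)$.

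This determinism is exactly what yields the tight constant. Lemma~\ref{lemma:unifier-gen} has two regimes: from a general (randomized) $(\alpha,\beta)$-Occam solver it only guarantees an $(\alpha',\beta)$-Occam unifier for every $\alpha'>4\alpha$, which at $\alpha=2$ would give constants strictly above $8$ but never $8$ itself; from a \emph{deterministic} $(\alpha,\beta)$-Occam solver it guarantees a $(4\alpha,\beta)$-Occam unifier outright. Applying the deterministic regime with $\alpha=2$ therefore produces a $(4\cdot 2,\beta)=(8,\beta)$-Occam unifier on $\mathcal F_C^A$, and since $\beta\in(0,1)$ was arbitrary the statement holds for all such $\beta$, completing the proof. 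The only thing to guard against is the temptation to route through the randomized branch: doing so would force one to settle for $\alpha'>8$ and would miss the claimed bound, so the whole argument hinges on making the (otherwise trivial) observation that $\mathcal C$ carries no internal coin flips.
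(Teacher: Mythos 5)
Your proposal is correct and takes essentially the same route as the paper, whose one-line proof obtains the theorem by composing Lemma~\ref{lemma:condition-occam} with the lifting lemma for $\mathcal U_{\text{poly}}$ (the paper cites Lemma~\ref{lemma:sample} there, evidently a typo for Lemma~\ref{lemma:unifier-gen}). You also make explicit the one point the paper leaves implicit: since Algorithm~\ref{alg:uni-dnf-main} involves no randomness (indeed, the proof of Lemma~\ref{lemma:condition-occam} establishes the size bound deterministically), the deterministic branch of Lemma~\ref{lemma:unifier-gen} applies with $\alpha = 2$, which is exactly what yields the constant $8 = 4\cdot 2$ rather than merely every $\alpha' > 8$.
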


Finally, we prove that \mainname is an Occam solver by combining Theorem \ref{theorem:unifier-occam}, Theorem \ref{theorem:term-occam} and Theorem \ref{theorem:gen-stun}, as shown in Theorem \ref{theorem:final}.

\begin{theorem} \label{theorem:final} $\mathcal S_t$ is an $(\alpha, \beta)$-Occam solver on $T(\mathcal F_C)$ with $\beta < \frac{1}{8} \implies$\mainname is an $(8(\alpha' + 1), 8\beta')$- Occam solver on $\mathcal F_C$ for any $\alpha' > \alpha, \beta < \beta' < \frac{1}{8}$.
\end{theorem}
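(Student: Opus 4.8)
The plan is to obtain Theorem~\ref{theorem:final} purely by \emph{composition}, feeding the two component guarantees established for \mainname's pieces into the generic STUN combination result. Recall that \mainname is precisely the STUN solver built from the term solver $\mathcal T_{\text{poly}}$ and the unifier $\mathcal U_{\text{poly}}$. Hence the whole argument reduces to three steps: (i) certify $\mathcal T_{\text{poly}}$ as an Occam term solver, (ii) certify $\mathcal U_{\text{poly}}$ as an Occam unifier, and (iii) invoke Theorem~\ref{theorem:gen-stun}, verifying that its compatibility side-condition $\beta_1\alpha_2+\beta_2<1$ holds for the parameters produced in (i) and (ii).

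First I would apply Theorem~\ref{theorem:term-occam}: since $\mathcal S_t$ is an $(\alpha,\beta)$-Occam solver on $T(\mathcal F_C)$, that theorem makes $\mathcal T_{\text{poly}}$ an $(\alpha'+1,\beta'')$-Occam term solver on $\mathcal F_C$ for \emph{any} $\alpha'>\alpha$ and \emph{any} $\beta''$ with $\beta<\beta''<1$. I deliberately keep $\beta''$ as a free slack parameter strictly below the target $\beta'$, to be pinned down in the last step. Next I apply Theorem~\ref{theorem:unifier-occam}, which gives $\mathcal U_{\text{poly}}$ as an $(8,\beta_2)$-Occam unifier on the universal family $\mathcal F_C^A$ for \emph{any} $\beta_2\in(0,1)$; this restricts to the subfamily $\mathcal F_C\subseteq\mathcal F_C^A$, and $\beta_2$ is likewise left free. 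Setting $\alpha_1=\alpha'+1$, $\beta_1=\beta''$, $\alpha_2=8$, Theorem~\ref{theorem:gen-stun} then yields that \mainname is an $\bigl((\alpha_1+1)\cdot 8,\;8\beta''+\beta_2\bigr)$-Occam solver, provided $8\beta''+\beta_2<1$. The size exponent is obtained by direct substitution of $\alpha_1=\alpha'+1$ and $\alpha_2=8$ into the product $(\alpha_1+1)\alpha_2$, and the $n$-exponent is the sum $\beta_1\alpha_2+\beta_2$.

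The remaining work is the parameter bookkeeping, which is where the hypothesis $\beta<\frac{1}{8}$ earns its place. To land the $n$-exponent \emph{exactly} on $8\beta'$, I would choose $\beta_2=8(\beta'-\beta'')>0$, so that $8\beta''+\beta_2=8\beta'$. This choice is legal precisely because $\beta''<\beta'<\frac{1}{8}$ guarantees both $\beta_2\in(0,1)$ and $8\beta'<1$; the latter is simultaneously the side-condition $\beta_1\alpha_2+\beta_2<1$ demanded by Theorem~\ref{theorem:gen-stun} and the defining requirement that the result be a genuine Occam solver, whose $n$-exponent must stay below $1$. In other words, the unifier contributes a multiplicative blow-up of $\alpha_2=8$ to the $n$-exponent, and $\beta<\frac{1}{8}$ is exactly the slack that keeps $8\beta'$ under the ceiling of $1$ while leaving room to pick $\beta''$ below $\beta'$.

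The main obstacle is not any single inequality but keeping the quantifier order consistent across the composition. Theorem~\ref{theorem:term-occam} and Theorem~\ref{theorem:unifier-occam} each promise their bound \emph{for any} choice of their free exponents, so I must commit to concrete values of $\beta''$ and $\beta_2$ \emph{before} invoking Theorem~\ref{theorem:gen-stun}, and then check that the \emph{single} pair of constants $(c,\gamma)$ required by Definition~\ref{definition:occam} can be assembled uniformly from the three component pairs, independently of the error parameter $\epsilon$ and of the domain in $\mathcal F_C$. Concretely, I expect to split a global error budget $\epsilon$ into constant fractions allotted to the term-solver guarantee, the unifier guarantee, and the combination, so that a union bound keeps the total failure probability at most $\epsilon$ while each stage absorbs its share inside its own $\ln^{\gamma}(1/\epsilon)$ factor; the constants are then combined into one $(c,\gamma)$. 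Once this threading is done, substituting the chosen $\beta''$ and $\beta_2$ closes the proof.
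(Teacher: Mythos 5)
Your proposal takes exactly the same route as the paper, whose entire proof of Theorem~\ref{theorem:final} is a direct citation of Theorem~\ref{theorem:term-occam} (for $\mathcal T_{\text{poly}}$), Theorem~\ref{theorem:unifier-occam} (for $\mathcal U_{\text{poly}}$, restricted from $\mathcal F_C^A$ to $\mathcal F_C$), and the composition Theorem~\ref{theorem:gen-stun}; your $\beta$-bookkeeping --- committing to $\beta''\in(\beta,\beta')$ and $\beta_2=8(\beta'-\beta'')$ so that $\beta_1\alpha_2+\beta_2=8\beta'<1$, then splitting the error budget $\epsilon$ across the component guarantees --- is a correct elaboration of details the paper leaves implicit. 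The one point you glossed over is the size exponent: direct substitution of $\alpha_1=\alpha'+1$ and $\alpha_2=8$ into $(\alpha_1+1)\alpha_2$ yields $8(\alpha'+2)$, not $8(\alpha'+1)$, so the exponent as stated is only recovered for $\alpha'>\alpha+1$ (equivalently, after renaming $\alpha'+1\mapsto\alpha'$) --- an off-by-one that the paper's own statement shares and that is immaterial to the substance, since membership in the class of Occam solvers is insensitive to such constant shifts in the polynomial exponent.
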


Note that the bound in Theorem \ref{theorem:final} is loose in practice, as it considers all corner cases for the sake of preciseness. For example, while analyzing solver $\mathcal C$, we consider the case where both the number of clauses and the size of the clauses are linear to the size of the target condition $d^*$: At this time, there must be $O(1)$ clauses of which the size is $\Omega(\text{size}(d^*))$, and $\Omega(\text{size}(d^*))$ clauses of which the size is $O(1)$. Such an \texttt{if}-condition is seldom used in practice.

	%!TEX root = ../CLIA.tex
\section{Implementation} \label{section:implementation}
We instantiate our approach \mainname on the domains of \textit{conditional linear integer arithmetic (CLIA)}. Our implementation is in C++, and is available at \url{https://github.com/jiry17/PolyGen}.

Our implementation supports the family of CLIA domains $\mathcal F_I$ defined in the evaluation of the study on \textit{Eusolver}~\cite{DBLP:conf/tacas/AlurRU17}. where  different domains only differ in the number of inputs. Concretely, given the number of inputs $n$, a domain $\mathbb D_I = ((\mathbb P_t, \mathbb P_c), \mathbb I) \in \mathcal F_I$ is defined as follows.
\begin{itemize}
	\item Term space $\mathbb P_t$ contains all linear integer expressions of input variables $x_1, \dots, x_n$.
	%$$
	%	\mathbb P_t \coloneqq \left \{c_0 + c_1x_1 + \dots + c_nx_n\ |\ c_0, \dots, c_n \in \mathbb Z\right\}
	%$$ 
	\item Condition space $\mathbb P_c$ contains all arithmetic comparisons between linear expressions and their boolean expressions, i.e., $\mathbb P_c$ is the smallest set satisfying the following eqaution.
	\begin{align*}
		\mathbb P_c = \left \{e_1\ \texttt{o}\ e_2\ |\  e_1, e_2 \in \mathbb P_t, \texttt{o} \in \{<, \leq, =\}\right\} &\cup  \left\{c_1 \ \texttt{o}\ c_2\ |\  c_1, c_2 \in \mathbb P_c, \texttt{o} \in \{\texttt{and}, \texttt{or}\}\right\} 
	\cup \left\{\texttt{not}\ c\ |\  c\in \mathbb P_c\}\right\} 
	\end{align*}
By Definition \ref{def:if-closed}, $\mathbb D_I$ is an if-closed conditional domain.
	\item Input space $\mathbb I$ contains integer assignments to input variables. For simplicity, we assume $\mathbb I$ contains all assignments in a bounded range:
	$
		\mathbb I \coloneqq \left\{(w_1, \dots, w_n)\ |\ w_i \in [\text{int}\_\text{min}, \text{int}\_\text{max}]\right\}
	$.
\end{itemize}

To implement $\mathcal T_{\text{poly}}$ and $\mathcal U_{\text{poly}}$, we set parameters $c$ and $c_0$ as $2$. Besides, by Theorem \ref{theorem:final}, an Occam solver $\mathcal S_{t}$ on $T(\mathcal F_{I})$ is required to instantiate \mainname on $\mathcal F_I$. We implement $\mathcal S_t$ as an optimized solver that always synthesizes the smallest valid program. Concretely, given PBE task $T$, $\mathcal S_t$ synthesizes $c_0 + c_1x_1 + \dots + c_nx_n$ by solving the following optimization problem with respect to $c_0, \dots, c_n$:
\begin{align*}
	\textbf{Minimize}\ \text{size}(c_0 + c_1x_1 + \dots + c_n x_n) \ \ \ \textbf{Subject to}\ \forall ((w_1, \dots, w_n), O) \in T, \textstyle{\sum_{i=1}^n} w_ic_i + c_0 = O
\end{align*}
This problem is an instance of \textit{integer linear programming (ILP)}, and  $\mathcal S_t$ solves it by invoking \textit{Gurobi}~\cite{gurobi}, a state-of-the-art solver for ILP. Clearly, $\mathcal S_t$ is an $(1, 0)$-Occam solver, and thus by Theorem \ref{theorem:final}, \mainname is an Occam solver on $\mathcal F_I$. 

	%!TEX root = ../CLIA.tex
\section{Evaluation}\label{section:evaluation}
To evaluate \mainname, we report several experiments to answer the following research questions:

\begin{itemize}
	\item \textbf{RQ1:} How does \mainname compare against existing PBE solvers?
	\item \textbf{RQ2:} How do term solver $\mathcal T_{\text{poly}}$ and unifier $\mathcal U_{\text{poly}}$ affect the performance of \mainname ?
\end{itemize}

\subsection{Experimental Setup}

\noindent \textbf{Baseline Solvers}. We compare \mainname with three PBE solvers, \textit{Esolver}~\cite{DBLP:conf/fmcad/AlurBJMRSSSTU13}, \textit{Eusolver}~\cite{DBLP:conf/tacas/AlurRU17}, and \textit{Euphony}~\cite{DBLP:conf/pldi/LeeHAN18}, which represent the state-of-the-art of three different methods on improving generalizability:
\begin{enumerate}[leftmargin=*]
	\item The first method is guided by the principle of Occam's Razor, which guarantees to synthesize the smallest valid program. On CLIA, \textit{ESolver} is the best-known solver following this method, which enumerates programs in the increasing order of size, and prunes off useless programs via a strategy namely \textit{observational equivalence}.
	\item The second method combines the first method with efficient synthesis techniques heuristically, and thus makes a trade-off between generalizability and efficiency. Among them, \textit{Eusolver} combines the principle of Occam's Razor with the STUN framework by requiring the term solver to enumerate terms in the increasing order of size.
	\item The third method uses a learned model to guide the synthesis. In this category, \textit{Euphony} is the state-of-the-art among solvers available on CLIA. \textit{Euphony} is based on \textit{Eusolver} and uses a model based on structural probability to guide the search of \textit{Eusolver}.
\end{enumerate}

\noindent \textbf{Oracle Models}. Our evaluation follows the framework of OGIS~\cite{DBLP:journals/acta/JhaS17}. We consider two different models of oracles, which cover major usages of PBE solvers in practice.
\begin{enumerate}[leftmargin=*]
	\item In model $\mathbb O_{V}$, the oracle can verify whether a program is correct, and can provide a counter-example if the program is incorrect. To synthesize from these oracles, the framework of \textit{counter-example guided inductive synthesis (CEGIS)}~\cite{DBLP:conf/asplos/Solar-LezamaTBSS06} is usually used. %\textit{CEGIS} starts with an empty set of examples and synthesizes programs in turns. In each turn, \textit{CEGIS} synthesizes a program from existing examples via a client PBE solver and then queries the oracle with the synthesis result. If the result is incorrect, \textit{CEGIS} will add the counter-example provided by the oracle to the set of examples, and then start a new turn. \yxcomment{this paragraph can be omitted to save space.}
	
	Given an oracle $\mathcal O$ in $\mathbb O_{V}$ and a PBE solver $\mathcal S$, we run \textit{CEGIS} with solver $\mathcal S$ to synthesize a program from $\mathcal O$. We measure the generalizability of $\mathcal S$ on $\mathcal O$ as the number of examples finally used by $\mathcal S$ to synthesize a correct program, which is equal to the number of \textit{CEGIS} turns, and we measure the efficiency of $\mathcal S$ as the total time cost of the \textit{CEGIS} framework.
	
	\item In model $\mathbb O_{R}$, the oracle cannot verify the correctness of a program but can provide a set of input-output examples. At this time, a program is usually synthesized by (1) invoking the oracle to generate as many examples as possible under some limits on resource, and then (2) invoking a PBE solver to synthesize a program from these examples.
	
	To evaluate the performance of a PBE solver $\mathcal S$ on an oracle $\mathcal O$ in $\mathbb O_{R}$, we assume that there is a corresponding oracle $\mathcal O'$ in $\mathbb O_{R}$ that could verify whether the synthesized program is completely correct for $\mathcal O$. We run $\mathcal S$ in a similar way as \textit{CEGIS}: starting from an empty set of examples, in each turn, we run $\mathcal S$ on all existing examples. If the synthesis result is verified to be incorrect by $\mathcal O'$, we request a new example from $\mathcal O$ and then start a new turn. We measure the generalizability of $\mathcal S$ on $\mathcal O$ as the total number of used examples. Because the PBE solver is usually invoked only once in practice, we measure the efficiency of $\mathcal S$ as the time cost of the last invocation.
\end{enumerate}

\noindent \textbf{Benchmark}. Our evaluation is conducted on a dataset $\mathcal D$ of $100$ benchmarks. For each benchmark, two different oracles $\mathcal O_V$ and $\mathcal O_R$ are provided, which correspond to models $\mathbb O_V$ and $\mathbb O_R$ respectively. The programs are synthesized in a domain of CLIA as stated in Section~\ref{section:implementation}.
 $\mathcal D$ consists of two parts, $\mathcal D_S$ and $\mathcal D_D$, each obtained from an existing dataset. 

\noindent \textbf{Dataset $\mathcal D_S$}. The first dataset $\mathcal D_S$ consists of 82 benchmarks collected from the general track\footnote{There is also a CLIA track in SyGuS-Comp. We use the dataset of the general track here because (1) all benchmarks in the CLIA track are included in the general track, (2) the general track includes additional benchmarks that are collected from a varies of domains and can also be solved by programs in the CLIA syntax.} in SyGuS-Comp~\cite{DBLP:journals/corr/abs-1904-07146}, where each benchmark is provided with a logic specification $\Phi$. %Therefore, given a candidate program $p$, the correctness of $p$ can be verified via an SMT solver. 

To implement two oracles, we apply the algorithm $\mathcal A$ used in \textit{Eusolver}~\cite{DBLP:conf/tacas/AlurRU17}, which could (1) get the correct output for a given input, (2) get a counter-example for an incorrect result:
\begin{itemize} 
	\item {\it Oracle $\mathcal O_V$.} Given a candidate program $p$, $\mathcal O_V$ firstly verifies the correctness of $p$ via an SMT solver, and invokes $\mathcal A$ to generate a counter-example if $p$ is incorrect. 
	\item {\it Oracle $\mathcal O_R$.} $\mathcal O_R$ randomly selects an input and invokes $\mathcal A$ to complete it into an example.
\end{itemize}
$\mathcal A$ is applicable only for special specifications, namely \textit{point-wise}: specification $\Phi$ is point-wise if it only relates an input point to its output. Therefore, we filter out those benchmarks where the specification is not point-wise, and those benchmarks that cannot be solved by a CLIA program. % Then we include the remaining $82$ benchmarks in our dataset $\mathcal D$.

\noindent \textbf{Dataset $\mathcal D_D$}. The second dataset $\mathcal D_D$ consists of 18 tasks for synthesizing a combinator in a divide-and-conquer algorithm collected by \citet{DBLP:conf/pldi/FarzanN17}. The synthesized program can be converted to a divide-and-conquer algorithm using \textit{ParSynt}~\cite{DBLP:conf/pldi/FarzanN17}.

% is collected by \citet{DBLP:conf/pldi/FarzanN17}, which is for
% evaluating the effectiveness of their approach \textit{ParSynt}  on synthesizing divide-and-conquer algorithms. In \textit{ParSynt}, one subtask fails into the problem of synthesizing conditional programs, where the solver is required to synthesize a combinator for merging local information of two lists. 

%\yxcomment{this example can be omitted to save space.}
For example, %to synthesize a divide-and-conquer program for the maximum sement sum (\textit{mss}), one step in \textit{ParSynt} is to synthesize a combinator $c$ satisfying the following specification:
the following specifies a task for synthesizing a combintor $c$ in the divde-and-conquer algorithm for the maximum segment sum ({\it mss}) problem.
\begin{equation}
	\forall l_1, l_2: \texttt{List},\ c\big(\textit{mss}(l_1), \textit{mss}(l_2), \textit{mps}(l_1), \textit{mps}(l_2), \textit{mts}(l_1), \textit{mts}(l_2)\big) = \textit{mss}(l_1 \cat l_2) \label{equation:spec1}
\end{equation}
where $l_1 \cat l_2$ represents the list concatenation of lists $l_1, l_2$, \textit{mps} represents the maximum prefix sum of a list and \textit{mts} represents the maximum suffix sum of a list. % Intuitively, this specification requires $c$ to calculate the global information (\textit{mss}) on the whole list $l_1 \cat l_2$ from the local information (\textit{mss}, \textit{mps}, \textit{mts}) of the half lists $l_1$ and $l_2$. 
In this case, a valid combinator can be obtained from equation $\textit{mss}(l_1 \cat l_2) = \max(\textit{mss}(l_1), \textit{mss}(l_2), \textit{mts}(l_1) + \textit{mps}(l_2))$. 

%We pay special attention to this subtask for the following three reasons: 
We choose this dataset because of the following reasons.
\begin{enumerate}
	\item It is a typical application scenario for the oracle model $\mathbb O_R$. On the one hand, it is difficult to verify the correctness of a program, as the specification involves complex list operation that is difficult to model in SMT-Lib. On the other hand, it is easy to collect input-output examples for the combinator, as all inputs and the output are generated by some executable function, as shown in Equation \ref{equation:spec1}.
	\item \texttt{if-then-else} operators are frequently used in the combinator, as there are usually many possible cases when merging two halves. For example, the combinator for \textit{mss} deals with $3$ cases, corresponds to \texttt{if}-terms $\textit{mss}(l_1), \textit{mss}(l_2)$ and $\textit{mts}(l_1) + \textit{mps}(l_2)$ respectively.
	\item Synthesizing the combinator directly is difficult, as it can be rather complex in practice. \textit{ParSynt} can successfully synthesize the combinator only when a program sketch is provided.
\end{enumerate}	

%We extract the subtask of synthesizing combinators from $\mathcal D_D$, and include the obtained $18$ tasks in our dataset $\mathcal D$. As the ground truth $c^*$ of the combinator is provided in $\mathcal D_D$, oracles $\mathcal O_V$ and $\mathcal O_R$ can be implemented directly:
Though it is difficult to verify the correctness of the synthesized program against the specification involving complex list operations, it is not difficult to verify the equivalence of two CLIA programs. The original dataset provides ground truth program $c^*$ for each task, and thus we can still implement the two oracles. 
\begin{itemize}
	\item Given a candidate program $p$, $\mathcal O_V$ uses an SMT solver to verify whether $p$ and $c^*$ is semantically equivalent on the input space.
	\item $\mathcal O_R$ randomly selects an input and runs $c^*$ to get the corresponding output.
\end{itemize} 

\noindent \textbf{Configurations}.  All of the following experiments are conducted on Intel Core i7-8700 3.2GHz 6-Core Processor with 48GB of RAM. We use Z3~\cite{DBLP:conf/tacas/MouraB08} as the underlying SMT solver for oracles in model $\mathbb O_V$, and generate random inputs for oracles in model $\mathbb O_R$ by setting each input variable to a random integer according to a uniform distribution over $[-50, 50]$.

For each execution, we set the time limit as $120$ seconds, the memory limit as $8$ GB, and the example number limit as $10^4$. Besides, as both \mainname and oracles in model $\mathbb O_R$ have randomness, we repeat all related executions $5$ times and consider the average performance only.

\subsection{Exp1: Comparison of Approaches (RQ1)}

\noindent \textbf{Procedure. } For each oracle model, we compare \mainname with \textit{ESolver}, \textit{Eusolver} and \textit{Euphony} on all benchmarks in $\mathcal D$. Among them, the experiment setting for \textit{Euphony} is slightly different from others, as \textit{Euphony} requires a labeled training set. We run \textit{Euphony} in two steps:
\begin{itemize}
	\item First, for those benchmarks in $\mathcal D$ where the target program is not explicitly provided, we label them using the program synthesized by \mainname. 
	\item Second, we run \textit{Euphony} using 3-fold cross-validation. We divide the dataset $\mathcal D$ into three subsets. On each subset, we run \textit{Euphony} with the model learned from the other two subsets. One delicate point is that $\mathcal D$ contains benchmarks that are almost the same except the number of input variables. We put these benchmarks in the same subset and thus avoid data leaks.
\end{itemize}
\noindent \textbf{Results. } The results are summarized in Table \ref{table:exp1} while more details are drawn as Figure \ref{fig:exp1-and-exp}. To compare the generalizability, in each comparison, for each benchmark solved by both \mainname and the baseline solver, we record the ratio of the number of examples used by the baseline solver to the number of examples used by \mainname. The geometric mean of these ratios is listed in column \textit{\#Example}. Similarly, to compare the efficiency, in each comparison, we record the ratio of the time cost of the baseline solver to the time cost of \mainname for those benchmarks solved by both solvers and list the geometric mean of these ratios in column \textit{Time Cost}. 

\begin{table*}[t]
	\caption{The results of comparing \mainname with baselines.} \label{table:exp1}
	\begin{spacing}{1}
		\small
		\begin{tabular}{|p{1.4cm}<{\centering}|p{1.35cm}<{\centering}|p{1.35cm}<{\centering}|p{1.35cm}<{\centering}|p{1.35cm}<{\centering}|p{1.35cm}<{\centering}|p{1.35cm}<{\centering}|p{1.35cm}<{\centering}|}
			\hline
			Model & \multicolumn{3}{c|}{$\mathbb O_V$} & \multicolumn{4}{c|}{$\mathbb O_R$} \\
			\hline 
			Solver & \#Solved & \#Example & Time Cost &   \#Solved & \#Example & \#Example$\geq$ &  Time Cost \\
			\hline
			\mainname & $97$ & $\times 1.000$ &$\times 1.000$ & $93$& $\times 1.000$& \multirow{2}{*}{}& $\times 1.000$ \\
			\cline{1-6} \cline{8-8}
			\textit{Esolver} & $9$  & $\times 0.969$ & $\times 3.668$& $9$ & $\times 1.065$& & $\times 52.271$  \\
			\hline 
			\textit{Eusolver} & $67$ & $\times 2.418$& $\times 7.017$& $65$ & $\times 1.649 $& $\times 3.320$& $\times 13.035$ \\
			\hline
			\textit{Euphony} & $54$ & $\times 2.394$& $\times 9.196$& $53$ & $\times 1.115$ & $\times 3.302$&$\times 15.067$\\
			\hline
		\end{tabular}
	\end{spacing}
\end{table*}

Comparing with \textit{Esolver}, the generalizability of \mainname is almost the same as \textit{ESolver} on both oracle models. Recall that the theory of Occam learning used by \mainname is only an approximation of the principle used by \textit{Esolver}: Occam learning relaxes the requirement from finding the smallest valid program to finding a valid program with a bounded size. The experimental result shows that such an approximation does not affect the generalizability too much in practice. Meanwhile, benefiting from the relaxed requirement on the size provided by Occam Learning, \mainname performs significantly better on efficiency: \mainname solves more than $10$ times benchmarks comparing with $\textit{Esolver}$, with a significant speed-up on those commonly solved benchmarks.
\begin{figure}[t]
	\centering  
	\vspace{-0.35cm} 
	\subfigtopskip=0pt 
	\subfigbottomskip=0pt
	\subfigure[ Results of Exp1 for $\mathbb O_V$ of example]{
		\label{exp1-cegis-example}
		\includegraphics[width=0.30\linewidth]{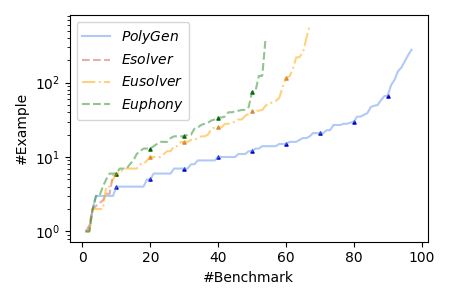}}
	\subfigure[ Results of Exp1 for $\mathbb O_V$ of time]{
		\label{exp1-cegis-time}
		\includegraphics[width=0.30\linewidth]{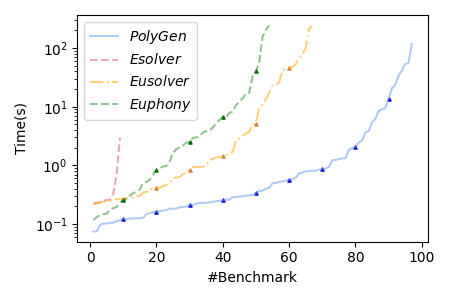}}
	\subfigure[ Results of Exp1 for $\mathbb O_R$ of example]{
		\label{exp1-random-example}
		\includegraphics[width=0.30\linewidth]{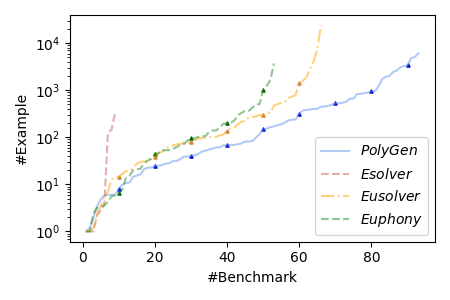}}
	\subfigure[ Results of Exp1 for $\mathbb O_R$ of time]{
		\label{exp1-random-time}
		\includegraphics[width=0.30\linewidth]{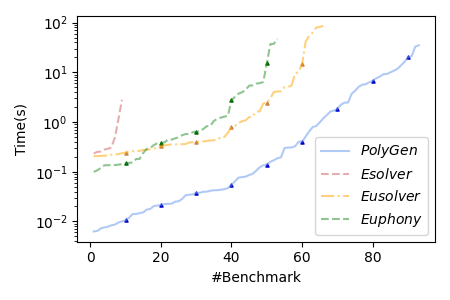}}
	\subfigure[ Results of Exp2 for $\mathbb O_V$ of example]{
		\label{ablation-cegis-example}
		\includegraphics[width=0.30\linewidth]{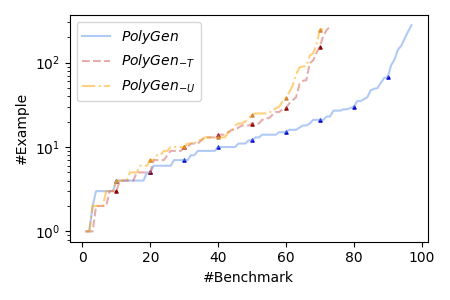}}
	\subfigure[ Results of Exp2 for $\mathbb O_V$ of time]{
		\label{ablation-cegis-time}
		\includegraphics[width=0.30\linewidth]{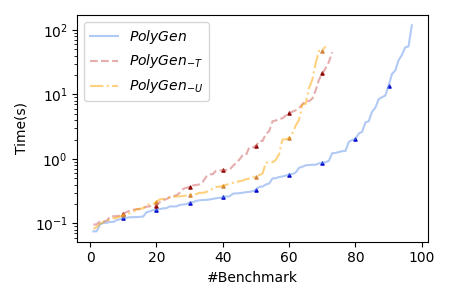}}
	\subfigure[ Results of Exp2 for $\mathbb O_R$ of example]{
		\label{ablation-random-example}
		\includegraphics[width=0.30\linewidth]{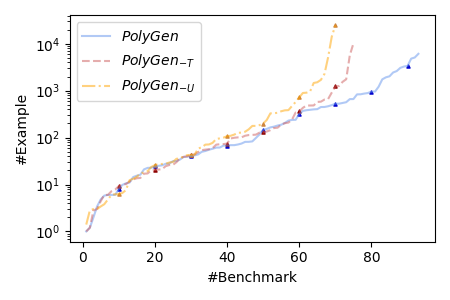}}
	\subfigure[ Results of Exp2 for $\mathbb O_R$ of time]{
		\label{ablation-random-time}
		\includegraphics[width=0.30\linewidth]{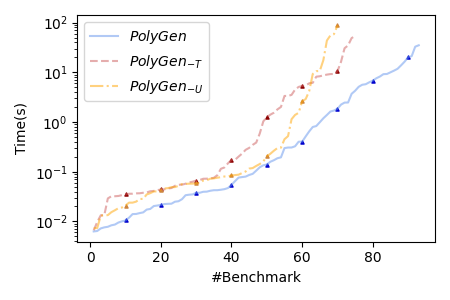}}
	
	\caption{The results of exp1 and exp2. For each approach, we sort its solved benchmarks in the increasing order of the time cost or number of examples needed. Scatters are added for every 10 benchmarks. }
	\label{fig:exp1-and-exp}
\end{figure}

Comparing with \textit{Eusolver} and \textit{Euphony}, \mainname performs significantly better on both generalizability and efficiency. The expreimental result on the generalizability is consistent with our theoretical analysis, as \mainname is an Occam solver but \textit{Eusolver}, \textit{Euphony} are not. Comparing with $\mathbb O_V$, the advantage of \mainname on oracle model $\mathbb O_R$ seems less attractive. The reason is that on $\mathbb O_R$, the cost of distinguishing an incorrect program increases, and thus the effect of synthesis algorithms is weakened. For example, in benchmark \texttt{qm\_neg\_1.sl}, it is hard to distinguish between the target program is $p^*(x) = (\texttt{if}\ (x < 0)\ \texttt{then}\  1\ \texttt{else}\ 0)$ and a wrong program with a slightly different \texttt{if}-condition $p'(x) = (\texttt{if}\ (x \leq 0)\ \texttt{then}\  1\ \texttt{else}\ 0)$ in model $\mathbb O_R$, as the probability for a random input to distinguish them is smaller than $1\%$ when the input is in the range $[-50, 50]$.

Please note that the comparison in Column \textit{\#Example} suffers from a survivorship bias: On a large number of cases that \mainname solves while {\it Eusolver} or \textit{Euphony} does not, \mainname is likely to have better generalizability. To validate this point, we perform an extra experiment on model $\mathbb O_R$. We iteratively rerun \textit{Eusolver} and \textit{Euphony} on those benchmarks where \mainname solves but \textit{Eusolver} or \textit{Euphony} does not. Starting from only $1$ random example, we invoke \textit{Eusolver} or \textit{Euphony} with an enlarged time-limit of 30 minutes. If the solver synthesizes an incorrect program, we record the current number of examples as a lower bound of the generalizability, and then continue to the next iteration by doubling the number of examples. The iteration stops when \textit{Eusolver} or \textit{Euphony} successfully synthesizes a correct program or times out. After adding these lower bounds, the geometric mean of the ratios between the lower bounds and the number of examples used by \mainname is reported in column \textit{\#Example}$\leq$ of Table \ref{table:exp1}. The result justifies the existence of the survivorship bias as the ratios increase from $\times1.115-\times1.649$ to $\times3.320-\times3.302$. Please note that the new experiment still favors to \textit{Eusolver} and \textit{Euphony} because (1) the iteration only provides a coarse lower bound on the number of required examples, (2) the survivorship bias still exists as \textit{Eusolver} and \textit{Euphony} still time out on 27 and 28 out of 93 benchmarks, respectively. %Therefore, the actual advantage of \mainname against \textit{Eusolver} and \textit{Euphony} on generalizability should be even more significant than these values reported in Table \ref{table:exp1}.

In terms of efficiency, \mainname solves almost all benchmarks in $\mathcal D$ on both oracle models. We investigate those benchmarks where \mainname times out, and conclude two major reasons:
\begin{itemize}
	\item As the time cost of $\mathcal T_{\text{poly}}$ grows quickly when the number of \texttt{if}-terms increases, \mainname may time out when a large term set is used. For example, \mainname fails in finding a valid term set for \texttt{array\_serach\_15.sl} as this benchmark requires $16$ different \texttt{if}-terms.
	\item As $\mathcal U_{\text{poly}}$ consider conditions in the increasing order of the size, \mainname may time out when a large condition is used. For example, \mainname times out on \texttt{mpg\_example3.sl} which uses \texttt{if}-condition $2x + 2y - z - 7 \leq 0$. This defect can be improved by combining \mainname with techniques on feature synthesis~\cite{DBLP:journals/corr/PadhiM17}.
\end{itemize}

At last, a noteworthy result is that \textit{Euphony} performs even worse than \textit{Eusolver} in our evaluation, implying that the model used in \textit{Euphony} plays a negative role. %One possible reason is that the term solver in \textit{Euphony} enumerates on all possible terms in the decreasing order of the learned probability, and thus the learned model needs to prefer \textbf{all} target terms: If there is some target term assigned with a low probability, the term solver will have to enumerate on a lot more terms. However, such a target is difficult in CLIA, where the number of terms is usually large and the structure features are scattered.
One possible reason is that the target programs in our dataset are diverse and are not easily predictable with a simple probabilistic model considering only the dependency between program elements. 

\subsection{Exp2: Comparison of the Term Solver and the Unifier (RQ2)}

\noindent \textbf{Procedure. } In this experiment, we test how $\mathcal T_{\text{poly}}$ and $\mathcal U_{\text{poly}}$ affect the performance of \mainname. 

Here, we implement two weakened solvers $\mainname_{-T}$ and $\mainname_{-U}$: $\mainname_{-T}$ replaces term solver $\mathcal T_{\text{poly}}$ with the term solver $\mathcal T_{E}$ used in \textit{Eusolver}, and $\mainname_{-U}$ replaces unifier $\mathcal U_{\text{poly}}$ with the unifier $\mathcal U_E$ used in \textit{Eusolver}. For each oracle model, we run these solvers on all benchmarks in $\mathcal D$.

\noindent \textbf{Results. } The results are summarized in Table \ref{table:exp2} while more details are drawn as Figure \ref{fig:exp1-and-exp}.

\begin{table*}[t]
	\caption{The results of comparing \mainname with weakened solvers.} \label{table:exp2}
	\begin{spacing}{1}
		\small
		\begin{tabular}{|p{1.5cm}<{\centering}|p{1.4cm}<{\centering}|p{1.4cm}<{\centering}|p{1.4cm}<{\centering}|p{1.4cm}<{\centering}|p{1.4cm}<{\centering}|p{1.4cm}<{\centering}|}
			\hline
			Model & \multicolumn{3}{c|}{$\mathbb O_V$} & \multicolumn{3}{c|}{$\mathbb O_R$} \\
			\hline 
			Solver & \#Solved & \#Example & Time Cost &   \#Solved & \#Example & Time Cost \\
			\hline
			\mainname & $97$ & $\times 1.000$ &$\times 1.000$ & $93$& $\times 1.000$& $\times 1.000$ \\
			\hline 
			$\mainname_{-T}$ & $73$  & $\times  1.154$ & $\times2.001$& $75$ & $\times0.956$& $\times  2.728$  \\
			\hline 
			$\mainname_{-U}$ & $71$ & $\times 1.644$ & $\times1.95$& $70$ & $\times 1.307 $& $\times  1.943$ \\
			\hline
		\end{tabular}
	\end{spacing}
\end{table*}

As shown in Table \ref{table:exp2}, the unifier $\mathcal U_{\text{poly}}$ improves a lot on both efficiency and generalizability. In contrast, $\mathcal T_{\text{poly}}$ majorly contributes to the efficiency, as the generalizability of \mainname changes little when $\mathcal T_{\text{poly}}$ is replaced. The reason is that the number of examples required by the unifier usually dominates the number of examples required by the term solver, because an example for synthesizing \texttt{if}-conditions, where the output type is Boolean, provides much less information than an example for synthesizing \texttt{if}-terms, where the output is an integer.

%Besides the generalizability, one important advantage of $\mathcal T_{\text{poly}}$ comparing with $\mathcal T_E$ is that $\mathcal T_{\text{poly}}$ can quickly find some complex \texttt{if}-terms. 
%We also observe that $\mathcal T_{\text{poly}}$ synthesizes very large terms to solve some benchmarks while $\mathcal T_E$ fails. 
%For example, on benchmark \texttt{mpg\_example4.sl}, $\mathcal T_{\text{poly}}$ is able to find a complex term set $\{10x + 20y + 15z - 99, 9y + 25w - 11, 11x + 15y + 30z + 22w + 11, 16x + 18z + 5w - 55\}$. On the other hand, $\mathcal T_E$ enumerates the terms from small to large and cannot handle such a scale.

	\section{Conclusion}\label{section:conclusion}
	In this paper, we adopt a concept from computational learning theory, Occam Learning, to study the generalizability of the STUN framework. On the theoretical side, we provide a sufficient set of conditions for individual components in STUN to form an Occam solver and prove that \textit{Eusolver}, a state-of-the-art STUN solver, is not an Occam solver. Besides, we design an Occam solver \mainname for the STUN framework. On the practical side, we instantiate \mainname on the domains of CLIA and evaluate it against state-of-the-art PBE solvers on 100 benchmarks and 2 common oracle models. The evaluation shows that (1) \mainname significantly outperforms existing STUN solvers on both efficiency and generalizability, and (2) \mainname keeps almost the same generalizability with those solvers that always synthesize the smallest program, but achieves significantly better efficiency.
	
	%%% -*-BibTeX-*-
%%% Do NOT edit. File created by BibTeX with style
%%% ACM-Reference-Format-Journals [18-Jan-2012].

	\newpage
	%!TEX root = ../CLIA.tex
\clearpage
\appendix
\section{Appendix: Occam Learning} \label{appendix:ocaam learning}
In this section, we compare the difference between the original definition of Occam Learning provided by \citet{DBLP:journals/ipl/BlumerEHW87} and the extended definition used in this paper (Definition \ref{definition:occam}). 

\begin{definition}[Original Definition of Occam Learning] \label{definition:occam-origin} Let $\mathbb C, \mathbb H$ be the concept classes containing target concepts and hypotheses respectively. Then, for constants $\alpha \geq 0$ and $0 \leq \beta < 1$, a learning algorithm $\mathcal L$ is an $(\alpha, \beta)$-Occam algorithm for $\mathbb C$ using $\mathbb H$ iff, given a set $S = \{x_1, \dots, x_m\}$ of $m$ samples labeled according to a concept $c \in \mathcal \mathbb C$, $\mathcal L$ outputs a hypothesis $h \in \mathbb H$ such that:
	\begin{itemize}
		\item $h$ is consistent with $c$ on $S$, i.e., $\forall x \in S, h(x) = c(x)$.
		\item $\text{size}(h) \leq (N \cdot \text{size}(c))^{\alpha}m^{\beta}$.
	\end{itemize} 
where $N$ is the maximum length of any sample $x \in S$, $\text{size}(c)$ and $\text{size}(h)$ are the lengths of the binary representations of concept $c$ and hypothesis $h$ respectively. 
\end{definition}

\begin{definition}[Definition \ref{definition:occam}] \label{definition:occam-ours}For constants $\alpha \geq 1, 0 \leq \beta < 1$, PBE solver $\mathcal S$ is an $(\alpha, \beta)$-Occam solver on a family of domains $\mathcal F$ if there exist constants $c, \gamma > 0$ such that for any program domain $\mathbb D \in \mathcal F$, for any target program $p^* \in \mathbb P$, for any input set $\{I_1, \dots, I_n\} \subseteq \mathbb I$, for any error rate $\epsilon \in \left(0, \frac{1}{2}\right)$:
	$$
	\Pr\left[\text{size}\left(\mathcal S\big(T(p^*, I_1, \dots, I_n)\big)\right) > c\left(\text{size}(p^*)\right)^{\alpha}n^{\beta}\ln^{\gamma}\left(\frac{1}{\epsilon}\right)\right] \leq \epsilon
	$$
	where \text{size}$(p)$ is the length of the binary representation of program $p$, $T(p, I_1, \dots, I_n)$ is defined as the PBE task corresponding to target program $p^*$ and inputs $I_1, \dots, I_n$.
\end{definition}

To adopt the concept Occam Learning to our paper, we firstly replace terms in Definition \ref{definition:occam-origin} with their counterparts in programming by example:
\begin{itemize}
	\item The classes of concepts $\mathbb C$ and hypotheses $\mathbb H$ both correspond to the program space $\mathbb P$.
	\item The learner $\mathcal L$ corresponds to a PBE solver $\mathcal S$.
	\item The target concept $c$, samples $S =\{x_1, \dots, x_m\}$, the set of labeled samples and the learned hypothesis $h$ correspond to the target program $p^*$, the set of inputs $\{I_1, \dots, I_n\}$, the PBE task $T(p^*, I_1, \dots, I_n)$ and the synthesized program $\mathcal S(T(p^*, I_1, \dots, I_n))$ respectively.
\end{itemize}

Second, for simplicity, we assume that the range of an input variable is finite and bounded, and thus ignore the cost of expressing samples, i.e., variable $N$ in Definition \ref{definition:occam-origin}. When the range of an input variable is bounded, a sample can be expressed using $O(k)$ bits, where $k$ is the number of input variables, and thus $N$ can be bounded by $O(\text{size}(p^*))$. Therefore, at this time, removing $N$ from the upper bound would not affect the class of Occam algorithms (solvers), though the concrete values of constants $\alpha$ and $\beta$ may be changed. 

Third, we extend Definition \ref{definition:occam-origin} to support randomness by introduing error rate $\epsilon$. In Definition \ref{definition:occam-ours}, a random PBE solver $\mathcal S$ is allowed to returned a program larger than the upper bound, but the probability should be bounded. In Definition \ref{definition:occam-ours}, factor $\ln(1 /\epsilon)$ is introduced to limit the size of the returned programs concentrate to the original polynomial bound, and thus we could prove a theoretical guarantee on the generalizability of a \textit{random} Occam solver that is similar with the guarantee for deterministic Occam solvers.

\begin{theorem} [Theorem \ref{theorem:occam-error}] Let $\mathcal S$ be an $(\alpha, \beta)$-Occam solver on domain $\mathbb D$. Then there exist constants $c, \gamma >0$ such that for any $0 < \epsilon, \delta < 1$, for any distribution $D$ over $\mathbb I$ and for any target program $p^* \in \mathbb P$:
	$$
	\small
	\forall n > c \left(\frac{1}{\epsilon}\ln\left(\frac{2}{\delta}\right) + \left(\frac{(\text{size}(p^*))^{\alpha}\ln^{\gamma}(2 / \delta)}{\epsilon}\right)^{1/(1-\beta)}\right), \Pr_{I_i \sim D} \left[\text{err}_{D,p^*}\left(\mathcal S\big(T(p^*, I_1, \dots, I_n)\big)\right) \geq \epsilon\right] \leq \delta
	$$
	where $\text{err}_{D,p^*}(p)$ represents the error rate of program $p$ when the input distribution is $D$ and the target program is $p^*$, i.e., $\text{err}_{D,p^*}(p) \coloneqq \Pr_{I\sim D} \left[\sem{p}(I) \neq \sem{p^*}(I)\right]$. 
\end{theorem}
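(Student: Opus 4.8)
The plan is to follow the classical Occam's Razor argument of \citet{DBLP:journals/ipl/BlumerEHW87}, adapted to handle the randomness of $\mathcal S$ by splitting the failure budget $\delta$ into two halves. First I would invoke the defining property of an $(\alpha,\beta)$-Occam solver (Definition \ref{definition:occam}) with its error parameter set to $\delta/2$: this yields constants $c', \gamma'$ such that, writing $s \coloneqq c'(\text{size}(p^*))^{\alpha} n^{\beta}\ln^{\gamma'}(2/\delta)$, we have $\Pr[\text{size}(\mathcal S(\cdot)) > s] \leq \delta/2$ for every fixed input sequence, and hence (averaging over the i.i.d.\ draw of inputs) also over the joint randomness of the inputs and the solver. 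Thus with probability at least $1 - \delta/2$ the synthesized program lies in the finite set $\mathbb P_{\leq s}$ of programs of size at most $s$, whose cardinality is bounded by $2^s$ via Lemma \ref{lemma:size}.

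Next I would bound the probability that the output is simultaneously small and inaccurate by a union bound over $\mathbb P_{\leq s}$. Fix any program $p$ with $\text{err}_{D,p^*}(p) \geq \epsilon$; since the inputs are drawn i.i.d.\ from $D$, the probability that $p$ agrees with $p^*$ on all $n$ of them---i.e.\ that $p$ is consistent with the whole task---is at most $(1-\epsilon)^n \leq e^{-\epsilon n}$. Because $\mathcal S$ is a valid PBE solver (Definition \ref{def:pbe}), its output is always consistent with the given examples; therefore the event ``output small and $\text{err}_{D,p^*} \geq \epsilon$'' implies the existence of a consistent program in $\mathbb P_{\leq s}$ with error at least $\epsilon$, an event depending only on the inputs. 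The union bound then gives $\Pr[\exists\, p \in \mathbb P_{\leq s}: p \text{ consistent}, \text{err}_{D,p^*}(p) \geq \epsilon] \leq 2^s e^{-\epsilon n}$, and combining the two contributions yields $\Pr[\text{err}_{D,p^*}(\mathcal S(\cdot)) \geq \epsilon] \leq \tfrac{\delta}{2} + 2^s e^{-\epsilon n}$.

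It then remains to choose $n$ large enough that $2^s e^{-\epsilon n} \leq \delta/2$, i.e.\ $\epsilon n \geq (\ln 2)\, s + \ln(2/\delta)$. The delicate point---and the one I expect to be the main obstacle---is that $s$ itself grows like $n^{\beta}$, so this is an \emph{implicit} inequality in $n$ rather than an explicit lower bound. I would resolve it by requiring each of the two terms on the right to be dominated by $\epsilon n / 2$ separately: the additive $\ln(2/\delta)$ term is handled by $n \geq \tfrac{2}{\epsilon}\ln(2/\delta)$, while the term $(\ln 2)\,s \propto n^{\beta}$ is handled by $\tfrac{\epsilon}{2} n^{1-\beta} \geq (\ln 2)\, c' (\text{size}(p^*))^{\alpha}\ln^{\gamma'}(2/\delta)$, that is, $n \geq \big(2(\ln 2)\, c' (\text{size}(p^*))^{\alpha}\ln^{\gamma'}(2/\delta)/\epsilon\big)^{1/(1-\beta)}$. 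Here the hypothesis $\beta < 1$ is essential, since it is exactly what lets $n^{1-\beta}$ grow without bound and produces the characteristic $1/(1-\beta)$ exponent. Absorbing the numerical factors $2$, $\ln 2$, and $c'$ into a single constant $c$ and setting $\gamma \coloneqq \gamma'$, the sum of these two requirements on $n$ is precisely the stated bound, which completes the proof.
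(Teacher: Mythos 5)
Your proposal is correct and follows essentially the same route as the paper's proof: invoke the Occam-solver definition with failure budget $\delta/2$ to confine the output (with high probability) to the set of programs of size at most $s = c'(\text{size}(p^*))^{\alpha}n^{\beta}\ln^{\gamma'}(2/\delta)$, bound that set's cardinality by $2^{s}$ via Lemma \ref{lemma:size}, union-bound the probability that any $\epsilon$-bad program in it is consistent with all $n$ i.i.d.\ examples by $2^{s}(1-\epsilon)^{n}$, and resolve the resulting implicit inequality in $n$ by splitting it into the two terms that produce the stated $\frac{1}{\epsilon}\ln(2/\delta)$ and $1/(1-\beta)$-exponent contributions. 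Your treatment of the implicit dependence of $s$ on $n^{\beta}$ matches the paper's chain of implications, and your explicit remark that the solver's guaranteed consistency reduces the bad event to an input-only event is a point the paper glosses over.
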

\begin{proof}
	By Definition \ref{definition:occam-ours}, there exists constants $c', \gamma'$ such that:
	\begin{equation}\label{equation:size-delta}
	\Pr\left[\text{size}\left(\mathcal S\big(T(p^*, I_1, \dots, I_n)\big)\right) > c'\left(\text{size}(p^*)\right)^{\alpha}n^{\beta}\ln^{\gamma'}\left(\frac{2}{\delta}\right)\right] \leq \frac{\delta}{2}
	\end{equation}
Let $\mathbb P_{\delta}$ be the set of programs sastisfying that the size is no larger than $c'\left(\text{size}(p^*)^{\alpha}n^{\beta}\ln^{\gamma}(2/\delta)\right)$. By Lemma \ref{lemma:size}, we have an upper bound on $|\mathbb P_{\delta}|$:
\begin{equation*}
\ln |\mathbb P_\delta| \leq c'(\ln2)\left(\text{size}(p^*)^{\alpha}n^{\beta}\ln^{\gamma'}(2/\delta)\right)
\end{equation*}

Let $\mathcal E$ be the event where $\mathcal S$ returns a program outside $\mathbb P_\delta$. Then by Equation \ref{equation:size-delta}, $\Pr[\mathcal E] \leq \frac{\delta}{2}$. Besides, let $\mathbb P_{\epsilon, \delta} \subseteq \mathbb P_{\delta}$ be the set of program $p$ in $\mathbb P_{\delta}$ satisfying $\textit{err}_{D,p^*}(p) \geq \epsilon$. Then:
\begin{align*}
\Pr_{I_i \sim D}\left[\textit{err}_{D,p^*}\left(\mathcal S\big(T(p^*,I_1, \dots, I_n)\big)\right) \geq \epsilon\right] &\leq \Pr_{I_i \sim D}[\mathcal E] + \Pr_{I_i\sim D}\left[\forall p \in \mathbb P_{\epsilon, \delta}, \big(\exists I \in [1,n], \sem{p}(I) \neq \sem{p'}(I)\big)\right] \\
& \leq \frac{\delta}{2} + |\mathbb P_{\delta}| (1 - \epsilon)^{n} 
\end{align*}

Therefore, we obtain the following inequality on $n$:
\begin{align*}
&\Pr_{I_i \sim D}\left[\textit{err}_{D,p^*}\left(\mathcal S\big(T(p^*,I_1, \dots, I_n)\big)\right) \geq \epsilon\right] \leq \delta \\\Longleftarrow&\ |\mathbb P_{\delta}|(1-\epsilon)^n \leq \frac{\delta}{2} \\
 \Longleftarrow &\ n \ln \left(\frac{1}{1 - \epsilon}\right) n \geq \ln\left(\frac{2}{\delta}\right) + c'(\ln 2)\left(\text{size}(p^*)^{\alpha}n^{\beta}\ln^{\gamma'}\left(\frac{2}{\delta}\right)\right) \\
 \Longleftarrow &\ n\geq \frac{1}{\epsilon}\ln\left(\frac{2}{\delta}\right) + \frac{c'(\ln 2)}{\epsilon}\left(\text{size}(p^*)^{\alpha}n^{\beta}\ln^{\gamma'}\left(\frac{2}{\delta}\right)\right) \\
  \Longleftarrow &\ n\geq c \left(\frac{1}{\epsilon}\ln\left(\frac{2}{\delta}\right) + \left(\frac{(\text{size}(p^*))^{\alpha}\ln^{\gamma}(2 / \delta)}{\epsilon}\right)^{1/(1-\beta)}\right)
\end{align*}
where $c$ is a large enough constant and $\gamma = \gamma'$.
\end{proof}

\section{Appendix: Proofs} \label{appendix:proofs}
In this section, we complete the proofs of the theorems in our paper.

\begin{lemma} [Lemma \ref{lemma:size}]For any domain $\mathbb D$, $\forall p \in \mathbb P$, 
	$ \left|\left\{p' \in \mathbb P\  |\ \text{size}(p') \leq \text{size}(p)\right\} \right| \leq 2^{\text{size}(p)}
	$.
\end{lemma}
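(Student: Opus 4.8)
The plan is to exhibit an injective, prefix-free binary encoding of programs whose codeword length on $p'$ is exactly $\text{size}(p')$, and then invoke Kraft's inequality. Recall from Definition \ref{def:size} that a program $p'$ is identified with its left-most derivation, i.e.\ a sequence of $|p'|$ grammar rules, and that $\text{size}(p') = \lceil \log_2 N\rceil \cdot |p'|$, where $N$ is the number of distinct rules. I would encode $p'$ by writing out this rule sequence, spending exactly $b \coloneqq \lceil \log_2 N\rceil$ bits per rule (there are $2^b \geq N$ blocks available, so each rule receives a distinct $b$-bit block). The resulting string $\text{enc}(p')$ then has length exactly $b\,|p'| = \text{size}(p')$. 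Injectivity is immediate: a left-most derivation is unique and determines the program, so distinct programs give distinct rule sequences and hence distinct encodings.

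The key step is to show that $\text{enc}$ is \emph{prefix-free}. Since every codeword length is a multiple of $b$, a binary-prefix relation $\text{enc}(p_1) \sqsubseteq \text{enc}(p_2)$ forces the rule sequence of $p_1$ to be an initial segment of that of $p_2$. But a complete left-most derivation leaves no non-terminal available to expand, so it can never be a \emph{proper} initial segment of another valid derivation; therefore $\text{enc}(p_1) = \text{enc}(p_2)$ and $p_1 = p_2$. Equivalently, decoding proceeds by reading one $b$-bit block at a time, applying the corresponding rule, and halting once no non-terminal remains, which is unambiguous.

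Granting prefix-freeness, Kraft's inequality gives $\sum_{p' \in \mathbb P} 2^{-\text{size}(p')} \leq 1$. Restricting to $S \coloneqq \{p' \in \mathbb P : \text{size}(p') \leq \text{size}(p)\}$ and using $2^{-\text{size}(p')} \geq 2^{-\text{size}(p)}$ for every $p' \in S$ yields $|S|\cdot 2^{-\text{size}(p)} \leq \sum_{p' \in S} 2^{-\text{size}(p')} \leq 1$, that is, $|S| \leq 2^{\text{size}(p)}$, which is exactly the claim.

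I expect the prefix-free property to be the main obstacle, precisely because the naive alternative fails. Using injectivity alone and counting binary strings of length at most $s \coloneqq \text{size}(p)$ only bounds $|S|$ by $\sum_{i=0}^{s} 2^i = 2^{s+1}-1$, which is too weak; indeed, for small $N$ (e.g.\ $N=2$) the crude per-level count $\sum_k N^k$ genuinely overshoots $2^{s}$. Exploiting that a completed derivation cannot be extended is exactly what collapses this slack and recovers the tight factor $2^{\text{size}(p)}$.
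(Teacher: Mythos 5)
Your proof is correct, and it takes a genuinely different route from the paper's. The paper's proof is a padded counting argument: it maps each $p'$ with $\text{size}(p') \leq \text{size}(p)$ to its leftmost derivation $r_1 \dots r_{n'}$ padded with copies of a fixed rule $r^*$ up to length $n = |p|$, asserts this map $\varphi$ into $R^n$ is injective, and concludes $|\mathbb P_p| \leq N^n \leq 2^{\text{size}(p)}$ --- no encoding and no Kraft inequality. Your version instead builds a fixed-block binary code of length exactly $\text{size}(p')$, proves it prefix-free, and invokes Kraft. The interesting point is that both arguments secretly rest on the same structural fact, namely that a completed leftmost derivation leaves no nonterminal and therefore cannot be a proper initial segment of another valid derivation. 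In the paper this is exactly what makes $\varphi$ injective (a derivation that genuinely ends in occurrences of $r^*$ could otherwise collide with a shorter derivation after padding), yet the paper dismisses it with ``Clearly, $\varphi$ is an injection''; your prefix-freeness argument is the only place this fact is actually proved, so your write-up is in that respect more complete. What the paper's route buys is elementarity: counting sequences of length exactly $n$ sidesteps the $\sum_{i=0}^{s} 2^i = 2^{s+1}-1$ slack you correctly flag, without needing Kraft; what your route buys is robustness --- it makes the key lemma explicit, works unchanged for countably infinite program spaces, and would survive generalizations (e.g.\ variable-length rule encodings) where exact-length padding is unavailable. One degenerate case both proofs silently exclude is $N=1$, where $b=0$ and the encoding collapses; there the bound holds trivially since a one-rule grammar derives at most one program.
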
 
\begin{proof}
	Let $n$ be the number of grammar rules used to derive program $p$, $R$ be the set of available grammar rules, and $N$ be the size of $R$. Let $r^*$ be an arbitrary rule in $R$. 
	
	Let $\mathbb P_p$ be the set of all program $p'$ satisfying $\text{size}(p') \leq \text{size}(p)$. Define function $\varphi: \mathbb P_p \mapsto R^n$ as $\varphi(p) \coloneqq r_1\dots r_{n'}r^*\dots r^*$, where $r_1 \dots r_{n'}$ is the leftmost derivation of program p', and $r^*$ repeats for $n-n'$ times. Clearly, $\varphi$ is an injection, i.e., $\forall p_1, p_2 \in \mathbb P_p, p_1 \neq p_2 \implies \varphi(p_1) \neq \varphi(p_2)$.
	
	Therefore, $|\mathbb P_p| \leq |R^n| \leq N^n \leq 2^{\text{size}(p)}$.
\end{proof}

\begin{theorem}[Theorem \ref{theorem:min-rand}] Let $\mathcal F^A$ be the family of all possible domains. Then $\mathcal S_{\text{min}}$ is an $(1, 0)$-Occam solver on $\mathcal F^A$, and $\mathcal S_{\text{rand}}$ is not an Occam solver on $\mathcal F^A$.
\end{theorem}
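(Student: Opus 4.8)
The plan is to treat the two halves separately, dispatching the claim about $\mathcal S_{\text{min}}$ quickly and then spending the real effort on $\mathcal S_{\text{rand}}$. For $\mathcal S_{\text{min}}$, the key observation is that the target program $p^*$ is itself consistent with the task $T(p^*, I_1, \dots, I_n)$, so $p^* \in \mathbb P(T)$ and, since $\mathcal S_{\text{min}}$ returns a size-minimal consistent program, $\text{size}(\mathcal S_{\text{min}}(T)) \leq \text{size}(p^*)$ holds \emph{deterministically}. Taking $\alpha = 1$, $\beta = 0$, $\gamma = 1$, and $c = 2$, the bound $c\,(\text{size}(p^*))^{1} n^{0} \ln(1/\epsilon) \geq 2\ln 2 \cdot \text{size}(p^*) > \text{size}(p^*)$ is satisfied for every $\epsilon \in (0, 1/2)$ (using $\ln(1/\epsilon) > \ln 2$), so the bad event has probability $0 \leq \epsilon$. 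Hence $\mathcal S_{\text{min}}$ is a $(1,0)$-Occam solver on $\mathcal F^A$, matching Definition \ref{definition:occam}.

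For $\mathcal S_{\text{rand}}$ I would argue that it fails to be an $(\alpha,\beta)$-Occam solver for \emph{every} admissible pair $(\alpha,\beta)$. So I fix arbitrary $\alpha \geq 1$, $0 \leq \beta < 1$ and arbitrary candidate constants $c, \gamma > 0$ from the definition, then construct a domain, target, input set, and error rate that violate the required inequality. The construction is a family $\mathbb D_m \in \mathcal F^A$ built from the two-rule grammar $E \to 0 \mid E + E$, so that every syntax tree evaluates to the constant $0$ on every input; I let $\mathbb P_m$ consist of all such trees with at most $m$ leaves and take the input space to be a single point. Choosing the target $p^* = 0$ (so $\text{size}(p^*) = O(1)$ is constant in $m$), using $n = 1$ example and error rate $\epsilon = 1/4$, the Occam threshold $B \coloneqq c\,(\text{size}(p^*))^{\alpha} n^{\beta} \ln^{\gamma}(1/\epsilon)$ is a fixed constant independent of $m$.

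The heart of the argument is then a counting estimate. Since every program in $\mathbb P_m$ computes $0$, it is consistent with the single example, so $\mathbb P(T) = \mathbb P_m$ and $\mathcal S_{\text{rand}}$ draws uniformly from $\mathbb P_m$. By the injection-into-rule-sequences argument underlying Lemma \ref{lemma:size}, the number of programs of size at most $B$ is bounded by a constant $K \leq 2^{B}$, independent of $m$, whereas $|\mathbb P_m| \to \infty$ (the number of binary trees with at most $m$ leaves grows without bound). Therefore
\[
\Pr\!\left[\text{size}\!\left(\mathcal S_{\text{rand}}(T)\right) > B\right]
= 1 - \frac{\left|\{p \in \mathbb P_m : \text{size}(p) \leq B\}\right|}{|\mathbb P_m|}
\geq 1 - \frac{K}{|\mathbb P_m|},
\]
which exceeds $\epsilon = 1/4$ once $m$ is large enough that $|\mathbb P_m| > 4K/3$. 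This contradicts the Occam inequality for the chosen $c, \gamma$, and since $\alpha, \beta$ were arbitrary, $\mathcal S_{\text{rand}}$ is not an Occam solver.

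The main obstacle is engineering the $\mathcal S_{\text{rand}}$ counterexample so that three things hold simultaneously: the consistent set $\mathbb P(T)$ is unboundedly large, the target stays of constant size, and \emph{every} candidate program is consistent regardless of the examples. The semantics-preserving grammar (all trees collapse to $0$) is what buys the third property, decoupling consistency from size and ensuring $\mathbb P(T) = \mathbb P_m$ for any input set; the counting bound of Lemma \ref{lemma:size} then does the rest essentially for free, since only finitely many programs can be small while the uniform measure spreads over an unboundedly large space. The only care required is to confirm $|\mathbb P_m| \to \infty$ (Catalan-type growth of binary trees suffices) and that the set of small programs does not itself grow with $m$, both of which are immediate from the fixed grammar.
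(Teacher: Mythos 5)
Your proposal is correct and takes essentially the same approach as the paper: the $\mathcal S_{\text{min}}$ half is the identical one-line minimality argument, and for $\mathcal S_{\text{rand}}$ the paper likewise considers a domain in which every program is consistent with the given examples and the target is the smallest program, so that the uniform draw can be arbitrarily larger than $p^*$. Your version merely instantiates the paper's sketch concretely (the semantics-trivial grammar $E \to 0 \mid E + E$ truncated to a finite space, $n = 1$, $\epsilon = 1/4$, and the $2^{B}$ counting bound from Lemma~\ref{lemma:size}), which makes the paper's informal ``can be arbitrarily larger'' step quantitatively precise.
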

\begin{proof}
	We start with $\mathcal S_{\text{min}}$. Let $p^*$ be the target program and $p$ be the program synthesized by $\mathcal S_{\text{min}}$. By the definition of $\mathcal S_{\text{min}}$, $\text{size}(p) \leq \text{size}(p^*)$. Therefore, $\mathcal S_{\text{min}}$ is an $(1, 0)$-Occam solver. 
	
	For $\mathcal S_{\text{rand}}$, suppose that all programs in the program space satisfy all given examples, and the target program $p^*$ is the smallest program in the program space. Let $p$ be the program synthesized by $\mathcal S_{\text{rand}}$. Then, by the definition of $\mathcal S_{\text{rand}}$, $p$ follows a uniform distribution on the program space. Therefore, $\text{size}(p)$ can be arbitrarily larger than $\text{size}(p^*)$, and $\mathcal S_{\text{rand}}$ is not an Occam solver.
\end{proof}

\begin{theorem} [Theorem \ref{theorem:gen-stun}] Let $\mathcal F_C$ be a family of if-closed conditional domains, $\mathcal T$ be an $(\alpha_1, \beta_1)$-Occam term solver on $\mathcal F_C$, $\mathcal U$ be an $(\alpha_2, \beta_2)$-Occam unifier where $\beta_1\alpha_2 + \beta_2< 1$. Then the STUN solver comprised of $\mathcal T$ and $\mathcal U$ is an $((\alpha_1 + 1)\alpha_2, \beta_1\alpha_2 +\beta_2)$-Occam solver.
\end{theorem}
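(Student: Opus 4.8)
The plan is to run the term solver and the unifier in sequence and compose their two size guarantees through a union bound, after inserting one combinatorial lemma that supplies the extra factor responsible for the $(\alpha_1+1)$ in the exponent. Fix a target $p^* \in \mathbb P$, inputs $I_1,\dots,I_n$, and the task $T = T(p^*,I_1,\dots,I_n)$; write $P = \mathcal T(T)$ and let the final output be $\mathcal U(P)(T)$. First I would invoke the Occam term solver on $T$ with error budget $\epsilon/2$, obtaining that with probability at least $1-\epsilon/2$ the term set satisfies $\text{tsize}(P) \le c_1\,\text{size}(p^*)^{\alpha_1} n^{\beta_1}\ln^{\gamma_1}(2/\epsilon)$.

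The crux is then to apply the Occam unifier, whose bound is stated relative to an arbitrary target in the conditional space $(P,\mathbb P_c)$. Since the unifier only sees the task $T$, I need a witness target $p'\in(P,\mathbb P_c)$ that (i) is consistent with every example, so that $T(p',I_1,\dots,I_n)=T$ and the unifier guarantee is applicable, and (ii) has size controlled by $\text{size}(p^*)$ and $\text{tsize}(P)$. I would construct $p'$ by keeping the internal \texttt{if-then-else} skeleton of $p^*$ verbatim and replacing each of its (at most $\text{size}(p^*)$ many) leaf terms $s_i$ by a short decision list over $P$: since $P$ covers every example and the domain is if-closed, for each $t_j\in P$ there is a condition $c_{i,j}\in\mathbb P_c$ testing $\sem{t_j}=\sem{s_i}$, and the list ``\texttt{if} $c_{i,1}$ \texttt{then} $t_1$ \texttt{else} $\dots$ \texttt{else} $t_m$'' outputs the correct value on every example routed to leaf $i$ (the first satisfied guard picks a term that agrees with $s_i$ there, and coverage guarantees one is satisfied, so even the unconditional final branch is consistent). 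Bounding each replaced leaf by $O(\text{tsize}(P)+|P|\,\text{size}(s_i))$ and summing over leaves, using $\sum_i\text{size}(s_i)\le\text{size}(p^*)$ and $|P|\le\text{tsize}(P)$, yields the key estimate $\max(\text{size}(p'),\text{tsize}(P))\le C\,\text{size}(p^*)\,\text{tsize}(P)$ for a universal constant $C$. This is the single place where the extra multiplicative $\text{size}(p^*)$ enters, and hence where the $+1$ in $(\alpha_1+1)\alpha_2$ comes from.

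With $p'$ in hand I would apply the Occam unifier to $P$ with target $p'$ and error budget $\epsilon/2$: conditioning on each fixed value of $P$ (and assuming $\mathcal T$ and $\mathcal U$ use independent randomness), the unifier guarantee gives, with probability at least $1-\epsilon/2$, $\text{size}(\mathcal U(P)(T))\le c_2\big(\max(\text{size}(p'),\text{tsize}(P))\big)^{\alpha_2} n^{\beta_2}\ln^{\gamma_2}(2/\epsilon)$, and averaging over $P$ keeps the failure probability at $\epsilon/2$. On the intersection of the two good events (probability at least $1-\epsilon$ by the union bound) I substitute the key estimate and then the term-solver bound and simplify:
\begin{align*}
\text{size}(\mathcal U(P)(T)) &\le c_2\big(C\,\text{size}(p^*)\,\text{tsize}(P)\big)^{\alpha_2} n^{\beta_2}\ln^{\gamma_2}(2/\epsilon) \\
&\le c\,\text{size}(p^*)^{(\alpha_1+1)\alpha_2}\, n^{\beta_1\alpha_2+\beta_2}\,\ln^{\gamma_1\alpha_2+\gamma_2}(2/\epsilon),
\end{align*}
and finally absorb $\ln(2/\epsilon)\le 2\ln(1/\epsilon)$ (valid for $\epsilon<1/2$) into the constant, matching Definition \ref{definition:occam} with parameters $((\alpha_1+1)\alpha_2,\ \beta_1\alpha_2+\beta_2)$. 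The hypothesis $\beta_1\alpha_2+\beta_2<1$ is exactly what certifies that the resulting exponent on $n$ is a legal Occam exponent.

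I expect the main obstacle to be the combinatorial lemma of the second paragraph: verifying that the leaf-replacement program is genuinely consistent on all examples (especially the fall-through to the final unconditional branch) and that its size is $O(\text{size}(p^*)\,\text{tsize}(P))$ rather than incurring an extra factor of $n$ — the latter would destroy the sub-linear dependence on the number of examples. A secondary, more routine care point is the conditioning argument that lets the unifier's guarantee apply to the random, $P$-dependent target $p'$ without coupling the two algorithms' randomness.
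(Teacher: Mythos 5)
Your proposal is correct and follows essentially the same route as the paper's proof: the paper likewise fixes the two error budgets at $\epsilon/2$, constructs a witness target $\varphi(p^*)$ in $(P,\mathbb P_c)$ by replacing each leaf term $t$ of $p^*$ with the if-closedness-based decision list $\texttt{if}\ (t=t_1)\ \texttt{then}\ t_1\ \texttt{else}\dots\texttt{else}\ t_m$ over the synthesized term set, bounds $\text{size}(\varphi(p^*)) \leq c_3\,\text{size}(p^*)\,\text{tsize}(P)$ (the source of the $+1$ in $(\alpha_1+1)\alpha_2$, exactly as you identify), and then composes the unifier's Occam guarantee with the term solver's bound via a union bound. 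Your extra care points—the fall-through consistency of the final unconditional branch and the conditioning on the random term set $P$—are valid and merely make explicit details the paper's proof leaves implicit.
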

\begin{proof} Let $p^*$ be the target program, and $P = \{t_1, \dots, t_n\}$ be the term set synthesized by $\mathcal T$. By the definition of an Occam term solver, there exists constants $c_1$ and $\gamma_1$ such that:
	$$
	\forall \epsilon_1 \in \left(0, \frac{1}{2}\right), \Pr\left[\text{tsize}\left(P\right) > c_1\left(\text{size}(p^*)\right)^{\alpha_1}n^{\beta_1}\ln^{\gamma_1}\left(\frac{1}{\epsilon_1}\right)\right] \leq \epsilon_1
	$$

We construct the following function $\varphi: \mathbb P_t \mapsto (P, \mathbb P_c)$ that converts a term into the program space of the unifier $\mathcal U$:
$$
\varphi(t) \coloneqq \texttt{if }(t = t_1) \texttt{ then } t_1 \texttt{ else }\dots \texttt{ else if }(t = t_{n-1})\texttt{ then }t_{n-1} \texttt{ else }t_n 
$$
By the definition of $\text{size}(p)$, $\text{size}(\varphi(t)) \leq 2\lceil \log_2 N \rceil + n \cdot \text{size}(t) + 2\text{tsize}(P) \leq c_3\text{size}(t)\text{tsize}(P)$, where $N$ is the number of grammar rules, $c_3$ is a large enough constant.

$\varphi$ can be extend to the whole program space $(\mathbb P_t, \mathbb P_c)$ where $\varphi(p)$ is the program replacing all terms $t$ used in $p$ with $\varphi(t)$. Clearly, $\text{size}(\varphi(p))$ is no larger than $c_3\text{size}(p)\text{tsize}(P)$.

Let $p_u$ be the program synthesized by $\mathcal U$. As $\varphi(p^*)$ is a valid program for the unifier $\mathcal U$, by the definition of  an Occam unifier, there exists constants $c_2$ and $\gamma_2$ such that:
$$
\forall \epsilon_2 \in \left(0, \frac{1}{2}\right), 	\Pr\left[\text{size}\left(p_u\right) > c_2\big(\max(\text{size}(\varphi(p^*)), \text{tsize}(P))\big)^{\alpha_2}n^{\beta_2}\ln^{\gamma_2}\left(\frac{1}{\epsilon_2}\right)\right] \leq \epsilon_2
$$

For any $\epsilon \in \left(0, \frac{1}{2}\right)$, by taking $\epsilon_1 = \epsilon_2 = \frac{1}{2}\epsilon$, we obtain the following inequality:
\begin{align*}
	&\Pr\left[\text{tsize}\left(P\right) > c_1\left(\text{size}(p^*)\right)^{\alpha_1}n^{\beta_1}\ln^{\gamma_1}\left(\frac{2}{\epsilon}\right)\right] \leq \frac{\epsilon}{2} \bigwedge \\
&\Pr\left[\text{size}\left(p_u\right) > c_2\big(\max(\text{size}(\varphi(p^*)), \text{tsize}(P))\big)^{\alpha_2}n^{\beta_2}\ln^{\gamma_2}\left(\frac{2}{\epsilon}\right)\right] \leq \frac{\epsilon}{2} \\
\implies &\Pr\left[\text{size}\left(p_u\right) > c_2\left(c_1c_3\left(\text{size}(p^*)\right)^{\alpha_1+1}n^{\beta_1}\ln^{\gamma_1}\left(\frac{2}{\epsilon}\right)\right)^{\alpha_2}n^{\beta_2}\ln^{\gamma_2}\left(\frac{2}{\epsilon}\right)\right] \leq \epsilon \\
\implies & \Pr\left[\text{size}\left(p_u\right) > c\left(\text{size}(p^*)\right)^{(\alpha_1+1)\alpha_2}n^{\beta_1\alpha_2 + \beta_2}\ln^{\gamma}\left(\frac{1}{\epsilon}\right)\right] \leq \epsilon 
\end{align*}
where $c$ is a large enough constant and $\gamma = \gamma_1 \alpha_2 + \gamma_2$. 

Therefore, the STUN solver comprised of $\mathcal T$ and $\mathcal U$ is an $((\alpha_1 + 1)\alpha_2, \beta_1\alpha_2 +\beta_2)$-Occam solver.
\end{proof}

\begin{theorem}[Theorem \ref{theorem:eusolver-term}] $\mathcal T_E$ is not an Occam term solver on $\mathcal F^A_C$, and \textit{Eusolver} is not an Occam solver on $\mathcal F^A_C$, where $\mathcal F^A_C$ is the family of all if-closed conditional domains.
\end{theorem}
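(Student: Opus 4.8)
The plan is to reduce the probabilistic guarantee in Definition~\ref{def:size-preserve} to a purely deterministic inequality and then contradict it with the single family of domains already constructed in Example~\ref{example:eusolver-size}. Since $\mathcal T_E$ is a deterministic algorithm, for any fixed PBE task the quantity $\text{tsize}(\mathcal T_E(\cdot))$ takes a single value, so the probability appearing in Definition~\ref{def:size-preserve} is either $0$ or $1$. For the required inequality $\Pr[\cdots] \le \epsilon$ to hold for any $\epsilon \in (0, 1/2)$, that probability must be $0$; hence, were $\mathcal T_E$ an $(\alpha,\beta)$-Occam term solver, there would exist constants $c,\gamma > 0$ such that
\[
\text{tsize}\big(\mathcal T_E(T(p^*, I_1,\dots,I_n))\big) \le c\,(\text{size}(p^*))^{\alpha}\, n^{\beta}\,\ln^{\gamma}(1/\epsilon)
\]
for every task and every $\epsilon \in (0,1/2)$. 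Letting $\epsilon \to 1/2$ makes $\ln^{\gamma}(1/\epsilon) \to \ln^{\gamma} 2$, a constant, so this collapses to a deterministic bound $\text{tsize}(\mathcal T_E(\cdot)) \le c'(\text{size}(p^*))^{\alpha} n^{\beta}$ with $c' = c\ln^{\gamma} 2$.

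First I would instantiate this bound on the domain $(\mathbb P_t^n, \mathbb I_t^n)$ and target $p = x+1$ of Example~\ref{example:eusolver-size}, feeding $\mathcal T_E$ the task that contains all $n$ inputs of $\mathbb I_t^n$. The example already establishes that $\mathcal T_E$ returns $P_A = \{2,\dots,n+1\}$ with $\text{tsize}(P_A) = n\lceil \log_2(n+3)\rceil$, while $\text{size}(p) = \lceil \log_2(n+3)\rceil$ and the number of examples is $n$. Substituting into the deterministic bound and dividing gives
\[
\frac{\text{tsize}(P_A)}{(\text{size}(p))^{\alpha} n^{\beta}} = \frac{n\lceil \log_2(n+3)\rceil}{\lceil \log_2(n+3)\rceil^{\alpha}\, n^{\beta}} = n^{1-\beta}\,\lceil \log_2(n+3)\rceil^{1-\alpha},
\]
which tends to infinity as $n \to \infty$ for any fixed $\alpha \ge 1$ and $\beta < 1$. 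Hence no finite $c'$ can satisfy the bound uniformly in $n$, contradicting the assumption; so $\mathcal T_E$ is not an Occam term solver on $\mathcal F^A_C$.

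For the second claim, I would observe that on the same family the program produced by \textit{Eusolver} must contain every term of $P_A$: each constant $x_0+1 \in P_A$ is the only term in $P_A$ correct on the input $x_0$, so omitting it would leave that example uncovered. Consequently the synthesized program has size at least $\text{tsize}(P_A)$, and the identical asymptotic argument---now applied to Definition~\ref{definition:occam} in place of Definition~\ref{def:size-preserve}---rules out any $(\alpha,\beta)$-Occam bound. Therefore \textit{Eusolver} is not an Occam solver on $\mathcal F^A_C$.

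The work here is genuinely light because Example~\ref{example:eusolver-size} has already carried out the combinatorial and asymptotic calculations; the only real subtlety, and the step I would be most careful about, is the reduction from the probabilistic Occam definition to a deterministic inequality---in particular arguing that the $\ln^{\gamma}(1/\epsilon)$ slack contributes only a multiplicative constant and so cannot rescue a bound that already fails by a factor growing like $n^{1-\beta}$.
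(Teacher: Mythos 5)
Your proof is correct and takes essentially the same route as the paper, whose entire proof is a one-line citation of Example~\ref{example:eusolver-size}: you use the same domain family $(\mathbb P_t^n, \mathbb I_t^n)$, the same term set $P_A$ with $\text{tsize}(P_A) = n\lceil\log_2(n+3)\rceil$, and the same asymptotic contradiction for both claims. Your explicit reduction of the probabilistic Occam condition to a deterministic bound (noting the probability is $0$ or $1$ for the deterministic $\mathcal T_E$ and that the $\ln^{\gamma}(1/\epsilon)$ factor collapses to a constant) is a step the paper leaves implicit, and is handled correctly.
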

\begin{proof} This theorem is directly from Example \ref{example:eusolver-size}.
\end{proof}

\begin{theorem}[Theorem \ref{theorem:eusovler-unifier}] There is no polynomial-time Occam unifier on $\mathcal F^A_C$ unless $\mathsf{NP} = \mathsf{RP}$.
\end{theorem}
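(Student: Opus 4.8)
The plan is to argue the contrapositive: I will show that a polynomial-time $(\alpha,\beta)$-Occam unifier on $\mathcal F^A_C$ would yield a polynomial-time (randomized) algorithm that PAC-learns Boolean decision trees from a polynomial number of examples, which by \citet{DBLP:conf/stacs/HancockJLT95} is impossible unless $\mathsf{NP}=\mathsf{RP}$. The bridge from ``Occam'' to ``PAC with few examples'' is exactly Theorem~\ref{theorem:occam-error}, so the real work is to embed the decision-tree learning problem inside the unification problem for a suitable if-closed conditional domain.

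First I would fix the embedding. For each $d$ I would define a domain $\mathbb D_d\in\mathcal F^A_C$ whose input space is $\mathbb I=\{0,1\}^d$, whose term space is the two constants $\mathbb P_t=\{0,1\}$, and whose condition space $\mathbb P_c$ consists of the $2d$ literals $x_1,\dots,x_d,\neg x_1,\dots,\neg x_d$ together with the conditions $\texttt{true}$ and $\texttt{false}$. This domain is if-closed in the sense of Definition~\ref{def:if-closed}: the equality of two constant terms is witnessed by $\texttt{true}$ when they coincide and by $\texttt{false}$ when they differ, and both conditions lie in $\mathbb P_c$. With $P=\{0,1\}$ the conditional space $(P,\mathbb P_c)$ is precisely the class of Boolean decision trees over $d$ variables (internal tests are literals, leaves are $0$ or $1$), so any target $p^*\in(P,\mathbb P_c)$ is an arbitrary such tree.

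Next I would turn the unifier into an Occam solver for this tree class. Fixing $P=\{0,1\}$, the map $\mathcal S\coloneqq\mathcal U(\{0,1\})$ is by definition a PBE solver for $(P,\mathbb P_c)=\mathbb D_d$. The key size bookkeeping is that $\text{tsize}(\{0,1\})=2\lceil\log_2 N_d\rceil$, where $N_d$ is the number of grammar rules, while every target uses at least one rule so $\text{size}(p^*)\ge\lceil\log_2 N_d\rceil$; hence $\max(\text{size}(p^*),\text{tsize}(P))\le 2\,\text{size}(p^*)$ uniformly in $d$. Substituting this into the Occam-unifier guarantee of Definition~\ref{def:poly-unifier} collapses the bound to $c'\,\text{size}(p^*)^{\alpha}n^{\beta}\ln^{\gamma}(1/\epsilon)$ with constants independent of $d$, so $\mathcal S$ is a polynomial-time $(\alpha,\beta)$-Occam solver on the whole family $\{\mathbb D_d\}_d$ of decision-tree domains.

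Finally I would invoke Theorem~\ref{theorem:occam-error} on $\mathcal S$: such an Occam solver attains error $\epsilon$ with confidence $1-\delta$ using only $n=\text{poly}(\text{size}(p^*),1/\epsilon,\log(1/\delta))$ examples, and $\mathcal S$ runs in polynomial time. This is exactly a polynomial-time, polynomial-sample learner for decision trees, which \citet{DBLP:conf/stacs/HancockJLT95} rule out unless $\mathsf{NP}=\mathsf{RP}$; since $\mathcal S$ may be randomized, Theorem~\ref{theorem:occam-error} produces a one-sided-error learner, matching the $\mathsf{RP}$ side of their result, and $\mathsf{RP}\subseteq\mathsf{NP}$ gives the equality. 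I expect the main obstacle to be neither the embedding nor the size arithmetic but the careful matching of hypotheses: I must confirm that ``generalizes within a polynomial number of examples'' in \citet{DBLP:conf/stacs/HancockJLT95} is precisely the sample-complexity regime certified by Theorem~\ref{theorem:occam-error}, and that the randomized (rather than deterministic) nature of the Occam unifier is what forces the conclusion to be $\mathsf{NP}=\mathsf{RP}$ rather than $\mathsf{NP}=\mathsf{P}$.
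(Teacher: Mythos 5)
Your proposal is correct and follows essentially the same route as the paper's proof: embed hard decision-tree learning instances into an if-closed conditional domain with constant terms and test conditions, observe that $\mathcal U(P)$ is then a polynomial-time Occam solver for (a size-respecting encoding of) decision trees, and combine Theorem \ref{theorem:occam-error} with the hardness result of Hancock et al.\ to force $\mathsf{NP} = \mathsf{RP}$. Your specialization (two labels, literal tests over $\{0,1\}^d$) is if anything slightly more careful than the paper on two points---including \texttt{true} in $\mathbb P_c$ so that if-closedness holds for equal term pairs, and explicitly bounding $\max\left(\text{size}(p^*), \text{tsize}(P)\right) \leq 2\,\text{size}(p^*)$ to collapse the unifier bound---while the only step of the paper you omit is the cosmetic simplification removing trivial \texttt{true}/\texttt{false} tests so the hypothesis is literally a decision tree.
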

\begin{proof} Suppose there is a polynomial-time Occam unifier $\mathcal U$. Given a decision tree learning problem with $k$ different tests and $n$ data $d_1, \dots, d_n$ labeled with $m$ different labels. We construct a conditional program domain $\mathbb D$ where:
	\begin{itemize}
		\item Input space $\mathbb I$ contains $n$ different inputs, corresponding to data $d_1, \dots, d_n$.
		\item Term space $\mathbb P_t$ contains $m$ different constants $1, 2, \dots, m$, corresponding to $m$ different labels. 
		\item Condition space $\mathbb P_c$ contains $k+1$ different conditions $\texttt{false}, c_1, \dots, c_k$, where $c_1, \dots, c_k$ correspond to $k$ different tests. If the test result of the $i$th test on data $d_j$ is \texttt{true}, $\sem{c_i}(d_j)$ will be defined as \texttt{true}. Otherwise, $\sem{c_i}(d_j)$ will be defined as \texttt{false}.
	\end{itemize} 
Because $\forall a, b \in \mathbb P_t, \forall d_i \in \mathbb I, \sem{\texttt{false}}(d_i) \iff \sem{a}(d_i) = \sem{b}(d_i)$, $\mathbb D$ is an if-closed domain.

Let $T = \{(d_i, l_i)\}$ be a PBE task where $l_i$ is the index of the label corresponding to data $d_i$, and let $p$ be the program synthesized by $\mathcal U$ for PBE task $T$ and term set $\{1,2, \dots, m\}$. We remove the usages of condition $\texttt{false}$ in $p$ by replacing program ($\texttt{if (false) then } p_1 \texttt{ else }p_2$) with its \texttt{else}-branch $p_2$. Suppose $p'$ be the resulting program. Clearly, the size of $p'$ is no larger than $p$ and $p'$ can be mapped back into a decision tree.

So far, we construct a polynomial-time learning algorithm for decision trees based on $\mathcal U$. By Theorem \ref{theorem:occam-error}, this construction implies that the class of decision trees is PAC learnable. Combining with the fact that the class of decision trees is not PAC learnable unless $\mathsf{NP} = \mathsf{RP}$~\cite{DBLP:conf/stacs/HancockJLT95}, we know there is no polynomial-time Occam unifier unless $\mathsf{NP} = \mathsf{RP}$.
 \end{proof}

\begin{lemma} [Lemma \ref{lemma:term-split}] For constants $\alpha_1,\alpha_2 \geq 0, 0 \leq \beta_1, \beta_2 < 1$ where $\beta_1 + \beta_2 < 1$, term solver $\mathcal T$ is an $(\alpha_1 + \alpha_2, \beta_1 + \beta_2)$-Occam solver on $\mathcal F_C$ if there exist constants $c, \gamma > 0$ such that for any conditional domain $\mathbb D \in \mathcal F_C$, any target program $p^* \in \mathbb P$, and any input set $\{I_1, \dots, I_n\} \subseteq \mathbb I$:
	\begin{enumerate}
		\item With a high probability, the size of terms returned by $\mathcal T$ is bounded by $\text{size}(p^*)^{\alpha_1} n^{\beta_1}$.
		$$
		\Pr\left[\max \left\{ \text{size(p)}\ \big |\ p \in \mathcal T \big((I_1, \sem{p}(I_1)), \dots, (I_n, \sem{p}(I_n))\big)\right\}  > c\left(\text{size}(p^*)\right)^{\alpha_1}n^{\beta_1}\ln^{\gamma}\left(\frac{1}{\epsilon}\right)\right] \leq \epsilon
		$$
		\item With a high probability, the number of terms returned by $\mathcal T$ is bounded by $\text{size}(p^*)^{\alpha_2} n^{\beta_2}$.
		$$
		\Pr\left[\left|\mathcal T \big((I_1, \sem{p}(I_1)), \dots, (I_n, \sem{p}(I_n))\big)\right|  > c\left(\text{size}(p^*)\right)^{\alpha_2}n^{\beta_2}\ln^{\gamma}\left(\frac{1}{\epsilon}\right)\right] \leq \epsilon
		$$
	\end{enumerate}
\end{lemma}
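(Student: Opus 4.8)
The plan is to bound the total size $\text{tsize}(P)=\sum_{t\in P}\text{size}(t)$ of the returned term set $P=\mathcal T\big(T(p^*,I_1,\dots,I_n)\big)$ by the trivial product inequality $\text{tsize}(P)\le |P|\cdot \max_{t\in P}\text{size}(t)$, and then to control each of the two factors separately using the two hypotheses of the lemma. Since the first hypothesis bounds $\max_{t\in P}\text{size}(t)$ and the second bounds $|P|$, each with an independently tunable failure probability, the whole argument reduces to a union bound together with careful bookkeeping of the polynomial and logarithmic factors.

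Concretely, I would first fix an arbitrary error rate $\epsilon\in\left(0,\tfrac12\right)$ and instantiate \emph{both} hypotheses at error rate $\epsilon/2$. This yields two ``good'' events: with probability at least $1-\epsilon/2$ we have $\max_{t\in P}\text{size}(t)\le c(\text{size}(p^*))^{\alpha_1}n^{\beta_1}\ln^{\gamma}(2/\epsilon)$, and with probability at least $1-\epsilon/2$ we have $|P|\le c(\text{size}(p^*))^{\alpha_2}n^{\beta_2}\ln^{\gamma}(2/\epsilon)$. A union bound shows both hold simultaneously with probability at least $1-\epsilon$. On this joint good event the product inequality gives
$$\text{tsize}(P)\le c^2(\text{size}(p^*))^{\alpha_1+\alpha_2}\,n^{\beta_1+\beta_2}\,\ln^{2\gamma}(2/\epsilon).$$

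It then remains to rewrite this bound in the canonical form of Definition~\ref{def:size-preserve}, i.e.\ with the factor $\ln^{\gamma'}(1/\epsilon)$ rather than $\ln^{2\gamma}(2/\epsilon)$. Because $\epsilon<\tfrac12$ forces $\ln(1/\epsilon)\ge \ln 2$, we have $\ln(2/\epsilon)=\ln 2+\ln(1/\epsilon)\le 2\ln(1/\epsilon)$, so $\ln^{2\gamma}(2/\epsilon)\le 2^{2\gamma}\ln^{2\gamma}(1/\epsilon)$. Absorbing $c^2 2^{2\gamma}$ into a new constant $c'$ and setting $\gamma'=2\gamma$ yields exactly the Occam bound with exponents $\alpha_1+\alpha_2$ and $\beta_1+\beta_2$. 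Here the hypothesis $\beta_1+\beta_2<1$ is used precisely to guarantee that the resulting second exponent stays in the admissible range $[0,1)$ required by the definition of an Occam term solver (and, since $\text{size}(p^*)\ge 1$, the first exponent can harmlessly be raised to $1$ if $\alpha_1+\alpha_2<1$).

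There is no genuinely hard step here: the content is entirely a union bound plus the elementary product inequality for $\text{tsize}$. The only place demanding care is the constant-and-logarithm bookkeeping — choosing $\epsilon/2$ in each hypothesis, verifying that the two shared constants $c,\gamma$ can be taken identical (or, if not, replaced by their maxima), and converting $\ln(2/\epsilon)$ back to $\ln(1/\epsilon)$ without spoiling the polynomial form. I would therefore expect the write-up to be short, with the main risk being an off-by-a-constant slip in the exponent of the logarithmic factor rather than any conceptual difficulty.
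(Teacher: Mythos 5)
Your proposal is correct and takes essentially the same route as the paper's own proof: the paper likewise instantiates both hypotheses at error rate $\epsilon/2$, takes the (implicit) union bound, applies the product inequality $\text{tsize}(P)\le |P|\cdot\max_{t\in P}\text{size}(t)$, and absorbs the resulting factor $\ln^{2\gamma}\left(\frac{2}{\epsilon}\right)$ into $c'\ln^{2\gamma}\left(\frac{1}{\epsilon}\right)$ for a large enough constant $c'$. The only difference is presentational: you spell out the $\ln(2/\epsilon)\le 2\ln(1/\epsilon)$ conversion and the role of $\beta_1+\beta_2<1$, which the paper leaves implicit.
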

\begin{proof} Let $P$ be the synthesized term set. For any $\epsilon \in \left(0, \frac{1}{2}\right)$, with a probability of at least $1 - \epsilon$:
	\begin{align*}
	\max \left\{\text{size(p)}\ \big |\ p \in \mathcal P\right\} \leq c\left(\text{size}(p^*)\right)^{\alpha_1}n^{\beta_1}\ln^{\gamma}\left(\frac{2}{\epsilon}\right) \bigwedge 
	\left|P\right|  \leq c\left(\text{size}(p^*)\right)^{\alpha_2}n^{\beta_2}\ln^{\gamma}\left(\frac{2}{\epsilon}\right)
	\end{align*}

Because $\text{tsize}(P) \leq |P| \cdot \max \left\{\text{size}(p)\ |\ p \in P\right\}$, with a probability of at least $1-\epsilon$:
\begin{align*}
\text{tsize}(P) &\leq \left(c\left(\text{size}(p^*)\right)^{\alpha_1}n^{\beta_1}\ln^{\gamma}\left(\frac{2}{\epsilon}\right)\right) \cdot \left(c\left(\text{size}(p^*)\right)^{\alpha_2}n^{\beta_2}\ln^{\gamma}\left(\frac{2}{\epsilon}\right)\right) \\
&\leq c'\left(\text{size}(p^*)\right)^{\alpha_1 + \alpha_2}n^{\beta_1 + \beta_2}\ln^{2\gamma}\left(\frac{1}{\epsilon}\right)
\end{align*}
where $c'$ is a large enough constant. Therefore, $\mathcal T$ is a $(\alpha_1 + \alpha_2, \beta_1 + \beta_2)$-Occam solver.
\end{proof}

\begin{lemma} [Lemma \ref{lemma:sample}]Let $T$ be a PBE task, and let $P$ be a set of terms that covers all examples in $T$, i.e., $\forall (I, O) \in T, \exists p \in P, \left(\sem{p}(I) = O\right)$. There is always a term $p \in P$ such that:
	$$
	\left|\big\{(I, O) \in T\ \big|\ \sem{p}(I) = O\big\} \right| \geq |T|/|P|
	$$
\end{lemma}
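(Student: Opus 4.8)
The plan is to prove this by a straightforward averaging (pigeonhole) argument. First I would, for each term $p \in P$, define its \emph{coverage set} $C_p \coloneqq \{(I,O) \in T \mid \sem{p}(I) = O\}$, the set of examples on which $p$ agrees with the task. The hypothesis that $P$ covers all examples states precisely that every $(I,O) \in T$ lies in at least one $C_p$, i.e.\ $\bigcup_{p \in P} C_p = T$.

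Next I would bound the total coverage from below. Because the union of the coverage sets equals $T$, counting with multiplicity can only increase the total count, so $\sum_{p \in P} |C_p| \geq \bigl|\bigcup_{p \in P} C_p\bigr| = |T|$. Here the inequality is in the direction $\sum |C_p| \geq |T|$ rather than an equality, since an example covered by several terms is counted once in the union but multiple times in the sum.

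Finally, an averaging step finishes the proof: the maximum of a finite collection of nonnegative numbers is at least their average, so $\max_{p \in P} |C_p| \geq \tfrac{1}{|P|} \sum_{p \in P} |C_p| \geq |T|/|P|$. Taking $p$ to be any term attaining this maximum yields the desired term and the claimed bound.

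There is essentially no hard part here; the argument is purely combinatorial and uses neither the structure of the domain $\mathbb{D}$ nor any property of the programs beyond their induced coverage sets. The only point requiring a moment of care is the direction of $\sum_{p} |C_p| \geq |T|$: it follows from the covering hypothesis via the union bound on counts, and crucially does \emph{not} require the coverage sets to be disjoint, which in general they are not.
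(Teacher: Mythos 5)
Your proof is correct and follows essentially the same averaging argument as the paper: the paper likewise defines the per-term coverage counts $w_i$, observes that the covering hypothesis forces $\sum_i w_i \geq |T|$, and concludes $\max_i w_i \geq |T|/|P|$. Your extra remark that the coverage sets need not be disjoint (so the sum may strictly exceed the union's size) is a point the paper leaves implicit, but it is the same proof.
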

\begin{proof} Let $t_1, \dots, t_n$ be the terms in $P$, and $w_1, \dots, w_n$ be the number of examples covered by term $t_i$, i.e., $w_i \coloneqq \left|\left\{(I, O) \in T\ |\ \sem{t_i}(I) = O\right\}\right|$. As $P$ covers all examples in $T$, $\sum_{i=1}^n w_i$ must be at least $|T|$. Therefore, we have $\max w_i \geq |T| / |P|$.
\end{proof}

\begin{theorem}  [Theorem \ref{theorem:term-occam}] $\mathcal S_t$ is an $(\alpha, \beta)$-Occam solver on $T(\mathcal F_C) \implies \mathcal T_{\text{poly}}$ is an $(\alpha' + 1, \beta')$-Occam term solver on $\mathcal F_C$ for any $\alpha' > \alpha, \beta < \beta' < 1$, where $T(\mathcal F_C)$ is defined as $\{(\mathbb P_t, \mathbb I')\ |\ ((\mathbb P_t, \mathbb P_c), \mathbb I) \in \mathcal F_C, \mathbb I' \subseteq \mathbb I\}$.
\end{theorem}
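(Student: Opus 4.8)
The plan is to invoke Lemma~\ref{lemma:term-split}, which reduces the goal to two one-sided bounds, each required only with high probability: a bound on the maximal $\text{size}$ of any term returned by $\mathcal T_{\text{poly}}$, and a bound on the \emph{number} of returned terms. Fix a target program $p^*$ and let $P^*$ be the set of \texttt{if}-terms occurring in $p^*$, with $m^* = |P^*|$ and $s^* = \max_{t\in P^*}\text{size}(t)$; both are at most $\text{size}(p^*)$. I would first argue that the outer loop of Algorithm~\ref{alg:term-main} succeeds by the time $s$ reaches $s_0 = \Theta\big(\max(\text{size}(p^*),\,\mathrm{polylog}(1/\epsilon))\big)$, at which point $k$ and $n_t$ range over $[1,cs_0\ln|T|]\times[1,cs_0^{\alpha/(1-\beta)}]$. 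Since the loop tries $(k,n_t)$ from small to large and returns on first success, the admitted term set is produced with the smallest parameters that work, which keeps both its cardinality and its element sizes controlled.

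The size bound (condition (1) of Lemma~\ref{lemma:term-split}) is essentially immediate from the filter on Line~6: every term admitted into the result satisfies $\text{size}(p)\le c s^{\alpha} n_t^{\beta}$. Relating the critical $s$ to $\text{size}(p^*)$ and $n_t$ to the number of examples actually available to form a clean sample yields a bound of the required polynomial-in-$\text{size}(p^*)$, sublinear-in-$n$ shape; the $\ln^{\gamma}(1/\epsilon)$ factor and the gap between $\alpha$ and $\alpha'$ absorb the logarithmic slack from the Occam bound of $\mathcal S_t$.

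The crux is condition (2). I would show that with probability at least $1-\epsilon$ the backtracking search \textsc{Search} covers all examples using at most $O(m^*\ln n)\le O(\text{size}(p^*)\ln n)$ terms, so that $k=cs_0\ln|T|$ suffices and condition (2) holds with exponent $\alpha_2=1$ on $\text{size}(p^*)$ and an arbitrarily small positive exponent on $n$ that absorbs the $\ln n$ factor. The argument couples three steps. First, by Lemma~\ref{lemma:sample}, at any recursion level whose remaining examples are still jointly covered by $P^*$, some target term $t^*$ is consistent with at least a $1/m^*$ fraction of them, hence at least a $1/(\text{budget})$ fraction; drawing $n_t$ examples independently, they all lie in $t^*$'s coverage with probability at least $(\text{budget})^{-n_t}$, so among the $n_t(\text{budget})^{n_t}$ trials of \textsc{GetCandidatePrograms} a clean sample appears except with probability $e^{-\Omega(n_t)}$. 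Second, conditioned on a clean sample, the Occam property of $\mathcal S_t$ makes the synthesized term pass the filter, and Theorem~\ref{theorem:occam-error} guarantees that with $n_t=\Theta(s^{\alpha/(1-\beta)}\,\mathrm{polylog})$ examples the term agrees with $t^*$ on all but an $\epsilon_0$-fraction of $t^*$'s inputs, for a constant $\epsilon_0<1$. Third, a greedy-covering (residue) argument shows that charging each such near-match against its target term removes a constant fraction of the currently uncovered examples, so after $O(m^*\ln n)$ terms the example set is exhausted; since \textsc{Search} returns as soon as the set becomes empty, it never uses more.

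Finally I would assemble the failure probability and the exponents. Union-bounding the per-level failure $e^{-\Omega(n_t)}$ over one good covering path of length $O(\text{size}(p^*)\ln n)$, together with the $\Omega(\ln(1/\epsilon))$ confidence demanded by Theorem~\ref{theorem:occam-error}, forces $n_t$ — and hence the critical $s_0$ — to carry a $\mathrm{poly}(\ln s,\ln\ln n,\ln(1/\epsilon))$ factor. Composing the two bounds through Lemma~\ref{lemma:term-split} gives a $(\alpha+1,\,\beta+\beta_2)$-Occam term solver, after which the logarithmic factors in $s$ and $n$ are swallowed by the exponent slack, yielding the stated $(\alpha'+1,\beta')$ bound for any $\alpha'>\alpha$ and $\beta<\beta'<1$. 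The main obstacle is precisely this coupling: the clean-sample event is exponentially rare in $n_t$, so one must simultaneously take $n_t$ large enough both to hit a clean sample and to trigger the generalization of Theorem~\ref{theorem:occam-error}, yet small enough that admitted terms stay within the polynomial size budget, all while the $n_t(\text{budget})^{n_t}$ trial count and the backtracking keep the total number of $\mathcal S_t$ invocations — and thus the union-bound failure mass — under control.
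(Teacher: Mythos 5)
Your proposal matches the paper's proof in all essentials: the same split of $\text{tsize}(P)$ into a per-term size bound (the Line-6 filter, whose $cs^{\alpha}n_t^{\beta}$ cap you cite) and a term-count bound, the same three per-invocation events (a clean sample whose probability is lower-bounded via Lemma \ref{lemma:sample}, the Occam size bound on the synthesized term, and constant-fraction coverage of the target term's examples via Theorem \ref{theorem:occam-error}), the same $O(\text{size}(p^*)\ln n)$ recursion-chain/residue argument showing $k_l = cs\ln|T|$ suffices, and the same absorption of the logarithmic slack into the exponent gaps $\alpha' > \alpha$ and $\beta' > \beta$. The only (immaterial) bookkeeping difference is the handling of the confidence parameter: the paper bounds each event's failure by $O\bigl(1/(\text{size}(p^*)\ln n)\bigr)$ to get a constant success probability per outer iteration once $s \geq c_1\,\text{size}(p^*)^{\alpha'}n^{\beta'}$, then amplifies over $O(\ln(1/\epsilon))$ further increments of $s$, whereas you thread $\ln(1/\epsilon)$ through $n_t$ and the per-level union bound — both yield the required $\ln^{\gamma}(1/\epsilon)$ factor.
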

\begin{proof} Let $s^*$ be the value of variable $s$ (Line 19 in Algorithm \ref{alg:term-main}) when $\mathcal T_{\text{poly}}$ terminates. Let $p^*$ be the target program, $P^*$ be the terms used by $\mathcal T_{\text{poly}}$, $P$ be the term set synthesized by $\mathcal T_{\text{poly}}$, and $n$ be the number of examples, i.e., $|T|$. By Algorithm \ref{alg:term-main}, the total size of $P$ is bounded by $s^*$:
	$$
	\text{tsize}(P) \leq |P| \cdot \max_{p \in P} \text{size}(p) \leq  cs^* \log n \cdot c(s^*)^{\alpha}n^{\beta} \leq c'(s^*)^{\alpha+1}n^{\beta''}
	$$
where $\beta''$ is a constant larger than $\beta$ and $c'$ is a large enough constant. 
\begin{comment}
Therefore, $\mathcal T_{\text{poly}}$ is an $(\alpha + 1, \beta')$-Occam term sovler if there exists constants $c_0, \gamma$ such that for any $\epsilon \in \left(0, \frac{1}{2}\right)$
$$
\Pr \left[s^* \geq c_0\text{size}(p^*)\ln^{\gamma}\left(\frac{1}{\epsilon}\right)\right] \leq \epsilon
$$

Let $t_1, \dots, t_m$ be \texttt{if}-terms used in target program $p^*$, and let $T_i$ be the set of examples that are not covered by any terms in $t_1, \dots, t_{i-1}$. Without loss of generality, we assume $t_1, \dots, t_m$ are ordered according to Lemma \ref{lemma:sample}: $\forall i \in [1,m]$, $|\{(I, O) \in T_i \ |\ \sem{t_i}(I) = O\}| \geq \frac{1}{m-i+1}|T_i|$.
\end{comment}
\SetKwFunction{Search}{Search}
\SetKwFunction{Covered}{Covered}

Let $c_1$ be a large enough constant, $\alpha'$ be any constant larger than $1
$, $\beta'$ be any constant larger than $0$. Suppose $s \geq c_1\text{size}(p^*)^{\alpha'}n^{\beta'}$, $n_t = cs^{\alpha/(1-\beta)}$, $k_t = cs \log n$. We denote an  invocation \Search{$T', k, n_t, s$} valid if the following three random events happens:
\begin{itemize}
	\item $\mathcal E_1$: There is at least one valid sampling round, where a sampling round is valid if $t^*$ covers all sampled examples, and $t^*$ is the target term in $P^*$ that covers the most examples in $T'$.
	\item $\mathcal E_2$: $\text{size}(t_0) \leq cs^{\alpha}n_t^{\beta}$, where $t_0$ is the term synthesized in the first valid sampling round.
	\item $\mathcal E_3$: $\text{size}(t_0)$ covers at least a half of those examples covered by $t^*$ in $T'$, i.e., 
	$$\left(\big|\Covered{$t_0, T'$} \cap \Covered{$t^*, T'$}\big| \geq \frac{1}{2}\big|\Covered{$t^*,T'$}\big|\right) \wedge \left(\big|\Covered{$t^*,T'$}\big| > 0\right)$$
\end{itemize}

Consider a recursion chain $RC$ of function \Search{} with examples $T'_0 = T, \dots, T'_{n_c} = \emptyset$, where $T_i$ recurses into $T_{i+1}$ by including $t_0$ in the result, i.e., $T_{i+1} = T_{i} - \Covered{$p^*, T_{i}$}$.
\begin{itemize}
	\item \textbf{Claim:} When all these invocations are valid, $n_c$ is at most $2|P^*|\ln |T| + 1$.
	\item \textbf{Proof:} By the definition of $t^*$, $|\Covered{$t^*, T_{i}$}| \geq |T_i|/|P^*|$. Then by the definition of $\mathcal E_3$, we have $|T_{i+1}| \leq \left(1 - 1/(2|P^*|)\right)|T_i|$. Therefore, $n_c \leq 2|P^*|\ln |T| + 1$.
\end{itemize}

Because $k_t = cs\log n > 2 |P^*|\ln n$, such a recursion chain is allowed. Therefore, we only need to consider the probability for all invocations to be valid:
\begin{itemize}
	\item $\mathcal E_1$. As $t^*$ is the target term that covers the most examples, the probability for a random example to be in $\Covered{$t^*$, T'}$ is at least $1/|P^*|$. Therefore, the probability for a sampling round to be valid is $|P^*|^{-n_t}$:
	$$
	\Pr[\neg \mathcal E_1] \leq \left(1 - |P^*|^{-n_t}\right)^{n_tk^{n_t}} \leq \exp(-n_t) < \frac{1}{12\text{size}(p^*)\ln n}
	$$ 
	The last two inequalities hold when constant $c_1$ is large enough.
	\item $\mathcal E_2$. As $\mathcal S_t$ is an $(\alpha, \beta)$-Occam solver, there exists constants $c_2, \gamma$ such that:
	$$
	\forall \epsilon \in \left(0, \frac{1}{2}\right), \Pr\left[\text{size}(t_0) > c_2
	(\text{size}(p^*))^{\alpha}n_t^{\beta}\ln^{\gamma}\left(\frac{1}{\epsilon}\right)\right] < \epsilon
	$$
	Therefore, with a probability at least $1 - 1/(12\text{size}(p^*)\ln n))$, we have:
	$$
	\text{size}(t_0) \leq c_2
	(\text{size}(p^*))^{\alpha}n_t^{\beta}\ln^{\gamma}(12\text{size}(p^*)\ln n) < cs^{\alpha}n_t^{\beta} 
	$$
	The last inequality holds when constant $c_1$ is large enough. 
	\item $\mathcal E_3$. Let $\mathbb I'$ be the input space that contains all inputs in $\Covered{$t^*, T'$}$, and $D$ be a uniform distribution on $\mathbb I'$. As $\mathcal S_t$ is an $(\alpha, \beta)$-Occam solver, by Theorem \ref{theorem:occam-error}, $\mathcal E_3$ happens with a probability of at least $1 - 1/(12\text{size}(p^*)\ln n)$ if the following inequality holds:  
$$n_t > c_3 \left(\frac{1}{2}\ln\big(24\text{size}(p^*) \ln n\big) + \left(\frac{(\text{size}(p^*))^{\alpha}\ln^{\gamma}\big(24\text{size}(p^*) \ln n\big)}{\epsilon}\right)^{1/(1-\beta)}\right)$$
where $c_3$ is a fixed constant. Clearly, when $c_1$ is large enough, this inequality always holds. 	
\end{itemize}

Let $\mathcal E_i^{S}$ be the random event that the $i$th invocation in $RC$ is valid, and $\mathcal E^{RC}$ be the random event that all events in $RC$ are valid. Then:
\begin{align*}
\Pr\left[\neg \mathcal E^{RC}\right] &\leq \sum_{i=1}^{n_c} \Pr\left[\mathcal E^{S}_i\right] \leq (2|P^*|\ln n + 1) \big(\Pr[\neg \mathcal E_1] + \Pr[\neg \mathcal E_2] + \Pr[\neg \mathcal E_3]\big) \\
& \leq (2\text{size}(p^*)\ln n + 1) \cdot \frac{1}{4\text{size}(p^*)\ln n} \leq \frac{3}{4}
\end{align*}

Therefore, when $s$ is larger than $c_1\text{size}(p^*)^{\alpha'}n^{\beta'}$, in each iteration, $\mathcal T_{\text{poly}}$ returns with a probability at least $\frac{1}{4}$. Therefore, for any $\epsilon \in (0, 1)$:
\begin{align*}
&\Pr\left[s^* \leq c_1\text{size}(p^*)^{\alpha'}n^{\beta'} + c_4 \ln \left(\frac{1}{\epsilon}\right)\right] < \epsilon \\
\implies& \Pr\left[\text{tsize}(P^*) \leq c_5\text{size}(p^*)^{\alpha'(1 + \alpha)}n^{\beta''+(1+\alpha)\beta'}\ln \left(\frac{1}{\epsilon}\right)\right] < \epsilon
\end{align*}
where $c_4$ and $c_5$ are large enough constants. Note that $\alpha'$ is an arbitrary constant larger than $1$, $\beta'$ and $\beta''$ are arbitrary constants larger than $0$. Therefore, $\mathcal T_{\text{poly}}$ is an $(\alpha^*, \beta^*)$-Occam term solver for any $\alpha^* > 1 + \alpha, \beta^* > 0$.
\end{proof}

\begin{lemma} [ Lemma \ref{lemma:decision-tree-size}]For any conditional domain $\mathbb D$ and program $p \in (\mathbb P_t, \mathbb P_c)$, there is a program $p' \in (\mathbb P_t, \mathbb P_c)_{\text{DL}}$ s.t. (1) $p'$ is semantically equivalent to $p$ on $\mathbb I$, and (2) $\variable{size}(p') \leq 2 \variable{size}(p)^2$.
\end{lemma}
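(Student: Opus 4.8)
The plan is to regard any program $p \in (\mathbb P_t, \mathbb P_c)$ as a decision tree: the grammar forces $p$ to be a finite binary tree whose internal nodes are labelled by conditions $c \in \mathbb P_c$ and whose leaves are labelled by terms $t \in \mathbb P_t$, with the then-branch taken when $c$ holds and the else-branch otherwise. For each leaf $\ell$ let $\pi_\ell$ be the conjunction of the literals met on the root-to-$\ell$ path, i.e.\ $c$ at each node where we descend into the then-branch and $\texttt{not } c$ where we descend into the else-branch; this $\pi_\ell$ is a clause in $\text{CL}(\mathbb P_c)$. I would then group the leaves by their term: letting $t_1, \dots, t_m$ enumerate the distinct terms occurring at leaves, set $d_j := \bigvee_{\ell:\, \text{term}(\ell)=t_j} \pi_\ell \in \text{DNF}(\mathbb P_c)$ and output the decision list
\[
 p' := \texttt{if } d_1 \texttt{ then } t_1 \texttt{ else if } d_2 \texttt{ then } t_2 \cdots \texttt{ else } t_m,
\]
which lies in $(\mathbb P_t, \mathbb P_c)_{\text{DL}}$ by construction, using $t_m$ as the default so that only $d_1, \dots, d_{m-1}$ appear as explicit conditions.

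For correctness (condition (1)) I would argue that the path conditions $\{\pi_\ell\}$ partition $\mathbb I$: by induction on the tree, the two children of any node split the inputs reaching it into the disjoint, exhaustive sets where $c$ holds and where it fails, so every $I \in \mathbb I$ satisfies exactly one $\pi_\ell$ and $\sem{p}(I)$ equals the term at that leaf. Hence for each $I$ exactly one $d_j$ is \texttt{true}, and the decision list returns $t_j = \sem{p}(I)$; when none of $d_1, \dots, d_{m-1}$ holds, exhaustiveness forces the reached leaf into the $t_m$-group, so the default branch is correct as well.

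The size bound (condition (2)) is where the quadratic factor arises and is the main technical step. I would count grammar rules. The terms contribute $\sum_j \text{size}(t_j) \le \sum_{\text{leaves } \ell} \text{size}(t_\ell) \le \text{size}(p)$ since the distinct terms form a sub-multiset of the leaves, and the decision-list skeleton contributes only $O(m) = O(\text{size}(p))$. The dominant quantity is $\sum_j \text{size}(d_j)$. Since the multiset of clauses across all $d_j$ is exactly $\{\pi_\ell\}_\ell$ (one clause per leaf) and each \texttt{and}/\texttt{or} connective adds at most a constant per literal or clause, this is bounded, up to the common $\lceil \log_2 N\rceil$ factor in the size definition, by a constant times $\sum_\ell |\pi_\ell|$. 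The key identity is that each internal node $v$ with condition $c_v$ lies on the path of every leaf in its subtree, so
\[
 \sum_{\ell} |\pi_\ell| \;=\; \sum_{v}(\text{leaves below } v)\,(|c_v| + O(1)) \;\le\; L\cdot O(|p|),
\]
where $L$ is the number of leaves and I use $\sum_v(|c_v|+O(1)) = O(|p|)$. Because $L \le |p|$ this is $O(|p|^2)$, and since $p$ and $p'$ live in the same grammar the shared $\lceil\log_2 N\rceil$ factor upgrades it to $O(\text{size}(p)^2)$; tracking the constants against the slack $\text{size}(p)^2 = \lceil\log_2 N\rceil^2 |p|^2$ yields the stated $\text{size}(p') \le 2\,\text{size}(p)^2$.

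The hardest part is the constant-level bookkeeping in that last step: the target is exactly $2\,\text{size}(p)^2$, so I must check that the per-leaf path-condition cost, the negation bits on else-edges, the connective overhead, and the skeleton all fit within the slack, relying on $\sum_v |c_v| \le |p|$, $L-1 \le |p|$, and $\lceil \log_2 N\rceil \ge 1$. The conceptual content is that a skewed tree can have $\Theta(L^2)$ total path-condition length while having size only $\Theta(L)$, which is precisely why the blow-up is quadratic rather than linear and shows the exponent $2$ is the right order in general.
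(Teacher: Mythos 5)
Your proposal is correct and follows essentially the same route as the paper's proof: the paper likewise converts each leaf's root-to-leaf path into a clause (conjoining conditions on then-edges and negated conditions on else-edges), takes the disjunction of path clauses per term to form the DNF conditions of the decision list, and bounds the size by charging each condition $c'_i$ once per term usage below it, using that the number of leaves and the total size of conditions and terms are each at most $\text{size}(p)$. The constant-level bookkeeping you flag as remaining does close exactly as you anticipate: the paper's explicit count absorbs the per-else-edge negation and \texttt{if-then-else} overhead via $2n_c\lceil\log_2 N\rceil \leq \text{size}(p)$, yielding $\text{size}(p)^2 + \text{size}(p)^2 = 2\,\text{size}(p)^2$.
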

\begin{proof} Let $P = \{t_1, \dots, t_m\}$ be the set of terms used in $p$. Without loss of generality, assume the structure of $p'$ is as the following:
$$
\texttt{if}\ (c_1)\ \texttt{then}\ t_1\ \texttt{else}\ \texttt{if}\ (c_2)\ \texttt{then}\ t_2\ \texttt{else} \dots \texttt{if}\  (c_{m-1})\ \texttt{then}\ t_{m-1}\ \texttt{else}\ t_m
$$

For each \texttt{if}-term in $p$, we define its tree path as a sequence $(c_1, k_1), \dots, (c_n, k_n)$, where $c_i \in \mathbb P_c$ represents the \texttt{if}-conditions corresponding to this term from the top down, and $k_i \in \{0, 1\}$ represents the \texttt{if}-branches taken by this term ($0, 1$ represent the \texttt{then}-branch and the \texttt{else}-branch respectively). Let $\varphi$ be a function mapping a path to a condition, which is defined as the following:
$$
\varphi((c_1, k_1), \dots, (c_n ,k_n)) \coloneqq \left(\texttt{and}_{k_i = 0}\ c_i \right)\texttt{ and }\left(\texttt{and}_{k_i = 1} \ (\texttt{not }c_i) \right)
$$

We construct the condition $c_i$ from the original program $p$. Let $X_i$ be the set of paths corresponding to all usages of term $t_i$. Then we construct the corresponding condition $c_i$ as $\texttt{or}_{x \in X}\ \varphi(x)$.

Clearly, $p'$ is semantically equivalent to $p$ on the input space $\mathbb I$. Besides, $c_1, \dots, c_{m-1}$ are all DNF formulas, and thus $p' \in (\mathbb P_t, \mathbb P_c)_{\text{DL}}$. Let $c_1', \dots, c_{n_c}'$ be all \texttt{if}-conditions used in $p$, $l_i$ and $r_i$ be the numbers of terms used in the \texttt{then}-branch and the \texttt{else}-branch of $c_i$ respectively. Then, $\text{size}(p')$ can be calculated in the following way:
\begin{itemize}
	\item The total size of \texttt{if}-conditions $c_i$ is at most: $$\sum_{i=1}^{n_c} \big(l_i(\text{size}(c'_i) + \lceil \log_2 N \rceil) + r_i(\text{size}(c'_i) + 2\lceil \log_2 N \rceil)\big) - (m - 1)\lceil \log_2 N \rceil$$
	where $N$ is the number of grammar rules.
	\item The total size of \texttt{if}-operators is $(m-1)\lceil \log_2 N \rceil$, the total size of \texttt{if}-terms is $\sum_{i=1}^m \text{size}(t_i)$.
\end{itemize}
Therefore, we have the following inequality:
\begin{align*}
\text{size}(p') \leq& \sum_{i=1}^{n_c} \big(l_i(\text{size}(c'_i) + \lceil \log_2 N \rceil) + r_i(\text{size}(c'_i) + 2\lceil \log_2 N \rceil)\big) \\&- (m - 1)\lceil \log_2 N \rceil + (m-1)\lceil \log_2 N \rceil + \sum_{i=1}^m \text{size}(t_i) \\
\leq& \left(\sum_{i=1}^{n_c} (l_i + r_i)\text{size}(c_i') + \sum_{i=1}^m \text{size}(t_i)\right) + 2\lceil \log_2 N \rceil\sum_{i-1}^{n_c} (l_i + r_i) \\
\leq & \text{size}(p)\left(\sum_{i=1}^{n_c}\text{size}(c_i') + \sum_{i=1}^m \text{size}(t_i)\right) + \text{size}(p) \cdot 2n_c\lceil \log_2 N \rceil
\leq 2\text{size}(p)^2
\end{align*}
where $2n_c\lceil \log_2 N \rceil \leq \text{size}(p)$ because each \texttt{if}-condition corresponds to an occurrence of itself and an \texttt{if-then-else} operator: Both of them contributes to $\text{size}(p)$ by at least $\lceil \log_2 N \rceil$.
\end{proof}

\begin{lemma} [Lemma \ref{lemma:unifier-gen}]$\mathcal C$ is an $(\alpha, \beta)$-Occam solver on $\text{DNF}(\mathcal F_C) \Rightarrow \mathcal U_{\text{poly}}$ is an $(\alpha', \beta)$-Occam unifier on $\mathcal F_C$ for any $\alpha' > 4\alpha$, where $\text{DNF}(\mathcal F_C)$ is defined as $\{(\text{DNF}(\mathbb P_c), \mathbb I')\ |\ ((\mathbb P_t, \mathbb P_c), \mathbb I) \in \mathcal F_C, \mathbb I' \subseteq \mathbb I\}$. 
	
	$\mathcal C$ is a deterministic $(\alpha, \beta)$-Occam solver on $\text{DNF}(\mathcal F_C) \Rightarrow \mathcal U_{\text{poly}}$ is a $(4\alpha, \beta)$-Occam unifier on $\mathcal F_C$.
\end{lemma}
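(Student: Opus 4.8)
The plan is to bound the size of the decision list produced by $\mathcal U_{\text{poly}}(P)$ on the task $T(p^*, I_1, \dots, I_n)$ by summing over its pieces. By Algorithm~\ref{alg:uni-over} the output has the fixed shape of a decision list over $P = \{p_1, \dots, p_m\}$ with $m-1$ synthesized conditions $c_1, \dots, c_{m-1}$, so $\text{size}(\mathcal U_{\text{poly}}(P)(\cdots)) \le \sum_{i=1}^{m-1}\text{size}(c_i) + \text{tsize}(P) + O(m\lceil\log_2 N\rceil)$, where the last term accounts for the \texttt{if-then-else} scaffolding. Two quantities must then be controlled: the number of conditions and the size of each. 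The number is immediate, since every term contributes at least one grammar rule, giving $m \le \text{tsize}(P) \le \max(\text{size}(p^*),\text{tsize}(P))$. For each individual $c_i$ I would invoke the hypothesis that $\mathcal C$ is an $(\alpha,\beta)$-Occam solver on $\text{DNF}(\mathcal F_C)$: if the $i$-th subtask admits a valid DNF target $d_i^*$, then with high probability $\text{size}(c_i) \le c_{\mathcal C}\,\text{size}(d_i^*)^{\alpha} n^{\beta}\ln^{\gamma}(1/\epsilon_i)$.

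The crux is therefore to exhibit, for every subtask, a valid DNF target of size polynomial in $\text{size}(p^*)$. Here I would apply Lemma~\ref{lemma:decision-tree-size} to replace $p^*$ by a semantically equivalent decision list $q^* \in (\mathbb P_t,\mathbb P_c)_{\text{DL}}$ with $\text{size}(q^*) \le 2\,\text{size}(p^*)^2$, and read off from $q^*$ a target condition for each $p_i$. Recall that the $i$-th subtask asks for a condition that is \texttt{true} on the remaining examples covered only by $p_i$ among $p_i,\dots,p_m$ and \texttt{false} on those where $p_i$ is incorrect, with the overlap examples left as don't-cares. The natural candidate is the condition under which $q^*$ selects $p_i$; this is automatically \texttt{false} wherever $p_i$ is incorrect, so the only thing to verify is that it fires on the required positive examples. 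This is exactly where the ordering of $P$ and the sequential removal of examples in Algorithm~\ref{alg:uni-over} interact, and I expect this to be the \emph{main obstacle}: a positive example of subtask $i$ may be routed by $q^*$ to an earlier term that also happens to be correct on it, so one must either argue via the invariant that the examples surviving to stage $i$ are coverable by $\{p_i,\dots,p_m\}$, or refine the target using the if-closedness of $\mathcal F_C$ (which guarantees a condition testing $\sem{p_i} = \sem{p_j}$) to build a genuinely valid separator expressing ``$p_i$ is correct'' as a DNF. Either way the target size should stay bounded by $O(\text{size}(q^*)) = O(\text{size}(p^*)^2)$.

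With a valid target $d_i^*$ of size $O(\text{size}(p^*)^2) \le O(\max(\text{size}(p^*),\text{tsize}(P))^2)$ in hand for each of the at most $\text{tsize}(P)$ subtasks, the final assembly is a direct calculation. In the deterministic case $\mathcal C$ gives $\text{size}(c_i) \le c_{\mathcal C}\,\text{size}(d_i^*)^{\alpha} n^{\beta} \le c'\max(\text{size}(p^*),\text{tsize}(P))^{2\alpha} n^{\beta}$ with no failure probability, and summing the $\le \text{tsize}(P) \le \max(\cdots)$ conditions yields total size $\le c''\max(\text{size}(p^*),\text{tsize}(P))^{2\alpha+1}n^{\beta}$; since $\alpha \ge 1$ we have $2\alpha+1 \le 4\alpha$, and because the base is at least $1$ the bound with exponent $2\alpha+1$ implies the claimed $(4\alpha,\beta)$ bound of Definition~\ref{def:poly-unifier}. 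In the randomized case I would split the error budget as $\epsilon_i = \epsilon/m$ and take a union bound over the $m-1$ subtasks; the resulting $\ln^{\gamma}(m/\epsilon) \le \ln^{\gamma}(\max(\cdots)/\epsilon)$ factor contributes an extra sub-polynomial $\max(\cdots)^{o(1)}$, which is precisely why the exponent must be relaxed to any $\alpha' > 4\alpha$ rather than $4\alpha$ exactly. Collecting the term sizes and the logarithmic scaffolding, all of which are dominated by this leading term, completes the proof that $\mathcal U_{\text{poly}}$ is an Occam unifier.
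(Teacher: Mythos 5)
Your skeleton matches the paper's proof: decompose into $m-1$ DNF subtasks, obtain per-subtask targets from the decision-list conversion of Lemma~\ref{lemma:decision-tree-size}, invoke the Occam property of $\mathcal C$ on each, and in the randomized case split the error budget as $\epsilon/(m-1)$ and absorb the resulting $\ln^{\gamma}(m-1)$ factor sub-polynomially, which is exactly why the exponent relaxes to $\alpha' > 4\alpha$. However, there is a genuine gap at the crux, which you flag but do not resolve. Of your two proposed fixes for the routing obstacle, the first (the invariant that examples surviving to stage $i$ are coverable by $\{p_i,\dots,p_m\}$) does not work: the positive examples of subtask $i$ are covered by $p_i$ \emph{by definition}, and coverability says nothing about whether the decision list's condition for $p_i$ fires on them. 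The paper commits to your second alternative and makes it concrete: it sets
$c'_i \coloneqq c_i^*\ \texttt{or}\ \bigl(\texttt{or}_{j=1}^{i-1}\ \bigl(\texttt{or}_{k=1}^{n_j}\ (c_{j,k}^*\ \texttt{and}\ (t_i = t_j))\bigr)\bigr)$,
copying every clause $c_{j,k}^*$ of each \emph{earlier} condition and conjoining the if-closedness equality test $t_i = t_j$; this is simultaneously \texttt{true} on all of $T_i$ and \texttt{false} on the negatives, using that the path conditions $c_j^*$ never overlap.

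Your asserted target size $O(\text{size}(p^*)^2)$ is then too optimistic, and this is not a cosmetic point: with $s = \max(\text{size}(p^*), \text{tsize}(P))$, the conditions $c_j^*$ can comprise $O(s^2)$ clauses in total, each clause is duplicated up to $m-1 = O(s)$ times across the $c'_i$, and each duplicate carries an equality literal $t_i = t_j$ of size up to $O(s)$, so a single target $c'_i$ can reach $O(s^3)$ and the total is $\sum_i \text{size}(c'_i) = O(s^4)$ --- this fourth power is precisely where the constant $4\alpha$ in the lemma comes from, so a correct proof cannot get away with quadratic targets. Your final arithmetic happens to have enough slack to survive the correction: with the true per-subtask bound $O(s^3)$, your ``multiply by $m \le \text{tsize}(P)$'' assembly gives exponent $3\alpha + 1 \le 4\alpha$ for $\alpha \ge 1$ in the deterministic case, and the union-bound bookkeeping for the randomized case is unaffected. (The paper avoids the multiplication altogether, bounding $\sum_i \text{size}(c_i) \le c'|T|^{\beta}\ln^{\gamma'}(\cdot)\sum_i s_i^{\alpha} \le c'|T|^{\beta}\ln^{\gamma'}(\cdot)\bigl(\sum_i s_i\bigr)^{\alpha}$ with $\sum_i s_i = O(s^4)$, which yields $4\alpha$ directly.) But as written, the construction of valid targets --- the central idea of the proof --- is left as a hedge between a dead end and an unexecuted sketch with an incorrect size claim, so the proposal is incomplete at its key step.
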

\begin{proof} Let $P= \{t_1, \dots, t_m\}$ be the term set and $p^* \in (P, \mathbb P_c)$ be the target program. Let $c_1^*, \dots, c_{m-1}^*$ be the \texttt{if}-conditions for $t_i$ in the same way as the proof of Lemma \ref{lemma:decision-tree-size}. Specially, if term $t_i$ is not used in $p^*$, $t_i$ is defined as $\texttt{false}$. Let $s = \max(\text{tsize}(P), \text{size}(p^*))$. By the construction of $c_i^*$, we have (1) $\sum_{i=1}^{m-1} \text{size}(c_i^*) \leq O(s^2)$, and (2) $c_1^*, \dots, c_{m-1}^*$ never overlap, i.e., $\forall 1 < i < j < m, \forall I \in \mathbb I, \neg (\sem{c_i^*}(I) \wedge \sem{c_j^*}(I))$.
	
Let $T_i$ be the set of examples that are satisfied by term $t_i$ and are not satisfied by any term in $t_{i+1}, \dots, t_{m}$. By Algorithm \ref{alg:uni-over}, the set of positive examples provided to synthesize $c_i^*$ must be a subset of $T_i$. As $c_i^*$ may not satisfy all positive examples in $T_i$, we construct conditions $c'_1, \dots, c'_{m-1}$ instead:
$$
c'_i \coloneqq c_i^* \texttt{ or } \left(\text{or}_{j=1}^{i-1}\ \left(\texttt{or}_{k=1}^{n_j}\ \left(c_{j,k}^* \texttt{ and } \left(t_i = t_j\right)\right)\right)\right)
$$
where $c_{i,1}^*, \dots, c_{i,n_i}^*$ are the clauses used in condition $c_i^*$, i.e., $c_i^* = \texttt{or}_{j=1}^{n_i}\ c_{i,j}^*$.
Clearly, $c_i'$ is still a DNF formula, and $c_i'$ satisfies all examples in $T_i$. Therefore, $c_i'$ is always a target condition for the PBE task corresponding to $c_i$. As in $c'_1, \dots, c'_{m-1}$, each clause $c_{i,j}^*$ in condition $c_i^*$ occurs at most $m-1$ times, each time with a comparison $t_i = t_k$. Therefore, $\sum_{i=1}^{m-1} \text{size}(c_i') = O(s^4)$.

Let $c_1, \dots, c_{m-1}$ be the condition synthesized by $\mathcal C$. Let $\epsilon$ be any constant in $\left(0, \frac{1}{2}\right)$. For simplicity, we abbreviate $\text{size}(c_i')$ as $s_i$. When $\mathcal C$ is an $(\alpha, \beta)$-Occam solver, there exist constants $c', \gamma'$ such that:
$$
\forall i \in [1, m - 1], \Pr\left[\text{size}(c_i) > c's_i^{\alpha}|T|^{\beta}\ln^{\gamma'}\left(\frac{m-1}{\epsilon}\right)\right] \leq \frac{\epsilon}{m-1}
$$

Therefore, with a probability of at least $1-\epsilon$:
\begin{align*}
\sum_{i=1}^{m-1} \text{size}(c_i) &\leq c'|T|^{\beta}\ln^{\gamma'}\left(\frac{m-1}{\epsilon}\right) \sum_{i=1}^{m-1} s_i^{\alpha} \leq c'|T|^{\beta}\left(\ln(m - 1) + \ln \left(\frac{1}{\epsilon}\right)\right)^{\gamma'} \left(\sum_{i=1}^{m-1} s_i\right)^{\alpha} \\
& \leq c''|T|^{\beta}s^{4\alpha}\ln(m-1)^{\gamma'}\ln\left(\frac{1}{\epsilon}\right)^{\gamma'} \leq c s^{\alpha'}|T|^{\beta}\ln\left(\frac{1}{\epsilon}\right)^{\gamma}
\end{align*}
where $c''$ and $c$ are large enough constants, $\gamma = \gamma'$ and $\alpha'$ is a constant larger than $4 \alpha$. Let $p$ be the program synthesized by $\mathcal U_{\text{poly}}$. Then $\text{size}(p) \leq \sum_{i=1}^{m-1} \text{size}(c_i) + s$. Therefore, when $\mathcal C_{\text{CL}}$ is an $(\alpha,\beta)$-Occam solver, $\mathcal U_{\text{poly}}$ must be a $(\alpha', \beta)$ unifier for any $\alpha' > 4\alpha$.

When $\mathcal C_{\text{CL}}$ is a deterministic $(\alpha, \beta)$-Occam solver, there exists constant $c'$ such that:
$$
\forall i \in [1, m - 1], \text{size}(c_i) \leq c's_i^{\alpha}|T|^{\beta}
$$

At this time, we have the following bound on the total size of conditions $c_1, \dots, c_{m-1}$.
$$
\sum_{i=1}^{m-1} \text{size}(c_i) \leq c's_i^{\alpha}|T|^{\beta} \leq c'|T|^{\beta}\sum_{i=1}^{m-1}s_i^{\alpha} \leq c'|T|^{\beta}\left(\sum_{i=1}^{m-1}s_i\right)^{\alpha} \leq cs^{4\alpha}|T|^{\beta}
$$
where $c$ is a large enough constant. Therefore, at this time, $\mathcal U_{\text{poly}}$ is a $(4\alpha, \beta)$-Occam unifier.
\end{proof}

\begin{lemma} [Lemma \ref{lemma:cl-size}] Given condition space $\mathbb P_c$ and PBE task $T$, let $c^*$ be the smallest valid clause and $c$ be the clause found by $\mathcal C_{\text{CL}}$. Then $\text{size}(p_c(c)) < 2\text{size}(p_c(c^*)) (\ln |T| + 1)$.
\end{lemma}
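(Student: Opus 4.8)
The plan is to reduce the size analysis of the clause returned by $\mathcal C_{\text{CL}}$ to the standard approximation guarantee for greedy weighted set covering, and then translate that weight bound back into a bound on the program size of the synthesized clause. The key observation is that $\mathcal C_{\text{CL}}$ first collects the set $c_u$ of all literals that are \texttt{true} on every positive example, and then runs the greedy heuristic to pick a subclause $c \subseteq c_u$ whose literals jointly make every negative example \texttt{false}, where each literal $l$ is assigned weight $\text{size}(l)$ and covers the negatives in $N(\mathbb I(T), l)$.

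First I would show that the smallest valid clause $c^*$ induces a feasible solution of this very set-cover instance. Since $c^*$ is \texttt{true} on all positive examples, each literal of $c^*$ must itself be \texttt{true} on all positives (a conjunction is \texttt{true} only if every conjunct is), so every literal of $c^*$ lies in $c_u$. Since $c^*$ is \texttt{false} on every negative example, each negative example has some literal of $c^*$ that is \texttt{false} on it, i.e.\ the sets $\{N(\mathbb I(T), l) : l \in c^*\}$ cover $\mathbb I_N(T)$. Hence the optimal weight $\text{OPT}$ of the set-cover instance satisfies $\text{OPT} \le \sum_{l \in c^*}\text{size}(l)$.

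Next I would invoke Chv\'atal's analysis of the greedy heuristic (the same coverage-per-unit-size ratio rule used on Line 5): the clause $c$ returned has total literal weight $\sum_{l \in c}\text{size}(l) \le H(n)\,\text{OPT}$, where $n = |\mathbb I_N(T)|$ and $H(n) = \sum_{i=1}^n 1/i$ is the $n$-th harmonic number, and $H(n) \le \ln n + 1 \le \ln |T| + 1$. Combining with the previous step gives $\sum_{l \in c}\text{size}(l) \le (\ln|T|+1)\sum_{l \in c^*}\text{size}(l)$.

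Finally I would relate clause program sizes to total literal weights. On the one hand $\text{size}(p_c(c^*)) \ge \sum_{l \in c^*}\text{size}(l)$, since $p_c(c^*)$ contains each of its literals as a subterm. On the other hand, a clause chaining $k$ literals uses $k-1$ binary \texttt{and}-operators, each contributing $\lceil \log_2 N\rceil$ to the size; since the literal weight already satisfies $\sum_{l\in c}\text{size}(l) \ge k\lceil\log_2 N\rceil$, the operator overhead is strictly smaller than the literal weight, so $\text{size}(p_c(c)) < 2\sum_{l \in c}\text{size}(l)$. Chaining the three inequalities yields $\text{size}(p_c(c)) < 2\,\text{size}(p_c(c^*))(\ln|T|+1)$. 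The main obstacle I anticipate is the careful bookkeeping in this last translation step --- ensuring that the \texttt{and}-operator overhead is genuinely dominated by the literal weights so that the constant stays exactly $2$ --- together with correctly matching the paper's greedy ratio rule to the classical \emph{weighted} set-cover bound rather than the unweighted one.
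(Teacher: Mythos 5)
Your proposal is correct and follows essentially the same route as the paper's proof: reduce to weighted set cover with literal sizes as weights, observe that $c^*$ induces a feasible cover so Chv\'atal's greedy bound gives $\sum_{l \in c}\text{size}(l) \le (\ln|T|+1)\sum_{l \in c^*}\text{size}(l)$, then absorb the $(|c|-1)\lceil\log_2 N\rceil$ \texttt{and}-operator overhead into the factor $2$ because each literal contributes at least $\lceil\log_2 N\rceil$ to the size. You are in fact slightly more explicit than the paper on the feasibility of $c^*$ and on the operator bookkeeping, but there is no substantive difference.
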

\begin{proof} Let $L$ be the set of available literals, and $L^* = \{l_1, \dots, l_n\}$ be the set of literals satisfying all positive examples in $T$. Then, $c$ and $c^*$ must be subsets of $L^*$.
	
	By Algorithm \ref{alg:uni-over}, while applying the greedily algorithm for set covering, we set the cost of a literal as its size. Therefore, by the approximation ratio of this algorithm, we have:
	$$
	\sum_{l \in c} \text{size}(l) < \left(\ln \left|\mathbb I_N(T)\right| + 1\right) \sum_{l \in c^*} \text{size}(l) \leq \left(\ln \left|T\right| + 1\right) \sum_{l \in c^*} \text{size}(l)
	$$
	
By this inequality, we have:
\begin{align*}
\text{size}(p_c(c)) &= \left(|c| - 1\right) \lceil \log_2 N \rceil + \sum_{l \in c} \text{size}(l)< 2 \sum_{l \in c} \text{size}(l) \\&< 2 \left(\ln \left|T\right| + 1\right) \sum_{l \in c^*} \text{size}(l) < 2\text{size}(p_c(c^*)) (\ln |T| + 1)
\end{align*}
\end{proof}

\begin{corollary}[Corollary \ref{corollary:cl-size}]
	For any $0 < \beta < 1$, $\mathcal C_{\text{CL}}$ is an $(1, \beta)$-Occam solver on all possible clause domains.
\end{corollary}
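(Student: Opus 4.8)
The plan is to derive the corollary directly from Lemma~\ref{lemma:cl-size}, exploiting the fact that $\mathcal C_{\text{CL}}$ is a deterministic algorithm so that the probabilistic clause in Definition~\ref{definition:occam} trivializes.

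First I would fix a clause domain, a target program $p^*$ (which is itself a single valid clause), an input set $\{I_1,\dots,I_n\}$, and set $T = T(p^*, I_1, \dots, I_n)$, so $|T| = n$. Let $c$ be the clause $\mathcal C_{\text{CL}}$ returns and $c^*$ the smallest valid clause for $T$. Lemma~\ref{lemma:cl-size} gives the deterministic bound
$$\text{size}(p_c(c)) < 2\,\text{size}(p_c(c^*))\,(\ln n + 1).$$
The first key step is to replace $c^*$ by $p^*$: since $p^*$ is consistent with every example in $T$, it is a valid clause, and because $c^*$ is the smallest valid clause we have $\text{size}(p_c(c^*)) \le \text{size}(p^*)$. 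This yields the clean deterministic bound $\text{size}(\mathcal C_{\text{CL}}(T)) < 2\,\text{size}(p^*)\,(\ln n + 1)$.

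The second key step is to massage this into the shape required by Definition~\ref{definition:occam} with $\alpha = 1$ and the given $\beta$. The only genuine content here is that $\ln n + 1 = o(n^{\beta})$ for every $\beta > 0$, so there is a constant $c_0 = c_0(\beta)$ with $2(\ln n + 1) \le c_0 n^{\beta}$ for all $n \ge 1$; this is precisely why the statement holds for \emph{any} $\beta \in (0,1)$ even though no randomness is used to trade size against examples. For the $\epsilon$-factor I would fix any $\gamma > 0$ and observe that $\ln^{\gamma}(1/\epsilon) \ge (\ln 2)^{\gamma}$ on $\epsilon \in \left(0, \frac{1}{2}\right)$, so choosing $c = c_0 / (\ln 2)^{\gamma}$ gives $\text{size}(\mathcal C_{\text{CL}}(T)) \le c\,\text{size}(p^*)\,n^{\beta}\ln^{\gamma}(1/\epsilon)$ for every admissible $\epsilon$. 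Since $\mathcal C_{\text{CL}}$ is deterministic, this bound holds with certainty, so the probability of violating it is $0 \le \epsilon$, establishing the $(1,\beta)$-Occam property.

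There is no serious obstacle; the proof is a short deduction from Lemma~\ref{lemma:cl-size}. The only points that require care are (i) correctly arguing $\text{size}(p_c(c^*)) \le \text{size}(p^*)$ from the optimality of $c^*$ together with the validity of $p^*$, and (ii) noticing that the logarithmic overhead $\ln n$ incurred by the greedy set-cover approximation is swallowed by $n^{\beta}$ for arbitrarily small $\beta$, which is exactly what lets the exponent on $n$ be pushed below any positive threshold. The determinism of the algorithm is what makes the probabilistic wrapper in the Occam-solver definition vacuous.
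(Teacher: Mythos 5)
Your proof is correct and follows essentially the same route as the paper: apply Lemma~\ref{lemma:cl-size} and absorb the $2(\ln|T|+1)$ factor into $c\,n^{\beta}$ for an arbitrary $\beta\in(0,1)$. You are in fact slightly more careful than the paper's two-line argument, which implicitly identifies the target clause with the smallest valid clause and silently drops the $\ln^{\gamma}(1/\epsilon)$ factor; your explicit steps (i) bounding $\text{size}(p_c(c^*))\le\text{size}(p^*)$ via optimality and (ii) lower-bounding $\ln^{\gamma}(1/\epsilon)$ by $(\ln 2)^{\gamma}$ fill exactly those gaps.
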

\begin{proof} For any conditional space $\mathbb P_c$ and PBE task $T$, let $c^*$ be the target clause and $c$ be the clause synthesized by $\mathcal C_{\text{CL}}$. By Lemma \ref{lemma:cl-size}, we have:
	$$
	\text{size}(p_c(c)) < 2(\ln|T| + 1)\text{size}(p_c(c^*)) < c\text{size}(p_c(c^*))|T|^{\beta}
	$$
where $c$ is a large enough constant and $\beta$ is a constant in $(0, 1)$. Therefore, for any $0 < \beta < 1$, $\mathcal C_{\text{CL}}$ is an $(1, \beta)$-Occam solver.
\end{proof}

\begin{lemma}[Lemma \ref{lemma:or-sample}] Let $T$ be a PBE task and $d$ be a DNF formula satisfying all examples in $T$:
	\begin{itemize}
		\item All clauses in $d$ must be \texttt{false} on all negative examples in $T$, i.e., $\forall c \in d, \mathbb I_N(T) \subseteq N(\mathbb I(T), c)$.
		\item There exists a clause in $d$ that is \texttt{true} on at least $|d|^{-1}$ portion of positive examples in $T$, i.e., $\exists c \in d, |P(\mathbb I(T), c)| \geq |d|^{-1}|\mathbb I_P(T)|$.
	\end{itemize}
\end{lemma}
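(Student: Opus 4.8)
The plan is to exploit the semantics of the disjunction operator \texttt{or}, mirroring the argument for Lemma \ref{lemma:sample}, which the paper already observes is structurally identical. I would write $d = c_1 \texttt{ or } \dots \texttt{ or } c_k$ with $k = |d|$, so that for any input $I$ the value $\sem{p_d(d)}(I) = \texttt{true}$ holds precisely when at least one clause $c_i$ evaluates to \texttt{true} on $I$.

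For the first bullet, I would fix an arbitrary negative example $I \in \mathbb I_N(T)$. Since $d$ satisfies all examples, $\sem{p_d(d)}(I) = \texttt{false}$. By the semantics of \texttt{or}, a disjunction is \texttt{false} only when every disjunct is \texttt{false}, so each clause $c \in d$ must satisfy $\sem{p_c(c)}(I) = \texttt{false}$, i.e. $I \in N(\mathbb I(T), c)$. Letting $I$ range over all of $\mathbb I_N(T)$ yields $\mathbb I_N(T) \subseteq N(\mathbb I(T), c)$ for every clause $c \in d$.

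For the second bullet, I would turn to the positive examples. Each $I \in \mathbb I_P(T)$ has $\sem{p_d(d)}(I) = \texttt{true}$, so by the semantics of \texttt{or} at least one clause is \texttt{true} on $I$; hence the sets $P(\mathbb I(T), c_1), \dots, P(\mathbb I(T), c_k)$ jointly cover $\mathbb I_P(T)$. Summing cardinalities counts each positive example at least once, so $\sum_{i=1}^{k} |P(\mathbb I(T), c_i)| \geq |\mathbb I_P(T)|$, and by an averaging (pigeonhole) argument there is an index $i$ with $|P(\mathbb I(T), c_i)| \geq \frac{1}{k}|\mathbb I_P(T)| = |d|^{-1}|\mathbb I_P(T)|$, which is the claimed clause.

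No real obstacle arises here; the only point needing a moment of care is that $P(\mathbb I(T), c_i)$ ranges over all example inputs rather than just positive ones, but the first bullet already forces each clause to be \texttt{false} on every negative example, so $P(\mathbb I(T), c_i) \subseteq \mathbb I_P(T)$ and the covering count is unaffected. This lemma then plays the same role for the DNF solver $\mathcal C$ that Lemma \ref{lemma:sample} plays for $\mathcal T_{\text{poly}}$, licensing the reuse of the disjunctive-synthesis scheme with $\mathcal C_{\text{CL}}$ as the domain solver.
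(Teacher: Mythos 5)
Your proof is correct and follows essentially the same route as the paper's: the first bullet directly from the semantics of \texttt{or}, and the second by a counting/averaging argument over how the clauses cover the positive examples. Your extra remark that $P(\mathbb I(T), c_i) \subseteq \mathbb I_P(T)$ (via the first bullet) makes explicit a point the paper's proof leaves implicit when it identifies $|P(\mathbb I(T), c_i)|$ with the number of positive examples covered, so nothing is missing.
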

\begin{proof} By the semantics of operator \texttt{or}, the first condition is obtained directly. Let $c_1, \dots, c_m$ be the clauses in $d$, and $w_1, \dots, w_m$ be the number of positive examples covered by each clause, i.e., $w_i \coloneqq |P(\mathbb I(T), c_i)|$. By the semantics of opeartor \texttt{or}, we know each positive example must be covered by at least one clause. Therefore:
	$$
	w_1 + \dots + w_m \geq |\mathbb I_P(T)| \implies \max w_i \geq m^{-1}|\mathbb I_P(T)|
	$$
In this way, the second condition is obtained.
\end{proof}

\begin{lemma} [Lemma \ref{lemma:condition-occam}] For any $0 < \beta < 1$, $\mathcal C$ is a $(2, \beta)$-Occam solver on $\text{DNF}(\mathcal F_C^A)$. 
\end{lemma}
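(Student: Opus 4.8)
The plan is to establish a \emph{deterministic} size bound on the DNF formula returned by $\mathcal C$: for any target formula $d^* \in \text{DNF}(\mathbb P_c)$ consistent with the examples, writing $s^* = \text{size}(d^*)$ and $m^* = |d^*|$ for the number of clauses, I would show the output $d$ satisfies $\text{size}(p_d(d)) \le c\,(s^*)^2 \ln^3|T|$ for a fixed constant $c$. Since $\ln^3|T| \le c_\beta |T|^\beta$ for every $\beta>0$, this immediately yields the $(2,\beta)$-Occam bound, with the error rate $\epsilon$ playing no role because $\mathcal C$ (via the exhaustive representative-clause enumeration rather than sampling) is deterministic. The argument parallels the disjunctive-synthesis analysis of Theorem \ref{theorem:term-occam}, but exploits that the clause solver is deterministic with the clean guarantee of Lemma \ref{lemma:cl-size}; this is exactly what keeps the first exponent at $2$ instead of strictly larger.

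First I would prove the key structural step: whenever \texttt{Search} runs with parameters $s' \ge s^*$, $k \ge m^*$, and remaining positive examples $T'$, the set returned by \texttt{GetPossibleClause} contains a clause $c$ that (i) is \texttt{false} on every negative example, (ii) is \texttt{true} on at least $(m^*)^{-1}$ of the remaining positives, and (iii) has $\text{size}(p_c(c)) \le 2(\ln|T|+1)\,s^*$. For (ii), Lemma \ref{lemma:or-sample} supplies a target clause $c_j^*$ covering $\ge (m^*)^{-1}$ of the remaining positives; since $s' \ge s^*$ every literal of $c_j^*$ is available, the maximal representative $c_u$ of the class $[c_j^*]_{\mathbb I_P(T')}$ lies in $R(\mathbb I_P(T'),k,L)$ and is therefore examined. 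Because $c_u \supseteq c_j^*$ as literal sets, $c_u$ is \texttt{false} on all negatives (so it survives the final filter of Algorithm \ref{alg:get-clause}), while its simplification $\text{\texttt{SimplifyClause}}(c_u)\subseteq c_u$ is \texttt{true} on at least as many positives as $c_u \sim_{\mathbb I_P(T')} c_j^*$, giving (ii). For (iii), $c_j^*\subseteq c_u$ is itself a sub-clause of $c_u$ that covers all negatives, so by the greedy approximation ratio behind Lemma \ref{lemma:cl-size} the simplified clause has size at most $2(\ln|T|+1)$ times $\text{size}(p_c(c_j^*)) \le s^*$.

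Next I would convert (i)--(iii) into a bound on when the outer loop succeeds. Following the recursion-chain counting of Theorem \ref{theorem:term-occam}, repeatedly selecting the clause of the structural step shrinks the remaining positives by a factor $(1 - 1/m^*)$, so after at most $m^*\ln|T|+1$ steps none remain; hence a valid search chain of depth $\le m^*\ln|T|+1$ exists once $k \ge m^*\ln|T|+1$ and $s'\ge s^*$. The clause-size filter $c_0 s\ln|T|$ of \texttt{Search}, the budget $k_l = c_0 s$, and the availability constraint $s' \le s$ are all met as soon as $s = \Theta(s^*\ln|T|)$; since the loop only increments $s$, the first successful value obeys $s_{\text{succ}} = O(s^*\ln|T|)$, and because \texttt{Search} backtracks it returns \emph{some} chain of depth $\le k \le k_l = c_0 s_{\text{succ}}$. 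Every clause it returns passes the size filter, so each has size $\le c_0 s_{\text{succ}}\ln|T|$; multiplying clause count by clause size gives $\text{tsize}(d) \le (c_0 s_{\text{succ}})(c_0 s_{\text{succ}}\ln|T|) = O((s^*)^2\ln^3|T|)$, and adding the $O(|d|\log N)$ cost of the \texttt{or}-operators leaves $\text{size}(p_d(d)) = O((s^*)^2\ln^3|T|)$, as required.

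I expect the main obstacle to be item (iii) of the structural step: one must argue that applying the greedy set-cover simplification to the \emph{maximal} representative $c_u$, rather than to $c_j^*$ directly, still produces a clause whose size is controlled by the optimal sub-clause of $c_u$, which is in turn bounded by $\text{size}(p_c(c_j^*))$. This is precisely where the deterministic approximation guarantee of Lemma \ref{lemma:cl-size} does the heavy lifting and allows the final exponent to remain $2$. The surrounding pieces---fixing the constant multiple of $s^*$ at which the filter $c_0 s\ln|T|$ admits the good clause, and the geometric-decrease depth count---are routine bookkeeping of constants and logarithmic factors.
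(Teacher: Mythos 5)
Your proof is correct and follows essentially the same route as the paper's: the same structural claim that \texttt{GetPossibleClause} must output the simplified representative of a target clause (which survives the negative-example filter because $c_u \supseteq c_j^*$, and whose size is controlled exactly via the greedy guarantee of Lemma \ref{lemma:cl-size}), followed by the recursion-chain count and the $k_l = c_0 s$ / size-filter bookkeeping, with determinism making the probabilistic part of Definition \ref{definition:occam} trivial. The only divergence is in the chain counting: the paper exploits the containment $P(\mathbb I_P(T), c^*) \subseteq P(\mathbb I_P(T), c)$ so that each step eliminates an entire target clause (depth $\leq |d^*|$, yielding $O((s^*)^2\ln|T|)$), whereas your geometric-decrease argument gives depth $O(m^*\ln|T|)$ and the slightly looser $O((s^*)^2\ln^3|T|)$ --- harmless for the lemma, since every polylogarithmic factor is absorbed into $n^{\beta}$.
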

\SetKwFunction{G}{Get}
\begin{proof} Let $s^*$ be the value of variable $s$ (Line 10 in Algorithm \ref{alg:uni-dnf-main}) when $\mathcal U_{\text{poly}}$ terminates. Let $d^*$ be the target DNF formula. When $s \geq 2\text{size}(p_d(d^*))$, $s' = s$ and the initial value of $k$ is also $s$, we make the following claim:
\begin{itemize}
	\item \textbf{Claim:} In invocation \Search{$\variable{literals}, T, k, s$}, let $d'$ be the set of clauses $c^*$ in $d^*$ satisfying $P(\mathbb I_P(T), c^*) \neq \emptyset$. If $k \geq |d'|$, there must be a clause $c \in \G{$\variable{literals}, T, k$}$ such that $\exists c^* \in d', P(\mathbb I_P(T), c^*) \subseteq P(\mathbb I_P(T), c) \wedge \text{size}(p_c(c)) \leq 2s \ln |T|$. 
	\item \textbf{Proof:} Let $c^*$ be the clause in $d'$ that covers the most positive examples. Clearly, $|P(\mathbb I_P(T), c^*)| \geq |d'|^{-1}|\mathbb I_p(T)|$. As $k \geq d'$, the largest clause in $[c^*]_{\mathbb I_P(T)}$ must be in $R(\mathbb I_P(T), k, \variable{literals})$, and thus this clause is found by $\G{$\variable{literals}, T, k$}$.  

	Let $c$ be the clause simplified from the largest clause in $[c^*]_{\mathbb I_P(T)}$. By Lemma \ref{lemma:cl-size}, $\text{size}(p_c(c)) < 2\text{size}(p_c(c^*))(\ln |T| + 1) \leq 2s \ln |T|$. 
\end{itemize}

By this claim, we obtain that in an iteration where $s > 2\text{size}(p_d(d^*))$, $\Search{}$ can always find a set of clauses. So, there are at most $O(\text{size}(p_d(d^*)))$ clauses in the DNF formula synthesized by $\mathcal C$, and the size of each clause is at most $O(\text{size}(p_d(d^*))\ln |T|)$. Therefore, the size of the synthesized DNF formula is $O((\text{size}(p_d(d^*)))^2\ln |T|)$, which directly implies that $\mathcal C$ is a $(2, \beta)$-Occam solver for any $0 < \beta < 1$.

\end{proof}

\begin{theorem} [Theorem \ref{theorem:unifier-occam}] For any $0 < \beta < 1$, $\mathcal U_{\text{poly}}$ is a $(8, \beta)$-Occam unifier on $\mathcal F_C^A$.
\end{theorem}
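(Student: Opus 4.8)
The plan is to obtain this theorem as an immediate corollary of two facts already established for the DNF solver $\mathcal C$: Lemma~\ref{lemma:condition-occam}, which states that $\mathcal C$ is a $(2, \beta)$-Occam solver on $\text{DNF}(\mathcal F_C^A)$ for every $0 < \beta < 1$, and the deterministic branch of Lemma~\ref{lemma:unifier-gen}. The entire argument is a substitution of constants; the only point requiring care is which branch of Lemma~\ref{lemma:unifier-gen} may be invoked, since this is what pins the leading exponent to exactly $8$ rather than to every value strictly above $8$.

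First I would record the key structural observation: $\mathcal C$, as implemented in Algorithm~\ref{alg:uni-dnf-main}, is a \emph{deterministic} algorithm. In contrast to the term solver $\mathcal T_{\text{poly}}$, which draws random subsets of examples, $\mathcal C$ replaces that sampling step with the exhaustive enumeration of representative clauses carried out by \code{GetPossibleClause} (Algorithm~\ref{alg:get-clause}), and its only other ingredient is the greedy weighted-set-cover procedure inside $\mathcal C_{\text{CL}}$; neither subroutine uses randomness. Hence $\mathcal C$ is a \emph{deterministic} $(2, \beta)$-Occam solver on $\text{DNF}(\mathcal F_C^A)$.

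With determinism in hand, I would apply the second implication of Lemma~\ref{lemma:unifier-gen}, namely that a deterministic $(\alpha, \beta)$-Occam solver $\mathcal C$ on $\text{DNF}(\mathcal F_C)$ induces a $(4\alpha, \beta)$-Occam unifier $\mathcal U_{\text{poly}}$ on $\mathcal F_C$. Setting $\alpha = 2$ and taking $\mathcal F_C = \mathcal F_C^A$ yields that $\mathcal U_{\text{poly}}$ is a $(4 \cdot 2, \beta) = (8, \beta)$-Occam unifier for every $0 < \beta < 1$, which is precisely the statement to be proved.

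The main obstacle---really the only nontrivial step---is the determinism argument just described. Were one to overlook that $\mathcal C$ is deterministic and appeal instead to the first, randomized implication of Lemma~\ref{lemma:unifier-gen}, the conclusion would weaken to ``$\mathcal U_{\text{poly}}$ is an $(\alpha', \beta)$-Occam unifier for every $\alpha' > 8$'', which falls short of the claimed tight bound $\alpha' = 8$. Thus the content of the proof is entirely in verifying that no randomness enters $\mathcal C$, so that the sharper $4\alpha$ bound is legitimately available; everything else follows by direct composition of the two cited lemmas.
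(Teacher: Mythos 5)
Your proof is correct and follows essentially the same route as the paper, which simply cites its composition lemmas (the paper writes ``directly by Lemma~\ref{lemma:sample} and Lemma~\ref{lemma:condition-occam}'', where the first citation is evidently a typo for Lemma~\ref{lemma:unifier-gen}). You are in fact more careful than the paper's one-line proof: observing that $\mathcal C$ is deterministic (exhaustive enumeration in \code{GetPossibleClause} plus greedy set cover, no sampling) is exactly what licenses the $(4\alpha,\beta)$ branch of Lemma~\ref{lemma:unifier-gen} and hence the exact constant $8$ rather than all $\alpha' > 8$, a point consistent with the deterministic size bound $O\big((\text{size}(p_d(d^*)))^2 \ln |T|\big)$ established in the paper's proof of Lemma~\ref{lemma:condition-occam}.
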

\begin{proof} Directly by Lemma \ref{lemma:sample} and Lemma \ref{lemma:condition-occam}.
\end{proof}

\begin{theorem} [Theorem \ref{theorem:final}] $\mathcal S_t$ is an $(\alpha, \beta)$-Occam solver on $T(\mathcal F_C)$ with $\beta < \frac{1}{8} \implies$\mainname is an $(8(\alpha' + 1), 8\beta')$-Occam solver on $\mathcal F_C$ for any $\alpha' > \alpha, \beta < \beta' < \frac{1}{8}$.
\end{theorem}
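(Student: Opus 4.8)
The plan is to read Theorem~\ref{theorem:final} purely as a \emph{composition} of three results already established for the individual components of \mainname, without revisiting any of their internal randomized arguments. Recall that \mainname is exactly the STUN solver assembled from the term solver $\mathcal T_{\text{poly}}$ and the unifier $\mathcal U_{\text{poly}}$, so by Theorem~\ref{theorem:gen-stun} it suffices to exhibit concrete Occam parameters for these two sub-solvers and then feed them into the STUN composition bound, checking its side condition along the way. The whole proof is therefore a bookkeeping exercise on the exponent pairs $(\alpha,\beta)$, and the only nontrivial point is verifying when that bookkeeping is legal.

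First I would instantiate the term solver. Since the hypothesis gives that $\mathcal S_t$ is an $(\alpha,\beta)$-Occam solver on $T(\mathcal F_C)$, Theorem~\ref{theorem:term-occam} immediately yields that $\mathcal T_{\text{poly}}$ is an $(\alpha_1,\beta_1)$-Occam term solver on $\mathcal F_C$ for a whole range of exponents, namely any $\alpha_1>\alpha+1$ and any $\beta_1$ with $\beta<\beta_1<1$; I would keep $\alpha_1,\beta_1$ as tunable parameters chosen just below the targets dictated by the statement. Next I would instantiate the unifier: because $\mathcal F_C\subseteq\mathcal F_C^A$, Theorem~\ref{theorem:unifier-occam} applies verbatim and gives that $\mathcal U_{\text{poly}}$ is an $(\alpha_2,\beta_2)$-Occam unifier on $\mathcal F_C$ with $\alpha_2=8$ and \emph{any} $0<\beta_2<1$. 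Confirming this domain inclusion $\mathcal F_C\subseteq\mathcal F_C^A$ is one of the two small things to check, and the freedom to take $\beta_2$ as small as we please is the lever that will make the side condition satisfiable.

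Finally I would apply Theorem~\ref{theorem:gen-stun} with these $(\alpha_1,\beta_1)$ and $(\alpha_2,\beta_2)=(8,\beta_2)$. The STUN bound then reports a polynomial exponent $(\alpha_1+1)\alpha_2$ on $\mathrm{size}(p^*)$ and a sub-linear exponent $\beta_1\alpha_2+\beta_2$ on $n$; substituting $\alpha_2=8$ and choosing $\alpha_1$ and $\beta_1$ to track the target $\alpha'$ and to sit strictly below $\beta'$ (so that $8\beta_1+\beta_2\le 8\beta'$, using that a tighter sub-linear exponent implies the looser one), these collapse to the claimed factors $8(\alpha'+1)$ and $8\beta'$. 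The one genuinely load-bearing step — and where I expect the real obstacle to sit — is verifying the hypothesis $\beta_1\alpha_2+\beta_2<1$ of Theorem~\ref{theorem:gen-stun}. With $\alpha_2=8$ this reads $8\beta_1+\beta_2<1$, which is exactly why the statement must assume $\beta<\tfrac18$ and quantify only over $\beta'<\tfrac18$: once $\beta'<\tfrac18$ we can keep $\beta_1<\beta'<\tfrac18$, so that $8\beta_1<1$ leaves positive slack $1-8\beta_1$ into which a sufficiently small $\beta_2$ fits, and the composition becomes legal. Everything else is routine exponent arithmetic; the care is entirely in (i) lining up the domain families and (ii) keeping the product $\beta_1\alpha_2$ strictly below $1$ through the choice of $\beta_1$, which is the sole place the $\tfrac18$ threshold enters.
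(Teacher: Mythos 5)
Your proposal takes essentially the same route as the paper, whose entire proof is ``Directly by Theorem~\ref{theorem:gen-stun}, Theorem~\ref{theorem:term-occam} and Theorem~\ref{theorem:unifier-occam}''; you spell out exactly this composition, and your verification of the side condition $\beta_1\alpha_2+\beta_2<1$ (the origin of the $\tfrac{1}{8}$ threshold) together with the monotonicity argument for absorbing $8\beta_1+\beta_2$ into $8\beta'$ is precisely the bookkeeping the paper leaves implicit. One caveat you share with the paper: plugging the $(\alpha'+1,\beta')$ guarantee of Theorem~\ref{theorem:term-occam} into Theorem~\ref{theorem:gen-stun} literally yields first exponent $(\alpha'+2)\cdot 8$ rather than $8(\alpha'+1)$, so the stated constant is recovered only up to this off-by-one in the polynomial exponent (harmless to the qualitative claim that \mainname is an Occam solver, but worth noting that your ``collapse to the claimed factors'' silently reproduces the paper's slip).
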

\begin{proof} Directly by Theorem \ref{theorem:gen-stun}, Theorem \ref{theorem:term-occam} and Theorem \ref{theorem:unifier-occam}.
\end{proof}

\end{document}